\newtheorem{theorem}{Theorem}[section]
\newtheorem{lemma}[theorem]{Lemma}
\newtheorem{corollary}[theorem]{Corollary}
\newtheorem{remark}[theorem]{Remark}
\newtheorem{example}[theorem]{Example}
\def\endproof{\qed\endtrivlist}
\let\csname endproof*\endcsname=\endproof
\def\qedsymbol{\ifmmode\bgroup\else$\bgroup\aftergroup$\fi
  \vcenter{\hrule\hbox{\vrule height.6em\kern.6em\vrule}\hrule}\egroup}
\def\qed{\ifmmode\else\unskip\nobreak\fi\quad\qedsymbol}
\renewcommand{\iff}{\Leftrightarrow}
\renewcommand{\implies}{\Rightarrow}
\renewcommand{\le}{\leqslant}
\newcommand{\dom}{\qopname\relax{no}{Dom}}
\newcommand{\im}{\qopname\relax{no}{Im}}
\begin{document}

\journal{\qquad}

\title{\Large\bf Nondeterministic automata: equivalence, bisimulations,\\ and uniform relations\tnoteref{t1}}
\tnotetext[t1]{Research supported by Ministry  of Science and Technological Development, Republic of Serbia, Grant No. 174013}
\author{Miroslav \'Ciri\'c\corref{cor}}
\ead{mciric@pmf.ni.ac.rs}

\author{Jelena Ignjatovi\'c}
\ead{jekaignjatovic73@gmail.com}

\author{Milan Ba\v si\'c}
\ead{basic\symbol{95}milan@yahoo.com}

\author{Ivana Jan\v ci\'c}
\ead{ivanajancic84@gmail.com}

\cortext[cor]{Corresponding author. Tel.: +38118224492; fax: +38118533014.}
\address{University of Ni\v s, Faculty of Sciences and Mathematics, Vi\v segradska 33, 18000 Ni\v s, Serbia}

\begin{abstract}\small
In this paper we study the equivalence of nondeterministic automata pairing the concept of a bisimulation with~the~recent\-ly introduced concept of a uniform relation.~In this symbiosis,
uniform relations serve as equivalence relations which relate states of two possibly different nondeterministic automata, and bisimulations ensure compatibility with the~tran\-sitions, initial and terminal states of these automata.~We define six types of bisimulations, but due to the duality we discuss three of them: forward, backward-forward, and weak forward bisimulations.~For each od these three types of bisimulations we provide a procedure which decides whether there is a bisimulation of this type between two automata, and when it exists,
the same procedure computes the greatest one.~We also show that there is a uniform~forward~bisimulation between two automata if and only if the factor automata with respect to the greatest forward bisimulation equivalences on these automata are isomorphic.~We prove a similar theorem for weak forward bisimulations, using the concept of a weak forward isomorphism instead of an isomorphism.~We also give examples that explain the~relationships between the considered types of bisimulations.

\end{abstract}

\begin{keyword}\small
Nondeterministic automaton; Equivalence of automata; State reduction; Factor automaton; Uniform relation; Forward bisimulation; Backward-forward bisimulation; Weak forward bisimulation;
\end{keyword}

\maketitle

\section{Introduction}

One of the most important problems of automata theory is to determine whether two given automata~are equivalent, what usually means to determine whether their behaviour is identical.~In the context of deterministic or nondeterministic automata the behaviour of an~automaton is understood to be the language that is recognized by it, and two automata are considered~{\it equivalent\/}, or more precisely {\it language-equivalent\/},~if they recognize the same language.~For deterministic finite automata the equivalence problem is~solvable~in polynomial time, but for nondeterministic finite automata it is computationally hard (PSPACE-complete \cite{GJ.79,Skiena.08,Yu.97}).~Another important issue is to express the language-equi\-valence of two automata as a relation between their states, if such relationship exists, or find some kind of relations between states which would imply the language-equi\-valence.~The language-equi\-valence of two deterministic automata can be expres\-sed in terms of relationships between their states, but in the case of nondeterministic automata the problem is more complicated.

A widely-used notion of ``equivalence'' between states of automata is that of {\it bisimulation\/}.~Bisimulations have been introduced by Milner \cite{Milner.80} and Park \cite{Park.81} in computer science, where they~have been used to model equivalence between various~systems, as well as to reduce the number of states of these systems.~Roughly at the same time they have been also discovered in some areas of mathematics, e.g., in modal logic and set theory.~They are employed today in a many areas of computer science,~such~as functional languages, object-oriented languages, types, data types, domains, databases, compiler optimi\-zations, program analysis, verification tools, etc.~For more information about bisimulations we refer to \cite{AILS.07,CL.08,DPP.04,GPP.03,LV.95,Milner.89,Milner.99,RM-C.00,Sangiorgi.09}.

The most common structures on which bisimulations have been studied are labelled transition systems, i.e., labelled directed graphs, which are essentially nondeterministic automata without fixed initial and~terminal states.~A definition of bisimulations for nondeterministic automata that takes into account initial~and terminal states was given by Kozen in \cite{Kozen.97}.~In numerous papers dealing with bisimulations mostly one type of bisimulations has been studied, called just bisimulations, like in the Kozen's book \cite{Kozen.97}, or strong bisimulations, like in \cite{Milner.89,Milner.99,RM-C.00}.~In this paper we differentiate two types of simulations, forward and backward simulations.~Considering that there are four cases when a relation $R$ and its inverse $R^{-1}$ are forward or backward
simulations, we distinguish four types of bisimulations.~We define two homotypic~bisimulations, forward and backward bisimulations, where both $R$ and $R^{-1}$ are forward or backward simulations, and two heterotypic bisimulations, backward-forward and forward-backward bisimulations,  where $R$ is a backward and $R^{-1}$ a forward simulation or vice versa.~Distinction between forward and backward simulations, and forward and backward bisimulations, has been also made, for instance, in \cite{Buch.08,HKPV.98,LV.95} (for various kinds of automata), but~less or more these~concepts differ from the concepts having the same name which are considered here.~More~similar to our concepts of forward and backward simulations and bisimulations are those studied in \cite{Brihaye.07}, and in \cite{HMM.07,HMM.09} (for tree automata).

It is worth noting that forward and backward bisimulations, and backward-forward and forward-back\-ward bisimulations,~are dual concepts,
i.e.,
backward and forward-back\-ward bisimulations
on a nondeterministic automaton are forward and backward-forward bisimulations on its reverse automaton.~This~means that for any universally valid statement on forward or backward-forward bisimulations there is the corresponding universally valid statement on backward~and forward-backward bisimulations.~For that reason, our article deals only with forward and backward-forward bisimulations.~In  general, none of forward~and backward bisimulations or backward-forward and forward-back\-ward bisimulations can be considered in practical applications better than the other.~For example, under the names right and left invariant equivalences, forward and backward bisimulation equivalences
have been used by Ilie, Yu and others \cite{IY.02,IY.03,INY.04,ISY.05}  in reduction of the number of states of nondeterministic automata.~It was shown that there are cases where one of them better reduces the number of states,
but there are also other cases where the another one gives a better reduction.~There are also cases where each of them individually causes a polynomial reduction of the number of states, but alternately using both types of equivalences the number of states can be reduced exponentially (cf.~\cite[Section 11]{IY.03}).~It is also worth of mention that backward bisimulation equivalences were successfully applied in \cite{SCI.11} in the conflict analysis of discrete event systems, while it was shown that forward bisimulation equivalences can not be used for this purpose.

As we already said, the main role of bisimulations is to model equivalence between the states of the~same or different automata.~However, bisimulations provide compatibility with the transitions, initial and terminal states of automata, but in general they do not behave like equivalences.~A kind of relations~which can be conceived as equivalences which relate elements of two possibly different sets appeared~recently~in~\cite{CIB.09}~in the fuzzy framework.~Here we consider the crisp version of these relations, the so-called {\it uniform~relations\/}.
The main aim of the paper is to show that the conjunction of two concepts, uniform relations and~bisimulations, provides~a very powerful tool in the study of equivalence between nondeterministic automata, where uniform relations serve as equivalence relations
which relate states of two nondeterministic automata, and bisimulations ensure~com\-patibility with the transitions,
initial and terminal states of these automata.~Our second goal is to employ the calculus of relations as a tool that will show oneself as very effective in the study of bisimulations.~And third, we introduce and study a more general type of bisimulations, the so-called {\it weak bisimulations\/}.~We show that equivalence of automata determined by weak bisimulations is closer to the language equivalence than equivalence determined by bisimulations, and we also show that they produce smaller automata than bisimulations when they are used in the the reduction of the number of states.

Our main results are the following.~The main concepts and results from \cite{CIB.09} concerning uniform fuzzy relations are translated to the case of ordinary relations, and besides, the proofs and some statements are simplified (cf.~Theorems \ref{th:pur}, \ref{th:ur} and \ref{th:ur2}).~We also define the concept of the factor automaton with~respect to an arbitrary equivalence, and prove two theorems that can be conceived as a version, for nondetermi\-nistic automata, of two well-known theorems of universal algebra: Second Isomorphism Theorem and~Corre\-spondence Theorem (cf.~Theorems \ref{th:F:E} and \ref{th:F:E-isom}).~Then we study the general properties of forward and~back\-ward-forward bisimulations.~In cases where there is at least one forward or backward-forward bisimulation, we prove the existence of the greatest one, and we also show that the greatest forward bisimulation is a partial uniform relation (cf.~Theorems \ref{th:gfb} and \ref{th:gbfb}).~An algorithm that decides whether there is a forward bisimulation~between nondeterministic automata was provided by Kozen in \cite{Kozen.97}.~When there is a forward bisimulation, this algorithm also computes the greatest one.~Here we give another version of this~algorithm, and we also provide an analogous algorithm for backward-forward bisimulations (Theorems \ref{th:gfb.alg} and \ref{th:gbfb.alg}).

Given two automata $\cal A$ and~$\cal B$ and a uniform relation $\varphi \subseteq A\times B $ between their sets of states, we show that $\varphi$ is a forward bisimulation if and only if both its kernel $E_A^\varphi $ and co-kernel $E_B^\varphi $ are forward bisimulation~equi\-valences on $\cal A$ and $\cal B$, and the function $\widetilde \varphi $ induced in a natural way by $\varphi $ is an isomorphism between factor automata ${\cal A}/E_A^\varphi $ and ${\cal B}/E_B^\varphi $ (Theorem \ref{th:ufb}).~Also,~given two forward bisimulation equivalences $E$ on $\cal A$ and $F$ on $\cal B$, we show that there is a uniform forward~bisim\-ulation between $\cal A$ and $\cal B$ whose kernel and co-kernel are $E$ and $F$ if and only if the factor automata ${\cal A}/E$ and ${\cal B}/F$ are isomorphic (Theorem \ref{th:ufb.ex}).~Two automata $\cal A$ and $\cal B$ are defined to be FB-equivalent if there is a complete and surjective forward bisimulation between $\cal A$ and $\cal B$, which is equivalent to the existence of a uniform forward bisimulation between $\cal A$ and $\cal B$.~We prove that $\cal A$ and $\cal B$ are FB-equivalent if and only if the factor automata with respect to the greatest forward bisimulation equivalences on $\cal A$ and $\cal B$ are isomorphic (cf.~Theorem \ref{th:UFBeq}).~As a consequence we obtain that the factor automaton with respect to the greatest forward bisimulation equivalence on an automaton $\cal A$ is the unique (up to an isomorphism) minimal automaton in the class of all automata which are FB-equivalent to $\cal A$.~Let us note that similar results were proved in \cite{Kozen.97}, under the assumption that the automaton $\cal A$ is accessible, and in \cite{CCK.00}.

Theorems similar to Theorems \ref{th:ufb} and \ref{th:ufb.ex} are proved for backward-forward bisimulations (Theorems \ref{th:ubfb} and \ref{th:ubfb.ex}).~The only difference is that the kernel of a backward-forward bisimulation is a forward bisimulation equivalence, and the co-kernel is a backward bisimulation equivalence.~This difference is the reason why we can not use backward-forward bisimulations to define an equivalence relation between automata, but nevertheless, the existence of a backward-forward bisimulation between two automata implies the language equivalence between them.~As a tool for providing structural characterization of equivalence, backward-for\-ward bisimulations were used in \cite{BE.93}, and in \cite{BLS.05,BLS.06,BP.03,Buch.08,EK.01,EM.10,LS.05,Sak.09} within~the context of weighted~automata (under~different names).~We also prove that a function between the sets of states of two automata is a~forward bisimulation if and only if it is a backward-forward bisimulation (Theorem \ref{th:func}).

Then we introduce and study two new types of bisimulations, weak forward and weak backward~bisimulations, which are more general than forward and backward bisimulations and determine two types of structural equivalence which are closer to the language-equivalence than the FB- and BB-equi\-valence.~We give a way to decide whether there is a weak forward bisimu\-lation between two automata, and if it exists, we provide a way to construct the greatest one (Theorem \ref{th:gwfb}).~Given two automata $\cal A$ and~$\cal B$ and a uniform relation $\varphi \subseteq A\times B $ between their sets of states, we show that $\varphi$ is a weak forward bisimulation if and only if both $E_A^\varphi $ and $E_B^\varphi $ are weak forward bisimulation~equi\-valences on $\cal A$ and $\cal B$, and $\widetilde \varphi $ is a weak forward isomorphism between factor automata ${\cal A}/E_A^\varphi $ and ${\cal B}/E_B^\varphi $ (Theorem \ref{th:uwfb}).~We also characterize uniform weak forward bisimulations between automata $\cal A$ and $\cal B$ in terms of isomorphism between the reverse Nerode automata of $\cal A$ and $\cal B$ (Theorem \ref{th:uwfb.N}).~Finally, we study weak forward bisimulation equivalence between automata and we give an example of automata which are weak forward bisimulation equivalent but not forward bisimulation equivalent.~It should be noted that our concepts of a weak forward bisimulation and~a weak backward bisimulation differ from the concept of a weak bisimulation studied in the concurrency theory.

The paper is organized as follows.~In Section 2 we give definitions of basic notions and notation~concerning
relations and relational calculus, in Section 3 we talk about uniform relations, and in Section 4 we define basic notions and notation concerning nondeterministic automata, introduce factor automata and prove some of their fundamental properties.~In Section 5 we define two types of simulations and four types of bisimulations and discuss the
main properties of forward and backward-forward bisimulations, and in Section 6 we give procedures for
deciding whether there are forward and backward-forward~bisimulations between given automata, and whenever they exist, our procedures compute the greatest~ones.~Section 7 provides characterization results for uniform forward bisimulations, and in Section 8 we define FB-equivalence between automata and prove the main characterization result for FB-equivalent automata.~Section 9 discuss basic properties of backward-forward bisimulations and points to similarities and fundamental differences between them and forward bisimulations.~Then in Section 10 we introduce weak forward and weak backward bisimulations and explore some of their general properties.~In Section 11 we deal with uniform weak forward bisimulations, and in Section 12 we study WFB-equivalence of automata.

It is worth noting that a comprehensive overview of various concepts on deterministic, nondeterministic, fuzzy, and weighted automata, which are related to bisimulations, as well as to the algebraic concepts of a homomorphism, congruence, and relational morphism was given in the penultimate section of \cite{CIDB.11}. It was shown that
all these concepts amount either to forward or to backward-forward bisimulations.

\section{Preliminaries}\label{sec:prel}

Let $A$ and $B$ be non-empty sets.~Any subset $R\subseteq A\times B$ is called a {\it relation
from\/} $A$ {\it to\/} $B$, and~equality, inclusion, union and intersection of relations from $A$ to $B$
are defined as for subsets of $A\times B$. The~{\it inverse\/} of a relation $R\subseteq A\times B$ is
a relation $R^{-1} \subseteq B\times A$~defined by $(b,a)\in R^{-1}$ if and only if $(a,b)\in R$, for all
$a\in A$ and $b\in B$.~If $A=B$, that is, if $R\subseteq A\times A$, then $R$ is called a {\it relation on\/}
$A$.~For a relation $\varphi \subseteq A\times B$ we define a subset $\dom\varphi $ of $A$ and $\im\varphi $ of $B$ by $\dom\varphi =\{a\in A\mid (\exists b\in B)\,(a,b)\in \varphi\}$ and $\im\varphi =\{b\in B\mid (\exists a\in A)\,(a,b)\in \varphi\}$.~We call $\dom\varphi $ the {\it domain\/} of $\varphi $ and $\im\varphi $ the {\it image\/} of $\varphi $.

For non-empty sets $A$, $B$ and $C$, and relations $R\subseteq A\times B$ and $S\subseteq B\times C$,
the {\it composition\/} of $R$ and $S$ is a relation $R\circ S\subseteq A\times C$ defined by
\begin{equation}\label{eq:rel.comp}
(a,c)\in (R \circ S )\ \iff\ (\exists b\in B)\, \bigl((a,b)\in R \land (b,c)\in S\bigr),
\end{equation}
for all $a\in A$ and $c\in C$.~For non-empty sets $A$ and $B$, a relation $R\subseteq A\times B$,
and subsets $\alpha\subseteq A$ and $\beta\subseteq B$, we define subsets $\alpha\circ R\subseteq B$ and
$R\circ \beta \subseteq A$ by
\begin{equation}\label{eq:sr.comp}
b\in \alpha \circ R\ \iff\ (\exists a\in A)\,\bigl(a\in \alpha \land (a,b)\in R\bigr), \ \ \ \
a\in R\circ \beta \ \iff\ (\exists b\in B)\,\bigl( (a,b)\in R \land b\in \beta\bigr),
\end{equation}
for all $a\in A$ and $b\in B$. To simplify our notation, for a non-empty set $A$ and subsets
$\alpha,\beta \subseteq A$ we will write
\begin{equation}\label{eq:ss.comp}
\alpha \circ \beta =\begin{cases}
1 & \text{if $\alpha\cap \beta \ne \emptyset$}, \\ 0 & \text{if $\alpha\cap \beta = \emptyset$}, \\
\end{cases}
\end{equation}
i.e., $\alpha \circ\beta $ is the truth value of the statement "$\alpha\cap \beta \ne \emptyset$".

For non-empty sets $A$, $B$, $C$ and $D$, arbitrary relations $R\subseteq A\times B$, $S,S_1,S_2,S_i\subseteq B\times C$, where $i\in I$,~and $T\subseteq C\times D$, and arbitrary arbitrary subsets $\alpha\subseteq A$, $\beta\subseteq B$, and $\gamma\subseteq C$, the following is true:
\begin{alignat}{2}
& (R\circ S)\circ T = R\circ (S\circ T), \label{eq:comp.as} &&\\
& S_1\subseteq S_2\ \ \text{implies}\ \ R\circ S_1 \subseteq  R\circ S_2 \ \ \text{and}\ \ S_1\circ T \subseteq  S_2\circ T, \label{eq:comp.mon} &&\\
& R\circ \bigl(\bigcup_{i\in I}S_i\bigr) = \bigcup_{i\in I}(R\circ S_i) , \hspace{10mm}
\bigl(\bigcup_{i\in I}S_i\bigr)\circ T = \bigcup_{i\in I}(S_i\circ T) \label{eq:comp.un}&& \\
&(\alpha \circ R)\circ S=\alpha\circ (R\circ S), \hspace{10mm}
(\alpha\circ R)\circ \beta=\alpha\circ (R\circ \beta), \hspace{10mm}
(R\circ S)\circ \gamma=R\circ (S\circ \gamma),\label{eq:comp.sr}&& \\
&(R\circ S)^{-1}=S^{-1}\circ R^{-1}, \label{eq:comp.inv} &&\\
& S_1\subseteq S_2\ \ \text{implies}\ \ S_1^{-1}\subseteq S_2^{-1}, \label{eq:inv.mon} &&\\
&\alpha\circ R=R^{-1}\circ \alpha, \hspace{30mm} R\circ \beta=\beta\circ R^{-1}. \label{eq:inv.comp.rs}&&
\end{alignat}
Therefore, parentheses~in (\ref{eq:comp.as}) and (\ref{eq:comp.sr}) can be omitted.

Note that, despite the notation, the inverse relation $R^{-1}$ is not an inverse of the relation $R\subseteq A\times B$ in~the sense of composition of relations, i.e., $R\circ R^{-1}$ and $R^{-1}\circ R$ are not the equality relations on $A$ and $B$ in general. Let us also note that if $A$, $B$ and $C$ are finite sets with $|A|=k$, $|B|=m$ and $|C|=n$, then $R$ and $S$ can~be treated as $k\times m$ and $m\times n$ Boolean matrices, and $R\circ S$ is their matrix~pro\-duct. Moreover, if we consider
$\alpha $ and $\beta $ as $1\times k$ and $1\times m$ Boolean matrices, i.e., Boolean vectors of length $k$ and $m$, then $\alpha\circ R$ can be treated as the matrix product of $\alpha $ and $R$, $R\circ \beta $ as the matrix product of $R$ and  $\beta^t$ (the transpose~of~$\beta$), and $\alpha \circ \beta $ as the scalar product of vectors $\alpha $ and $\beta$.

Recall that an {\it equivalence\/} on a set $A$ is any reflexive, symmetric and transitive relation on $A$.~Let $E$ be an equivalence on a set $A$. By $E_a$ we denote the equivalence class of an element $a\in A$ with respect to $E$, i.e., $E_a=\{b\in A\mid (a,b)\in E\}$.~The set of all equivalence classes of $E$ is denoted by $A/E$ and called the {\it  factor set\/} of $A$ with respect to $E$.~By $E^\natural $ we denote the {\it natural function\/} of $A$ onto $A/E$,
i.e., the function given by $E^\natural (a)=E_a$, for every $a\in A$.

\section{Uniform relations}

Let $A$ and $B$ be non-empty sets. A relation $\varphi \subseteq A\times B$ is called {\it
complete\/} if for any $a\in A$ there exists $b\in B$ such that $(a,b)\in \varphi $, and {\it
surjective\/} if for any $b\in B$ there exists $a\in A$ such that $(a,b)\in \varphi $. Let us note
that $\varphi $ is complete if and only if there exists a function $f:A\to B$ such that
$(a,f(a))\in \varphi $, for every $a\in A$. Let us call a function $f$ with this property a
{\it functional description\/} of $\varphi $, and let us denote by $FD(\varphi )$ the set of all
such functions. For an equivalence $F$ on $B$, a function $f:A\to B$ is called $F$-{\it
surjective\/}~if~for every $b\in B$ there exists $a\in A$ such that $(f(a),b)\in F$.~In other
words, we have that $f$ is $F$-surjective if and only if $f\circ F^\natural :A\to B/F$ is a surjective
function.

For an arbitrary relation $\varphi \subseteq A\times B$ we define equivalences $E_A^\varphi $ on $A$ and $E_B^\varphi $ on~$B$ in the following way: for all $a_1,a_2\in A$
and $b_1,b_2\in B$ we set
\begin{eqnarray}%
\label{eq:a.phi.1}
(a_1,a_2)\in E_A^\varphi \ &\iff\ (\forall b\in B)(\,(a_1,b)\in \varphi\iff (a_2,b)\in
\varphi \,), \\%
\label{eq:b.phi.1}
(b_1,b_2)\in E_B^\varphi \ &\iff\ (\forall a\in A)(\,(a,b_1)\in \varphi\iff (a,b_2)\in
\varphi \, ).
\end{eqnarray}
We call $E_A^\varphi $ the {\it kernel\/}, and $E_B^\varphi $ the {\it cokernel\/}
of $\varphi $.

Let $A$ and $B$ be non-empty sets.~A {\it partial uniform relation\/} from $A$ to $B$ is a relation
$\varphi \subseteq A\times B$ which satisfies $\varphi\circ \varphi^{-1}\circ \varphi \subseteq \varphi $.~Since
the opposite inclusion always holds, $\varphi $ is a partial uniform relation if and only if $\varphi\circ \varphi^{-1}\circ \varphi = \varphi $.~A partial
uniform relation which is complete and surjective is called a {\it uniform relation\/}.~Let us notice that a partial uniform relation $\varphi \subseteq A\times B$ is a uniform relation from $A'$ to $B'$, where
$A'=\{a\in A\mid (\exists b\in B)\, (a,b)\in\varphi \}$ (the {\it domain\/} of $\varphi $) and
$B'=\{b\in B\mid (\exists a\in A)\, (a,b)\in\varphi \}$ (the {\it image\/} of $\varphi $).

Partial uniform relations and uniform relations are crisp analogues of partial fuzzy
functions and uniform fuzzy relations, which were studied in \cite{CIB.09,ICB.09,Klawonn.00}.~The next two theorems can be derived
from~more general theorems proved in the fuzzy framework (Theorems 3.1 and 3.3 \cite{CIB.09}), but for the sake
of completeness here we give another immediate proofs.

\begin{theorem}\label{th:pur} Let $A$ and $B$ be non-empty sets and let $\varphi \subseteq A\times B$
be a relation.~Then the~following~conditions are equivalent:
\begin{itemize}\parskip0pt
\item[{\rm (i)}] $\varphi $ is a partial uniform relation;
\item[{\rm (ii)}] $\varphi^{-1}$ is a partial uniform relation;
\item[{\rm (iii)}] $\varphi\circ \varphi^{-1}\subseteq E_A^\varphi $;
\item[{\rm (iv)}] $\varphi^{-1}\circ \varphi\subseteq E_B^\varphi $.
\end{itemize}
\end{theorem}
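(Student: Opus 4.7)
The plan is to prove the equivalences via the cycle (i)~$\iff$~(ii), (i)~$\iff$~(iii), (ii)~$\iff$~(iv), where the last implication follows from the second by a symmetry argument applied to $\varphi^{-1}$.

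First I would establish (i)~$\iff$~(ii). Starting from $\varphi\circ\varphi^{-1}\circ\varphi\subseteq\varphi$, I take inverses on both sides and apply \eqref{eq:comp.inv} together with \eqref{eq:inv.mon}; using $(\varphi^{-1})^{-1}=\varphi$, the left-hand side becomes $\varphi^{-1}\circ\varphi\circ\varphi^{-1}$, yielding (ii). The converse is identical.

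The heart of the proof is (i)~$\iff$~(iii). For (iii)~$\implies$~(i), suppose $(a,b)\in\varphi\circ\varphi^{-1}\circ\varphi$. By \eqref{eq:rel.comp} there is $a'\in A$ with $(a,a')\in\varphi\circ\varphi^{-1}$ and $(a',b)\in\varphi$. By (iii) we have $(a,a')\in E_A^\varphi$, so the defining biconditional \eqref{eq:a.phi.1} applied to $b$ gives $(a,b)\in\varphi$, whence $\varphi\circ\varphi^{-1}\circ\varphi\subseteq\varphi$. For (i)~$\implies$~(iii), let $(a_1,a_2)\in\varphi\circ\varphi^{-1}$, so that some $b_0\in B$ witnesses $(a_1,b_0),(a_2,b_0)\in\varphi$. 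To verify $(a_1,a_2)\in E_A^\varphi$ via \eqref{eq:a.phi.1}, I fix $b\in B$ and check both directions: if $(a_2,b)\in\varphi$, then the chain $a_1\stackrel{\varphi}\to b_0\stackrel{\varphi^{-1}}\to a_2\stackrel{\varphi}\to b$ puts $(a_1,b)$ into $\varphi\circ\varphi^{-1}\circ\varphi\subseteq\varphi$; the reverse direction is symmetric, swapping the roles of $a_1$ and $a_2$ (which is permitted because $\varphi\circ\varphi^{-1}$ is symmetric, $(\varphi\circ\varphi^{-1})^{-1}=\varphi\circ\varphi^{-1}$).

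Finally, (ii)~$\iff$~(iv) follows by applying the already-proved equivalence (i)~$\iff$~(iii) to the relation $\varphi^{-1}\subseteq B\times A$ in place of $\varphi$, and observing that $E_B^{\varphi^{-1}}=E_B^\varphi$ directly from \eqref{eq:b.phi.1} (since $(b_1,a)\in\varphi^{-1}$ is the same as $(a,b_1)\in\varphi$).

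The only delicate point is the argument for (i)~$\implies$~(iii): one has to notice that a single witness $b_0$ supplied by $(a_1,a_2)\in\varphi\circ\varphi^{-1}$ is enough to transfer membership in $\varphi$ between $a_1$ and $a_2$ in both directions, and this is exactly where the symmetry of $\varphi\circ\varphi^{-1}$ (equivalently, the presence of both $\varphi$ and $\varphi^{-1}$ in the defining inclusion) is used. Everything else is a routine manipulation with \eqref{eq:comp.as}--\eqref{eq:inv.comp.rs}.
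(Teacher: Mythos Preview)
Your proof is correct and follows essentially the same approach as the paper: both establish (i)$\iff$(iii) by the same element-chasing argument (a witness $b_0$ from $\varphi\circ\varphi^{-1}$ together with the inclusion $\varphi\circ\varphi^{-1}\circ\varphi\subseteq\varphi$ yields the biconditional in \eqref{eq:a.phi.1}), note that (i)$\iff$(ii) is immediate by taking inverses, and obtain (iv) by symmetry. Your reduction of (ii)$\iff$(iv) via the identity $E_B^{\varphi^{-1}}=E_B^\varphi$ is a clean way of phrasing what the paper dismisses with ``Similarly we prove (i)$\iff$(iv).''
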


\begin{proof}
(i)$\implies $(iii). Let $(a_1,a_2)\in \varphi\circ \varphi^{-1}$.~Then $(a_1,b_0)\in \varphi $ and
$(b_0,a_2)\in \varphi^{-1}$, for some $b_0\in B$, and for every $b\in B$ we have that $(a_1,b)\in \varphi $
implies $(a_2,b)\in \varphi \circ \varphi^{-1}\circ \varphi \subseteq\varphi $, and likewise,
$(a_2,b)\in \varphi $ implies $(a_1,b)\in \varphi $. Thus, $(a_1,a_2)\in E_A^\varphi $.

(iii)$\implies $(i). Let $(a,b)\in \varphi\circ\varphi^{-1}\circ\varphi $. Then there exist $a'\in
A$ and $b'\in B$ such that  $(a,b')\in \varphi$, $(b',a')\in \varphi^{-1}$ and $(a',b)\in \varphi
$, whence $(a,a')\in \varphi\circ\varphi^{-1}\subseteq E_A^\varphi $, and by $(a',b)\in \varphi
$ and (\ref{eq:a.phi.1}) we obtain $(a,b)\in \varphi $.~Therefore, $\varphi\circ\varphi^{-1}\circ\varphi
\subseteq \varphi$.

Similarly we prove (i)$\iff $(iv), whereas equivalence (i)$\iff $(ii) is obvious.
\end{proof}

If $\varphi \subseteq A\times B$ is a partial uniform relation, then it can be easily verified
that $\varphi\circ \varphi^{-1}$ and $\varphi^{-1}\circ \varphi $ are symmetric and transitive relations,
but they are not necessary reflexive.~Namely, $\varphi\circ \varphi^{-1}$ is reflexive if and only if
$\varphi $ is complete, and $\varphi^{-1}\circ \varphi $ is reflexive if and only if
$\varphi $ is surjective.~Therefore, if $\varphi $ is a uniform relation, then both
$\varphi\circ \varphi^{-1}$ and $\varphi^{-1}\circ \varphi $ are equivalence relations. Moreover,
the following is true.

\begin{theorem}\label{th:ur} Let $A$ and $B$ be non-empty sets and let $\varphi \subseteq A\times B$
be a relation.~Then the~following~conditions are equivalent:
\begin{itemize}\parskip0pt
\item[{\rm (i)}] $\varphi $ is a uniform relation;
\item[{\rm (ii)}] $\varphi^{-1}$ is a uniform relation;
\item[{\rm (iii)}] $\varphi $ is surjective and
$\varphi\circ \varphi^{-1}=E_A^\varphi $;
\item[{\rm (iv)}] $\varphi $ is complete and $\varphi^{-1}\circ \varphi=E_B^\varphi $;
\item[{\rm (v)}] $\varphi $ is complete and for all $f\in FD(\varphi )$, $a\in A$ and $b\in B$, $f$ is $E_B^\varphi $-surjective~and
\begin{equation}\label{eq:phi.EB}
(a,b)\in\varphi\ \iff\ (f(a),b)\in E_B^\varphi ;
\end{equation}
\item[{\rm (vi)}] $\varphi $ is complete and for all $f\in FD(\varphi )$ and $a_1,a_2\in A$, $f$ is $E_B^\varphi $-surjective and
\begin{equation}\label{eq:phi.EA}
(a_1,f(a_2))\in \varphi \ \iff\ (a_1,a_2)\in E_A^\varphi .
\end{equation}
\end{itemize}
\end{theorem}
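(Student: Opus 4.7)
The plan is to use Theorem~\ref{th:pur} as the backbone and prove each of (ii)--(vi) equivalent to (i) via a hub-and-spoke argument, exploiting throughout that a uniform relation is by definition a partial uniform relation which is both complete and surjective. The equivalence (i)$\iff$(ii) is immediate from Theorem~\ref{th:pur}(i)$\iff$(ii) together with the observation that $\varphi$ is complete if and only if $\varphi^{-1}$ is surjective, and vice versa. For (i)$\implies$(iii), Theorem~\ref{th:pur}(iii) already supplies the inclusion $\varphi\circ\varphi^{-1}\subseteq E_A^\varphi$; for the opposite inclusion, given $(a_1,a_2)\in E_A^\varphi$ I would invoke completeness of $\varphi$ to pick some $b$ with $(a_1,b)\in\varphi$, whereupon the definition of $E_A^\varphi$ forces $(a_2,b)\in\varphi$, hence $(a_1,a_2)\in\varphi\circ\varphi^{-1}$. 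Conversely, from (iii) the equality $\varphi\circ\varphi^{-1}=E_A^\varphi$ yields partial uniformity via Theorem~\ref{th:pur}, while completeness follows from reflexivity of $E_A^\varphi$. The equivalence (i)$\iff$(iv) is then proved dually by passing to $\varphi^{-1}$.

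For (i)$\implies$(v) and (i)$\implies$(vi), I would fix any $f\in FD(\varphi)$ and first verify $E_B^\varphi$-surjectivity of $f$: given $b\in B$, surjectivity of $\varphi$ yields $a$ with $(a,b)\in\varphi$, and then $(f(a),b)\in\varphi^{-1}\circ\varphi=E_B^\varphi$ by (iv). Equivalence (\ref{eq:phi.EB}) unpacks directly from the defining property of $E_B^\varphi$ together with $(a,f(a))\in\varphi$, and (\ref{eq:phi.EA}) unpacks analogously from $\varphi\circ\varphi^{-1}=E_A^\varphi$. Conversely, (v)$\implies$(i) is immediate: $E_B^\varphi$-surjectivity of $f$ combined with (\ref{eq:phi.EB}) gives surjectivity of $\varphi$, and (\ref{eq:phi.EB}) also yields $\varphi^{-1}\circ\varphi\subseteq E_B^\varphi$ directly, so Theorem~\ref{th:pur}(iv) closes the loop.

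The main obstacle will be (vi)$\implies$(i), because (\ref{eq:phi.EA}) only controls pairs whose second coordinate lies in the image of $f$, whereas partial uniformity must be verified for arbitrary $b\in B$. The key trick will be to exploit $E_B^\varphi$-surjectivity of $f$ to replace an arbitrary $b$ by some $f(a^*)$ that is $E_B^\varphi$-equivalent to it. Concretely, to establish $\varphi\circ\varphi^{-1}\subseteq E_A^\varphi$, I would take $(a_1,b),(a_2,b)\in\varphi$, pick $a^*$ with $(f(a^*),b)\in E_B^\varphi$, deduce $(a_1,f(a^*)),(a_2,f(a^*))\in\varphi$ from the defining property of $E_B^\varphi$, apply (\ref{eq:phi.EA}) twice to obtain $(a_1,a^*),(a_2,a^*)\in E_A^\varphi$, and conclude $(a_1,a_2)\in E_A^\varphi$ by symmetry and transitivity. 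Surjectivity of $\varphi$ is easier: for any $b$, an $a^*$ with $(f(a^*),b)\in E_B^\varphi$ also satisfies $(a^*,b)\in\varphi$ in view of $(a^*,f(a^*))\in\varphi$ and the defining property of $E_B^\varphi$. With $\varphi$ complete (assumed), surjective, and $\varphi\circ\varphi^{-1}\subseteq E_A^\varphi$ (hence partial uniform by Theorem~\ref{th:pur}), we recover (i), completing the hub.
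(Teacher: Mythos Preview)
Your proposal is correct and follows essentially the same approach as the paper: both use Theorem~\ref{th:pur} as the engine, establish (i)$\iff$(iii) via completeness plus the kernel definition, obtain (i)$\iff$(iv) by duality, and then link (v) to (iv) and (vi) to (iii) through the functional description $(a,f(a))\in\varphi$. The only cosmetic difference is organizational---the paper proves (iv)$\iff$(v) and (iii)$\iff$(vi) directly (the latter by analogy) rather than routing everything through (i), and your (vi)$\Rightarrow$(i) spells out explicitly the $E_B^\varphi$-surjectivity trick that the paper leaves to the phrase ``similarly.''
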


\begin{proof}
(i)$\iff $(ii). This equivalence is obvious.

(i)$\implies $(iii). According to Theorem \ref{th:pur}, we have that $\varphi\circ\varphi^{-1}\subseteq E_A^\varphi $.

Let $(a_1,a_2)\in E_A^\varphi $. Since $\varphi $ is complete, there exists
$b\in B$ such that $(a_1,b)\in \varphi $, and (\ref{eq:a.phi.1}) yields $(a_2,b)\in \varphi $, so we
obtain that $(a_1,a_2)\in \varphi \circ\varphi^{-1}$.~Therefore, $E_A^\varphi \subseteq \varphi\circ\varphi^{-1}$.

(iii)$\implies $(i). By Theorem \ref{th:pur}, $\varphi $ is a partial uniform relation, by the assumption we have
that it is surjective, and by reflexivity of $\varphi\circ \varphi^{-1}$ it follows that it is complete.

(ii)$\iff $(iv). This equivalence can be proved in the same way as (i)$\iff $(iii).

(iv)$\implies $(v). Let $f\in FD(\varphi)$, $a\in A$ and $b\in B$. If $(a,b)\in \varphi $, then
by this and by $(a,f(a))\in\varphi $ it follows $(f(a),b)\in\varphi^{-1}\circ\varphi
=E_B^\varphi $. On the other hand, if $(f(a),b)\in E_B^\varphi =\varphi^{-1}\circ\varphi $, then
by this and by $(a,f(a))\in\varphi $ it follows $(a,b)\in \varphi\circ\varphi^{-1}\circ\varphi
=\varphi $. Therefore, (\ref{eq:phi.EB}) holds. By (\ref{eq:phi.EB}) and the surjectivity of
$\varphi $ it also follows that $f$ is $E_B^\varphi $-surjective.

(v)$\implies $(iv). By $E_B^\varphi $-surjectivity of $f$ and (\ref{eq:phi.EB}) we obtain that
$\varphi $ is surjective. Let $(b_1,b_2)\in E_B^\varphi $. Then there exists $a\in A$ such that
$(f(a),b_1)\in E_B^\varphi $, and then $(f(a),b_2)\in E_B^\varphi $. Now by (\ref{eq:phi.EB})
it follows that $(a,b_1)\in \varphi $ and $(a,b_2)\in \varphi $, which yields
$(b_1,b_2)\in\varphi^{-1}\circ\varphi $.

Conversely, let $(b_1,b_2)\in \varphi^{-1}\circ \varphi $.~Then there exists $a\in A$ such that
$(a,b_1)\in\varphi $ and $(a,b_2)\in \varphi $, and~by~(\ref{eq:phi.EB}) we obtain that
$(f(a),b_1)\in E_B^\varphi $ and $(f(a),b_2)\in E_B^\varphi $, so $(b_1,b_2)\in E_B^\varphi
$.

(iii)$\iff $(vi). This equivalence can be proved similarly as (iv)$\iff $(v).
\end{proof}

\begin{remark}\rm
Let $A$ and $B$ be non-empty sets and let $\varphi $ be a partial uniform relation from $A$ to $B$.~Then $\varphi $ is a uniform relation from $\dom\varphi $ to $\im\varphi $, and for that reason we introduced the name partial uniform  relation.
\end{remark}

It is easy to check that every equivalence relation and every surjective function are uniform relations, and every function is a partially uniform relation.~This confirms our remark given in the introduction that uniform relations are common generalization of (surjective) functions and equivalence relations.

\begin{theorem}\label{th:ur2}
Let $A$ and $B$ be  non-empty sets, let $E$ be an equivalence on $A$ and $F$ an equivalence on $B$.
Then there exists a uniform relation $\varphi \subseteq A\times B$~such~that $E=E_A^\varphi $ and
$F=E_B^\varphi $ if and only if there exists a bijective function $\phi:A/E\to B/F$.

This bijective function can be represented as $\phi =\widetilde\varphi $, where
 $\widetilde\varphi :A/E\to B/F$~is a function given by
\begin{equation}\label{eq:tphi}
\widetilde\varphi (E_a)= F_{f(a)}, \ \ \text{for any $a\in A$ and $f\in FD(\varphi )$.}
\end{equation}
We also have that $(\widetilde\varphi)^{-1}=\widetilde{\varphi^{-1}}$.
\end{theorem}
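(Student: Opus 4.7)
The plan is to prove both implications and then verify the explicit formula for $\widetilde\varphi$ together with the inversion identity.

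For the forward direction, assume $\varphi \subseteq A\times B$ is a uniform relation with $E_A^\varphi=E$ and $E_B^\varphi=F$. Since $\varphi$ is complete, I pick any $f\in FD(\varphi)$ and define $\widetilde\varphi: A/E\to B/F$ by (\ref{eq:tphi}). The first task is to verify that this definition does not depend on the representative $a\in E_a$ or on the chosen $f$: if $(a_1,a_2)\in E=E_A^\varphi$, then from $(a_1,f(a_1))\in\varphi$ and (\ref{eq:a.phi.1}) I get $(a_2,f(a_1))\in\varphi$, which together with $(a_2,f(a_2))\in\varphi$ gives $(f(a_1),f(a_2))\in \varphi^{-1}\!\circ\varphi=E_B^\varphi=F$ by Theorem \ref{th:ur}(iv), so $F_{f(a_1)}=F_{f(a_2)}$; and similarly for two choices $f,g\in FD(\varphi)$, both $(a,f(a)),(a,g(a))\in\varphi$ force $(f(a),g(a))\in F$.

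Next I check that $\widetilde\varphi$ is a bijection. For injectivity, if $F_{f(a_1)}=F_{f(a_2)}$, then $(a_1,f(a_2))\in\varphi$ by Theorem \ref{th:ur}(v) applied to $(f(a_1),f(a_2))\in E_B^\varphi$, and then $(a_1,a_2)\in E_A^\varphi=E$ by Theorem \ref{th:ur}(vi), so $E_{a_1}=E_{a_2}$. For surjectivity, given $b\in B$ I use surjectivity of $\varphi$ to pick $a$ with $(a,b)\in\varphi$; Theorem \ref{th:ur}(v) then gives $(f(a),b)\in F$, hence $F_b=F_{f(a)}=\widetilde\varphi(E_a)$.

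For the reverse direction, assume a bijection $\phi:A/E\to B/F$ is given, and define
\begin{equation*}
\varphi=\{(a,b)\in A\times B\mid b\in \phi(E_a)\}.
\end{equation*}
Completeness and surjectivity of $\varphi$ follow because each class $\phi(E_a)$ is non-empty and $\phi$ hits every class of $F$. To show $\varphi$ is a (partial) uniform relation I invoke Theorem \ref{th:pur}(iii): if $(a_1,a_2)\in\varphi\circ\varphi^{-1}$ there is $b$ with $b\in\phi(E_{a_1})\cap\phi(E_{a_2})$, forcing $\phi(E_{a_1})=\phi(E_{a_2})$ and, by injectivity of $\phi$, $E_{a_1}=E_{a_2}\subseteq E_A^\varphi$. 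A direct unpacking of (\ref{eq:a.phi.1}) and (\ref{eq:b.phi.1}), again using the bijectivity of $\phi$, yields $E_A^\varphi=E$ and $E_B^\varphi=F$.

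Finally, the identity $(\widetilde\varphi)^{-1}=\widetilde{\varphi^{-1}}$ is a short calculation: for $g\in FD(\varphi^{-1})$ and $b\in B$, one has $(g(b),b)\in\varphi$ and $(g(b),f(g(b)))\in\varphi$, so $(b,f(g(b)))\in\varphi^{-1}\!\circ\varphi=E_B^\varphi=F$, giving $\widetilde\varphi(\widetilde{\varphi^{-1}}(F_b))=\widetilde\varphi(E_{g(b)})=F_{f(g(b))}=F_b$, and symmetrically on the other side. I expect the only subtle point is the well-definedness of $\widetilde\varphi$; everything else reduces to clean applications of Theorems \ref{th:pur} and \ref{th:ur}.
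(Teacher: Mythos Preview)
Your proof is correct and follows essentially the same route as the paper's: both directions use the characterizations in Theorems~\ref{th:pur} and~\ref{th:ur}, and the converse construction $\varphi=\{(a,b)\mid b\in\phi(E_a)\}$ is exactly the paper's $(a,b)\in\varphi\Leftrightarrow \phi(E_a)=F_b$. One small notational slip: writing ``$E_{a_1}=E_{a_2}\subseteq E_A^\varphi$'' conflates an equivalence class (a subset of $A$) with a relation (a subset of $A\times A$); what you mean is that $E_{a_1}=E_{a_2}$ implies $(a_1,a_2)\in E_A^\varphi$, which is immediate from the definition of $\varphi$.
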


\begin{proof}
Let $\varphi \subseteq A\times B$ be a uniform relation~such~that $E=E_A^\varphi $ and
$F=E_B^\varphi $.

First we show that $\widetilde\varphi :A/E\to B/F$ given by (\ref{eq:tphi}) is a well-defined
function, i.e., that it does not depend on the choice of $f\in FD(\varphi )$ and $a\in A$.
Indeed, according to (\ref{eq:phi.EB}) and (\ref{eq:phi.EA}), for any $a_1,a_2\in A$ and
$f_1,f_2\in FD(\varphi)$ we have that
\[
E_{a_1}=E_{a_2} \ \iff\ (a_1,a_2)\in E\ \iff\ (a_1,f_2(a_2))\in \varphi\ \iff\
(f_1(a_1),f_2(a_2))\in F \ \iff\ F_{f_1(a_1)}=F_{f_2(a_2)}.
\]
By this it follows that $\widetilde\varphi $ is well-defined, and also, that it is injective.~Next,
by Theorem \ref{th:ur} (v) and (vi), each $f\in FD(\varphi)$ is $F$-surjective, so we have that
$\widetilde\varphi $ is surjective. Therefore, $\widetilde\varphi $ is a bijective function.

Conversely, let $\phi:A/E\to B/F$ be a bijective function. Let us define $\varphi \subseteq A\times
B$ by
\begin{equation}\label{eq:varphi.phi}
(a,b)\in \varphi \ \iff\ \phi(E_a)=F_b, \ \ \text{for all $a\in A$ and $b\in B$}.
\end{equation}
It is clear that $\varphi $ is complete and surjective.~If $(a,b)\in
\varphi\circ\varphi^{-1}\circ\varphi $, then $(a,b'),(a',b'),(a',b)\in \varphi$, for some $a'\in A$
and $b'\in B$, so $\phi(E_e)=F_{b'}=\phi(E_{a'})=F_b$, whence $(a,b)\in \varphi $. Thus,
$\varphi\circ\varphi^{-1}\circ\varphi \subseteq \varphi $, and since the opposite inclusion is
evident, we conclude that $\varphi $ is a uniform relation.

Next, according to (\ref{eq:a.phi.1}), for arbitrary $a_1,a_2\in A$ we have that
\[
\begin{aligned}
(a_1,a_2)\in E_A^\varphi \ &\iff\ (\forall b\in B)\ \bigl((a_1,b)\in \varphi \iff
(a_2,b)\in\varphi \bigr)\ \iff\ (\forall b\in B)\ \phi(E_{a_1})=F_b \iff \phi(E_{a_2})=F_b \\
&\iff\ \phi(E_{a_1})=\phi(E_{a_2}) \ \iff\ E_{a_1}=E_{a_2} \ \iff \ (a_1,a_2)\in E,
\end{aligned}
\]
and therefore, $E_A^\varphi =E$. Likewise, $E_B^\varphi =F$.

Finally, for every $a\in A$ and $f\in FD(\varphi)$, by $(a,f(a))\in \varphi $ and
(\ref{eq:varphi.phi}) it follows that $\phi (E_a)=F_{f(a)}=\widetilde\varphi(E_a)$, so
$\phi=\widetilde\varphi $. It can be easily verified that
$(\widetilde\varphi)^{-1}=\widetilde{\varphi^{-1}}$.
\end{proof}

Let us note that the bijective function $\widetilde\varphi $ from Theorem \ref{th:ur2} determines some kind of
``uniformity'' between partitions which correspond to the equivalences $E$ and $F$, for what reason we use the name uniform
relation.

\section{Nondeterministic automata and factor automata}

Throughout this paper, if not noted otherwise, let $X$ be a finite non-empty set, called an {\it alphabet\/} (or an {\it input alphabet\/}).~We define a {\it nondeterministic automaton\/} over the alphabet $X$ as a quadruple ${\cal A}=(A,\delta^A,\sigma^A,\tau^A)$, where $A$ is a non-empty set, called the {\it set of states\/}, $\delta^A\subseteq A\times X\times A$ is a ternary
relation, called the {\it transition relation\/}, and $\sigma^A$ and $\tau^A$ are subsets of $A$,~called
respectively the sets of {\it initial states\/} and {\it terminal states\/}.~For~each $x\in X$, a binary relation $\delta_x^A\subseteq A\times A$ defined~by
\[
(a,b)\in \delta_x^A \ \iff\ (a,x,b)\in \delta^A, \ \ \text{for all}\ a,b\in A,
\]
is also called the {\it transition relation\/}.~For any word $u\in X^*$, where $X^*$ is the free monoid over $X$,  the extended transition relation $\delta_u^A\subseteq A\times A$ is defined inductively as follows: for the empty word $\varepsilon\in X^*$ we define $\delta_\varepsilon^A$ to be the equality relation, and for all $u,v\in X^*$ we set $\delta_{uv}^A=\delta_u^A\circ \delta_v^A$.~If we
disregard initial and terminal~states, then the pair ${\cal A}=(A,\delta^A)$ is called a
{\it labelled transition system\/} over $X$ (cf.~\cite{AILS.07,Milner.99}).~Typically, the set of states and the input
alphabet of a nondeterministic automaton are assumed to be finite.~Such assumption is not necessary here, and we will
assume that the input alphabet is finite, but from the methodological reasons, in some cases we will allow the set of states to be infinite.~A nondeterministic automaton whose set of states is finite will be called a {\it nondeterministic finite automaton\/}.~If $\sigma^A=\{a_0\}$, for some $a_0\in A$, and the relation $\delta^A$ is a function from $A\times X$ to $A$, i.e., for every $(a,x)\in A\times X$ there is a unique $a'\in A$ such that $(a,x,a')\in \delta^A$, then $\cal A$ is called a {\it deterministic automaton\/}, and we write ${\cal A}=(A,\delta^A,a_0,\tau^A)$.~In this case, the expressions $(a,x,a')\in \delta^A$ and $\delta^A(a,x)=a'$ will have~the~same meaning.~We also have that $\delta^A_u$ is a function from $A$ to $A$, for every $u\in X^*$, and we will often write $\delta^A_u(a)=a'$ instead of $(a,a')\in \delta^A_u$.~For the sake of simplicity, in the rest of the paper we will say just {\it automaton\/} instead of nondeterministic automaton.

The~{\it reverse auto\-maton\/} of an automaton ${\cal A}=(A,\delta^A,\sigma^A,\tau^A)$
is an automaton $\bar {\cal A}=(A,\bar\delta^{A},\bar\sigma^A,\bar\tau^A )$~whose~transition relation and sets of initial and terminal states are~defined by $\bar\delta^{A} (a,x,b)=\delta
^{A}(b,x,a)$, for all $a,b\in A$ and $x\in X$, $\bar\sigma^A =\tau^A$ and $\bar\tau^A =\sigma^A$. In other words, $\bar\delta_x^A=(\delta_x^A)^{-1}$, for every $x\in X$.

An automaton ${\cal B}=(B,\delta^B,\sigma^B,\tau^B)$ is a {\it subatomaton\/} of an automaton ${\cal A}=(A,\delta^A,\sigma^A,\tau^A)$ if $B\subseteq A$, $\delta_x^B$ is the restriction of $\delta_x^A$ to $B\times B$, for each $x\in X$, and $\sigma^B$ and $\tau^B$ are restrictions of $\sigma^A$ and $\tau^A$ to $B$, i.e., $\delta_x^B=\delta_x^A\cap B\times B$, $\sigma^B=\sigma^A\cap B$, and $\tau^B=\tau^A\cap B$.

Let ${\cal A}=(A,\delta^A,\sigma^A,\tau^A)$ and ${\cal B}=(B,\delta^B,\sigma^B,\tau^B)$ be automata.~A function $\phi :A\to B$ is an {\it isomorphism\/} if~it~is bijective and for all $a,a_1,a_2\in A$ and
$x\in X$ the following is true:
\begin{align}
&(a_1,a_2)\in \delta_x^A \ \iff\ (\phi(a_1),\phi(a_2))\in \delta_x^B, \label{eq:iso1}\\
&a\in \sigma^A\ \iff\ \phi(a)\in \sigma^B, \label{eq:iso2}\\
&a\in \tau^A\ \iff\ \phi(a)\in \tau^B. \label{eq:iso3}
\end{align}
If there exists an isomorphism between $\cal A$ and $\cal B$, then we say that
$\cal A$ and $\cal B$ are {\it isomorphic\/} automata, and we write ${\cal A}\cong
{\cal B}$.~In other words, two automata are isomorphic if in essence they have~the same structure, if they differ eachother only in notation of their states.~In particular, if ${\cal A}=(A,\delta^A,a_0,\tau^A)$ and ${\cal B}=(B,\delta^B,b_0,\tau^B)$ are deterministic automata, then a bijective function $\phi :A\to B$ is an isomorphism if and only if it satisfies $\phi (a_0)=b_0$,  (\ref{eq:iso3}) and
\begin{equation}\label{eq:iso1d}
\phi(\delta^A(a,x))=\delta^B(\phi(a),x),
\end{equation}
for all $x\in x$ and $a\in A$.

It is easy to check that composition of two isomorphisms of automata is also an isomorphism, and thus, for arbitrary automata ${\cal A}$, $\cal B$ and $\cal C$, ${\cal A}\cong {\cal B}$ and ${\cal B}\cong {\cal C}$ implies
${\cal A}\cong {\cal C}$.~A function $\phi :A\to B$ which is injective and it satisfies (\ref{eq:iso1})--(\ref{eq:iso3}) is called a {\it monomorphism\/} from $\cal A$ into $\cal B$.~It is easy to check that $\phi :A\to B$ is a monomorphism from $\cal A$ to $\cal B$ if and only if it is an isomorphism from $\cal A$ to the subautomaton ${\cal C}=(C,\delta^C,\sigma^C,\tau^C)$ of $\cal B$, where $C=\im\phi $.

Let ${\cal A}=(A,\delta^A,\sigma^A,\tau^A)$ be an automaton.~The {\it language recognized by\/} $\cal A$,
denoted~by $L({\cal A})$, is a language in $X^*$ defined as follows: for any $u\in X^*$,
\begin{equation}\label{eq:lang.rec}
u\in L({\cal A})\ \iff\ (\exists a_1,a_2\in A)\, \bigl( a_1\in \sigma^A \land (a_1,a_2)\in \delta_u^A \land a_2\in \tau^A\bigr),
\end{equation}
In notation from Section \ref{sec:prel} (equations (\ref{eq:rel.comp})--(\ref{eq:ss.comp})), the equation (\ref{eq:lang.rec}) ca be also written as
\begin{equation}\label{eq:lang.rec.2}
u\in L({\cal A})\ \iff\ (\sigma^A\circ \delta_u^A) \cap \tau^A\ne \emptyset \ \iff\ \sigma^A\cap (\delta_u^A \circ \tau^A)\ne \emptyset\ \iff\ \sigma^A\circ\delta_u^A\circ\tau^A= 1 .
\end{equation}
Two automata $\cal A$ and $\cal B$ are said to be {\it language-equivalent\/}, or just {\it equivalent\/}, if they recognize the same~language, i.e., if $L({\cal A})=L({\cal B})$.

Let ${\cal A}=(A,\delta^A,\sigma^A,\tau^A)$ be an automaton and let $E$ be an equivalence~on~$A$.~Without any restriction on~the equivalence $E$, we~can~define a transition relation $\delta^{A/E}\subseteq A/E\times X\times A/E$ by
\begin{equation}\label{eq:dE1}
\begin{aligned}
(E_{a_1},x,E_{a_2})\in \delta^{A/E} \ &\iff\ (\exists a_1',a_2'\in A)\, \bigl((a_1,a_1')\in E \land
(a_1',x,a_2')\in \delta^A\land (a_2',a_2)\in E\bigr)\\ &\iff\ (a_1,a_2)\in E\circ \delta_x\circ E,
\end{aligned}
\end{equation}
for all $a_1,a_2\in A$ and $x\in X$, and we can also define sets $\sigma^{A/E},\tau^{A/E}\subseteq A/E$ by
\begin{align}
&E_a\in \sigma^{A/E}\ \iff\ (\exists a'\in A)\,\bigl( a'\in\sigma^A\land (a',a)\in E\bigl) \ \iff\ a\in \sigma^A\circ E, \label{eq:sigmaAE} \\
&E_a\in \tau^{A/E}\ \iff\ (\exists a'\in A)\,\bigl((a,a')\in E\land a'\in\tau^A\bigl) \ \iff\ a\in E\circ \tau^A,\label{eq:tauAE}
\end{align}
for every $a\in A$.~Evidently, $\delta^{A/E}$, $\sigma^{A/E}$ and $\tau^{A/E}$ are well-defined, and
${\cal A}/E=(A/E,\delta^{A/E},\sigma^{A/E},\tau^{A/E})$ is a nondeterministic automaton, called the {\it factor automaton\/} of $\cal A$ w.r.t. $E$.

The next theorems can be conceived as a version, for nondeterministic automata, of two~well-known~theorems from universal algebra: Second Isomorphism Theorem and Correspondence Theorem~(cf.~\cite[II.{\S}6]{BS.81}).

\begin{theorem}\label{th:F:E}
Let ${\cal A}=(A,\delta^A,\sigma^A,\tau^A)$ be an automaton, and let $E$ and $F$ be equivalences
on $A$ such that $E\subseteq F$.

Then a relation $F/E$ on $A/E$ defined by
\begin{equation}\label{eq:F:E}
(E_{a_1},E_{a_2})\in F/E\ \iff\ (a_1,a_2)\in F,\ \ \ \text{for all $a_1,a_2\in A$,}
\end{equation}
is an equivalence on $A/E$, and the factor automata $({\cal A}/E)/(F/E)$ and ${\cal A}/F$ are isomorphic.
\end{theorem}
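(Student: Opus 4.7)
The plan is to first verify that the relation $F/E$ in (\ref{eq:F:E}) is well-defined on $A/E$, and then to exhibit an explicit isomorphism between the two factor automata. For well-definedness, I would pick representatives $a_1,a_1'$ with $E_{a_1}=E_{a_1'}$ and $a_2,a_2'$ with $E_{a_2}=E_{a_2'}$; then $(a_1,a_1'),(a_2,a_2')\in E\subseteq F$, and symmetry and transitivity of $F$ give $(a_1,a_2)\in F\iff (a_1',a_2')\in F$. Reflexivity, symmetry and transitivity of $F/E$ then transfer directly from $F$. This is routine.

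Next I would define $\phi:(A/E)/(F/E)\to A/F$ by $\phi\bigl((F/E)_{E_a}\bigr)=F_a$, and verify that $\phi$ is a well-defined bijection. Well-definedness and injectivity follow from the chain of equivalences $(F/E)_{E_{a_1}}=(F/E)_{E_{a_2}}\iff (E_{a_1},E_{a_2})\in F/E\iff (a_1,a_2)\in F\iff F_{a_1}=F_{a_2}$, read in both directions. Surjectivity is immediate since every $F_a$ is the image of $(F/E)_{E_a}$.

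The bulk of the work is the compatibility conditions (\ref{eq:iso1})--(\ref{eq:iso3}). Here the relational calculus from Section \ref{sec:prel} gives a very clean approach: by (\ref{eq:dE1}), $(E_{a_1},x,E_{a_2})\in \delta^{A/E}$ is $(a_1,a_2)\in E\circ \delta_x^A\circ E$, and by (\ref{eq:F:E}) the $F/E$-step corresponds to an $F$-step on representatives, so
\[
\bigl((F/E)_{E_{a_1}},x,(F/E)_{E_{a_2}}\bigr)\in \delta^{(A/E)/(F/E)}\ \iff\ (a_1,a_2)\in F\circ E\circ\delta_x^A\circ E\circ F.
\]
Using $E\subseteq F$ together with reflexivity and transitivity of $F$ yields $F\circ E\circ \delta_x^A\circ E\circ F = F\circ \delta_x^A\circ F$, and by (\ref{eq:dE1}) applied to $F$ this coincides with $(F_{a_1},x,F_{a_2})\in \delta^{A/F}$. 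The conditions for initial and terminal states are handled in the same spirit: by (\ref{eq:sigmaAE}), $(F/E)_{E_a}\in \sigma^{(A/E)/(F/E)}\iff E_a\in \sigma^{A/E}\circ (F/E)\iff a\in \sigma^A\circ E\circ F = \sigma^A\circ F\iff F_a\in \sigma^{A/F}$, and analogously for $\tau$ via (\ref{eq:tauAE}).

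The only genuine subtlety is the algebraic identity $F\circ E\circ F=F$ (and, inserted around $\delta_x^A$, the reduction $F\circ E\circ \delta_x^A\circ E\circ F = F\circ \delta_x^A\circ F$), which is where the hypothesis $E\subseteq F$ is used decisively. Everything else is bookkeeping within the relational calculus of Section \ref{sec:prel}, so I would present the proof by first isolating this identity as a short observation and then applying it uniformly in each of the three compatibility checks.
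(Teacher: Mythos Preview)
Your proposal is correct and follows essentially the same route as the paper: verify well-definedness of $F/E$ via $E\subseteq F$, set up the canonical bijection between $({\cal A}/E)/(F/E)$ and ${\cal A}/F$, and check the three compatibility conditions using the relational identity $F\circ E\circ\delta_x^A\circ E\circ F=F\circ\delta_x^A\circ F$ (equivalently $E\circ F=F\circ E=F$). The only cosmetic difference is that the paper defines the isomorphism in the opposite direction, $\phi:{\cal A}/F\to({\cal A}/E)/(F/E)$, $\phi(F_a)=P_{E_a}$, which is just the inverse of your map.
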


\begin{proof}
Consider $a_1,a_1',a_2,a_2'\in A$ such that $E_{a_1}=E_{a_1'}$
and $E_{a_2}=E_{a_2'}$, i.e., $(a_1,a_1'),(a_2,a_2')\in E$.~Then~we~have that
$(a_1,a_1'),(a_2,a_2')\in F$, so $(a_1,a_2)\in F$ if and only if $(a_1',a_2')\in F$.~Therefore,
$F/E$ is~a~well-defined relation. It is easy to check that $F/E$ is an equivalence.

For the sake of simplicity set $F/E=P$, and define a function $\phi :A/F\to (A/E)/P$ by
\[
\phi (F_a)=P_{E_a}, \qquad \text{for every $a\in A$}.
\]
For arbitrary $a_1,a_2\in A$  we have that
\[
F_{a_1}=F_{a_2} \ \iff\ \ (a_1,a_2)\in F\ \iff\ (E_{a_1},E_{a_2})\in P \
\iff\  P_{E_{a_1}} = P_{E_{a_2}} \ \iff \ \phi (F_{a_1})=\phi (F_{a_2}),
\]
and hence, $\phi $ is a well-defined and injective function.~It is clear that $\phi
$ is also a surjective function. Therefore, $\phi $ is a bijective function of $A/F$
onto $(A/E)/P$.

Since $E\subseteq F$ is equivalent to $E\circ F=F\circ E=F$, for arbitrary $a_1,a_2\in
A$ and $x\in X$ we have that
\[
\begin{aligned}
&(\phi (F_{a_1}),\phi (F_{a_2}))\in \delta_x^{(A/E)/P}\ \iff\ (P_{E_{a_1}},P_{E_{a_2}})\in \delta_x^{(A/E)/P}\
\iff\ (E_{a_1},E_{a_2})\in (P\circ \delta_x^{A/E}\circ P) \\
&\hspace{30mm}\iff\ (\exists a_3,a_4\in A)\,\bigl((E_{a_1},E_{a_3})\in P \land (E_{a_3},E_{a_4})\in \delta_x^{A/E}
\land (E_{a_4},E_{a_2})\in P\bigr) \\
&\hspace{30mm}\iff\ (\exists a_3,a_4\in A)\,\bigl((a_1,a_3)\in F \land (a_3,a_4)\in (E\circ \delta_x^A\circ E)\land
(a_4,a_2)\in F\bigr)\\
&\hspace{30mm}\iff (a_1,a_2)\in F\circ E\circ \delta_x^A\circ E\circ F = F\circ \delta_x^A\circ F
\\
&\hspace{30mm}\iff\ (F_{a_1},F_{a_2})\in \delta_x^{A/F} .
\end{aligned}
\]
Moreover, for each $a\in A$ we have that
\[
\begin{aligned}
&\phi (F_a)\in \sigma^{(A/E)/P}\ \iff\ P_{E_a}\in \sigma^{(A/E)/P}\ \iff\ E_a\in \sigma^{A/E}\circ P \\
&\hspace{15mm}\iff\ (\exists a'\in A)\,\bigl( E_{a'}\in \sigma^{A/E} \land (E_{a'},E_a)\in P\bigr) \
\iff\ (\exists a'\in A)\,\bigl( a'\in \sigma^A\circ E \land (a',a)\in F\bigr)\\
&\hspace{15mm}\iff a\in \sigma^A\circ E\circ F \ \iff\ a\in \sigma^A\circ F \ \iff\ F_a\in \sigma^{A/F} ,
\end{aligned}
\]
and similarly, $\phi (F_a)\in \tau^{(A/E)/P}\iff F_a\in \tau^{A/F}$.

Hence, $\phi $ is an isomorphism of automata ${\cal A}/F$ and
$({\cal A}/E)/(F/E)$.
\end{proof}

\begin{theorem}\label{th:F:E-isom}
Let ${\cal A}=(A,\delta^A,\sigma^A,\tau^A)$ be an automaton and $E$ an equivalence
on $A$.

The function $\Phi:{\cal E}_E(A)\to {\cal E}(A/E)$, where
${\cal E}_E(A)=\{F\in {\cal E}(A)\mid E\subseteq F\}$, defined by
\begin{equation}\label{eq:corresp}
\Phi (F) = F/E,\ \ \ \text{for every $F\in {\cal E}_E(A)$,}
\end{equation}
is a lattice isomorphism, i.e., it is surjective and
\begin{equation}\label{eq:ord.isom}
F\subseteq G\ \iff\ \Phi (F) \subseteq \Phi(G),\ \ \ \text{for all $F,G\in {\cal E}_E(A)$.}
\end{equation}
\end{theorem}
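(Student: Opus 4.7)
The plan is to establish the order-preserving biconditional (\ref{eq:ord.isom}) and surjectivity of $\Phi$ directly from the definition (\ref{eq:F:E}) of $F/E$. Well-definedness of $F/E$ (as an equivalence on $A/E$) was already verified in Theorem \ref{th:F:E}, so that work need not be repeated. Note that injectivity of $\Phi$ is not explicitly required, but it would follow immediately from the $\iff$ in (\ref{eq:ord.isom}).

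First I would prove (\ref{eq:ord.isom}) in both directions using nothing more than the definition (\ref{eq:F:E}). For the forward direction, assume $F\subseteq G$ with $F,G\in {\cal E}_E(A)$, and let $(E_{a_1},E_{a_2})\in F/E$. By (\ref{eq:F:E}) this means $(a_1,a_2)\in F\subseteq G$, hence $(E_{a_1},E_{a_2})\in G/E$. For the converse, assume $F/E\subseteq G/E$ and take $(a_1,a_2)\in F$. Then (\ref{eq:F:E}) gives $(E_{a_1},E_{a_2})\in F/E\subseteq G/E$, and a second application of (\ref{eq:F:E}) yields $(a_1,a_2)\in G$.

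Next I would prove surjectivity by an explicit construction. Given an arbitrary equivalence $Q\in {\cal E}(A/E)$, define a relation $F\subseteq A\times A$ by
\[
(a_1,a_2)\in F\ \iff\ (E_{a_1},E_{a_2})\in Q,\qquad\text{for all $a_1,a_2\in A$.}
\]
Since $F$ is the preimage of $Q$ under the map $(a_1,a_2)\mapsto (E_{a_1},E_{a_2})$, it inherits reflexivity, symmetry, and transitivity from $Q$, so $F\in {\cal E}(A)$. Moreover, if $(a_1,a_2)\in E$ then $E_{a_1}=E_{a_2}$, and reflexivity of $Q$ gives $(E_{a_1},E_{a_2})\in Q$, hence $(a_1,a_2)\in F$; thus $E\subseteq F$ and $F\in {\cal E}_E(A)$. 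Finally, combining the definition of $F$ with (\ref{eq:F:E}), for all $a_1,a_2\in A$,
\[
(E_{a_1},E_{a_2})\in F/E\ \iff\ (a_1,a_2)\in F\ \iff\ (E_{a_1},E_{a_2})\in Q,
\]
which gives $\Phi(F)=F/E=Q$.

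There is no serious obstacle in this argument; the whole proof is bookkeeping on the defining equivalence (\ref{eq:F:E}). The only point demanding any care is the verification that the pullback relation used for surjectivity is compatible with $E$, and this is immediate from reflexivity of $Q$. All other steps are formal manipulations, so no lemma beyond the preceding Theorem \ref{th:F:E} (for well-definedness of $F/E$) is needed.
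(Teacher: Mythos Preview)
Your proof is correct and follows essentially the same approach as the paper: both define the inverse image $F$ of a given equivalence on $A/E$ via $(a_1,a_2)\in F \iff (E_{a_1},E_{a_2})\in Q$ to establish surjectivity, and both verify the order-biconditional (\ref{eq:ord.isom}) directly from the defining condition (\ref{eq:F:E}). The only cosmetic difference is that the paper proves surjectivity first and then (\ref{eq:ord.isom}), whereas you reverse the order.
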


\begin{proof}
Consider an arbitrary equivalence $P\in {\cal E}(A/E)$. Define a relation $F\subseteq A\times A$ by
\begin{equation}\label{eq:P-F}
(a_1,a_2)\in F\ \iff\ (E_{a_1},E_{a_2})\in P,\ \ \ \text{for all $a_1,a_2\in A$.}
\end{equation}
It is easy to verify that $F$ is an equivalence on $A$, and clearly, $P=F/E$. For arbitrary $a_1,a_2\in A$,
if $(a_1,a_2)\in E$, then $E_{a_1}=E_{a_2}$ and $(E_{a_1},E_{a_2})\in P$, whence it follows that
$(a_1,a_2)\in F$.~Therefore,~$E\subseteq F$,~i.e., $F\in {\cal E}_E(A)$, and we have proved that $\Phi $ is surjective.

Moreover, for arbitrary $F,G\in {\cal E}_E(A)$ we have that
\[
\begin{aligned}
F\subseteq G\ &\iff\ (\forall (a_1,a_2)\in A\times A)\ \bigl((a_1,a_2)\in F\implies (a_1,a_2)\in G\bigr)\\
&\iff\ (\forall (a_1,a_2)\in A\times A)\ \bigl((E_{a_1},E_{a_2})\in \Phi(F)\implies (E_{a_1},E_{a_2})\in
\Phi(G)\bigr)\\
&\iff\ \Phi(F)\subseteq \Phi(G).
\end{aligned}
\]
Therefore, $\Phi $ is a lattice isomorphism.
\end{proof}

It is worth noting that in terms of the lattice theory, ${\cal E}_E(A)$ is the {\it principal filter\/} (or {\it principal dual ideal\/}) of the lattice ${\cal E}(A)$ (which is determined or generated by $E$).

\section{Simulations and bisimulations}

Let ${\cal A}=(A,\delta^A,\sigma^A,\tau^A)$ and ${\cal B}=(B,\delta^B,\sigma^B,\tau^B)$ be automata
and let $\varphi\subseteq A\times B$ be~a non-empty~rela\-tion. We call $\varphi$ a {\it
forward simulation\/} if
\begin{align}
&\sigma^A\subseteq \sigma^B\circ \varphi^{-1}, \label{eq:fs.nda.s} \\
&\varphi^{-1}\circ \delta_x^A\subseteq \delta_x^B\circ \varphi^{-1},\ \ \ \  \text{for every $x\in X$},\label{eq:fs.lts}\\
&\varphi^{-1}\circ \tau^A\subseteq \tau^B , \label{eq:fs.nda.t}
\end{align}
and a {\it backward simulation\/} if
\begin{align}
&\sigma^A\circ \varphi\subseteq \sigma^B, \label{eq:bs.nda.s} \\
&\delta_x^A\circ \varphi \subseteq \varphi\circ \delta_x^B,\ \ \ \  \text{for every $x\in X$}, \label{eq:bs.lts} \\
&\tau^A\subseteq \varphi\circ \tau^B . \label{eq:bs.nda.t}
\end{align}
We call $\varphi $ a {\it forward bisimulation\/} if both $\varphi $ and $\varphi^{-1}$ are
forward simulations,~i.e., if it satisfies (\ref{eq:fs.nda.s})--(\ref{eq:fs.nda.t}) and
\begin{align}
&\sigma^B\subseteq \sigma^A\circ \varphi,\label{eq:fb.nda.si} \\
&\varphi\circ \delta_x^B \subseteq \delta_x^A\circ \varphi,\ \ \text{for every $x\in X$},\label{eq:fb.lts.i}\\
&\varphi\circ \tau^B\subseteq \tau^A, \label{eq:fb.nda.ti}
\end{align}
and a {\it backward bisimulation\/} if both $\varphi $ and $\varphi^{-1}$ are backward simulations,
i.e., if it satisfies (\ref{eq:bs.nda.s})--(\ref{eq:bs.nda.t}) and
\begin{align}
&\sigma^B\circ\varphi^{-1}\subseteq \sigma^A, \label{eq:bb.nda.si}\\
&\delta_x^B\circ \varphi^{-1} \subseteq \varphi^{-1}\circ \delta_x^A,\ \ \ \ \text{for every $x\in X$},\label{eq:bb.lts.i}\\
&\tau^B\subseteq \varphi^{-1}\circ \tau^A . \label{eq:bb.nda.ti}
\end{align}
Let us note that condition (\ref{eq:fs.nda.s}) means that for every $a\in\sigma^A$ there exists $b\in\sigma^B$ such that $(a,b)\in\varphi$, and (\ref{eq:fb.nda.si}) means that for every $b\in\sigma^B$ there exists $a\in\sigma^A$ such that $(a,b)\in\varphi$.~On the other hand, condition (\ref{eq:fs.nda.t}) means that $\{b\in B\mid (\exists a\in \tau^A)\,(a,b)\in\varphi\}\subseteq \tau^B$, and (\ref{eq:fb.nda.ti}) means that $\{a\in A\mid (\exists b\in \tau^B)\,(a,b)\in\varphi\}\subseteq \tau^A$.~Similar interpretations can be given
for conditions (\ref{eq:bs.nda.s}), (\ref{eq:bs.nda.t}), (\ref{eq:bb.nda.si}) and (\ref{eq:bb.nda.ti}).

Next, we call $\varphi$ a {\it forward-backward simulation\/} if $\varphi$ is a forward and $\varphi^{-1}$ is a backward simulation, i.e.,~if
\begin{align}
&\sigma^A=\sigma^B\circ\varphi^{-1} , \label{eq:fbb.nda.s}\\
&\varphi^{-1}\circ \delta_x^A=\delta_x^B\circ \varphi^{-1} ,\ \ \ \ \text{for every $x\in X$},\label{eq:fbb.lts}\\
&\varphi^{-1}\circ \tau^A=\tau^B  , \label{eq:fbb.nda.t}
\end{align}
and a {\it backward-forward simulation\/} if $\varphi$ is a backward and $\varphi^{-1}$ is a forward simulation, i.e.,~if
\begin{align}
&\sigma^A\circ \varphi= \sigma^B, \label{eq:bfb.nda.s} \\
&\delta_x^A\circ \varphi = \varphi\circ \delta_x^B,\ \ \ \  \text{for every $x\in X$}, \label{eq:bfb.lts} \\
&\tau^A= \varphi\circ \tau^B . \label{eq:bfb.nda.t}
\end{align}
For the sake of simplicity, we~will call $\varphi $ just a {\it simulation\/} if it is either a forward or a backward simulation, and just a {\it bisimulation\/} if it is any of the four types of bisimulations defined above.~Moreover, forward~and backward bisimulations will be called {\it homotypic\/}, and backward-forward and forward-backward bisimulations will be called {\it heterotypic\/}.

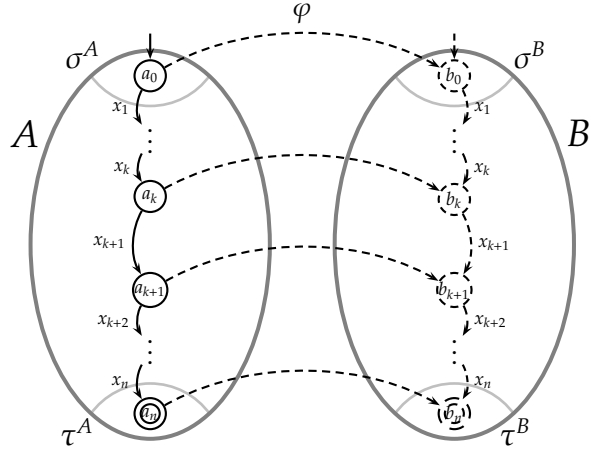
\begin{figure}
\begin{center}
\psset{unit=0.4cm}
\begin{pspicture}(-10,-7.2)(10,5)
\psarc[linecolor=lightgray,linewidth=1pt](-5,6){2.4}{215}{325}
\psarc[linecolor=lightgray,linewidth=1pt](-5,-8){2.4}{35}{145}
\psarc[linecolor=lightgray,linewidth=1pt](5,6){2.4}{215}{325}
\psarc[linecolor=lightgray,linewidth=1pt](5,-8){2.4}{35}{145}
\psellipse[linecolor=gray,linewidth=1.5pt](-5,-1)(4,6.5)
\psellipse[linecolor=gray,linewidth=1.5pt](5,-1)(4,6.5)
\rput(-9.1,2.7){\Large $A$}
\rput(9.1,2.7){\Large $B$}
\rput(-7.2,5.3){\large $\sigma^A$}
\rput(7.5,5.3){\large $\sigma^B$}
\rput(-7.4,-7.2){\large $\tau^A$}
\rput(7.0,-7.2){\large $\tau^B$}
\pnode(-5,6){AI}
\cnode(-5,4.6){.55}{A0}
\rput(A0){\scriptsize $a_0$}
\cnode[linewidth=0,linecolor=white](-5,2.5){.55}{A1}
\rput(-5,2.7){$\vdots$}
\cnode(-5,0.6){.55}{Ak}
\rput(Ak){\scriptsize $a_k$}
\cnode(-5,-2.5){.6}{Ak1}
\rput(Ak1){\scriptsize $a_{k+1}$}
\cnode[linewidth=0,linecolor=white](-5,-4.5){.55}{Ak2}
\rput(-5,-4.3){$\vdots$}
\cnode[doubleline=true,doublesep=1.5pt](-5,-6.6){.55}{An}
\rput(An){\scriptsize $a_n$}
\pnode(5,6){BI}
\cnode[linestyle=dashed,dash=3pt 2pt](5,4.6){.55}{B0}
\rput(B0){\scriptsize $b_0$}
\cnode[linewidth=0,linecolor=white](5,2.5){.55}{B1}
\rput(5,2.7){$\vdots$}
\cnode[linestyle=dashed,dash=3pt 2pt](5,0.5){.55}{Bk}
\rput(Bk){\scriptsize $b_k$}
\cnode[linestyle=dashed,dash=3pt 2pt](5,-2.5){.6}{Bk1}
\rput(Bk1){\scriptsize $b_{k+1}$}
\cnode[linewidth=0,linecolor=white](5,-4.5){.55}{Bk2}
\rput(5,-4.3){$\vdots$}
\cnode[doubleline=true,doublesep=1.5pt,linestyle=dashed,dash=3pt 2pt](5,-6.6){.55}{Bn}
\rput(Bn){\scriptsize $b_n$}
\ncline{->}{AI}{A0}
\ncarc[arcangle=-30]{->}{A0}{A1}\Bput[2pt]{\scriptsize $x_1$}
\ncarc[arcangle=-30]{->}{A1}{Ak}\Bput[2pt]{\scriptsize $x_k$}
\ncarc[arcangle=-30]{->}{Ak}{Ak1}\Bput[2pt]{\scriptsize $x_{k+1}$}
\ncarc[arcangle=-30]{->}{Ak1}{Ak2}\Bput[2pt]{\scriptsize $x_{k+2}$}
\ncarc[arcangle=-30]{->}{Ak2}{An}\Bput[2pt]{\scriptsize $x_n$}
\ncline[linestyle=dashed,dash=3pt 2pt]{->}{BI}{B0}
\ncarc[arcangle=30,linestyle=dashed,dash=3pt 2pt]{->}{B0}{B1}\Aput[2pt]{\scriptsize $x_1$}
\ncarc[arcangle=30,linestyle=dashed,dash=3pt 2pt]{->}{B1}{Bk}\Aput[2pt]{\scriptsize $x_k$}
\ncarc[arcangle=30,linestyle=dashed,dash=3pt 2pt]{->}{Bk}{Bk1}\Aput[2pt]{\scriptsize $x_{k+1}$}
\ncarc[arcangle=30,linestyle=dashed,dash=3pt 2pt]{->}{Bk1}{Bk2}\Aput[2pt]{\scriptsize $x_{k+2}$}
\ncarc[arcangle=30,linestyle=dashed,dash=3pt 2pt]{->}{Bk2}{Bn}\Aput[2pt]{\scriptsize $x_n$}
\ncarc[arcangle=30,linestyle=dashed,dash=3pt 2pt]{->}{A0}{B0}\Aput[3pt]{$\varphi$}
\ncarc[arcangle=30,linestyle=dashed,dash=3pt 2pt]{->}{Ak}{Bk}
\ncarc[arcangle=30,linestyle=dashed,dash=3pt 2pt]{->}{Ak1}{Bk1}
\ncarc[arcangle=30,linestyle=dashed,dash=3pt 2pt]{->}{An}{Bn}
\end{pspicture}\\
\caption{Forward and backward simulation}\label{fig:FBS}
\end{center}
\end{figure}

It is worth to explain the meaning of the names forward and backward simulation.~For this purpose we will use the diagram shown in Figure 1.~Let $\varphi $ be a forward simulation and let $a_0,a_1,\ldots ,a_n$ be an arbitrary successful run of the automaton $\cal A$ on a word $u=x_1x_2\cdots x_n$ ($x_1,x_2,\ldots ,x_n\in X$), i.e., a sequence of states of $\cal A$ such that $a_0\in \sigma^A$,
$(a_k,a_{k+1})\in \delta^A_{x_{k+1}}$, for $0\leqslant k\leqslant n-1$, and $a_n\in \tau^A$.~According to (\ref{eq:fs.nda.s}), there exists an initial state $b_0\in \sigma^B$ such that $(a_0,b_0)\in\varphi $.~Suppose that for some $k$, $0\le k\le n-1$, we have built a sequence of states $b_0,b_1,\ldots ,b_k$ such that
$(b_{i-1},b_i)\in \delta^B_{x_i}$ and $(a_i,b_i)\in \varphi $, for each $i$, $1\le i\le k$.~Then $(b_k,a_{k+1})\in \varphi^{-1}\circ \delta^A_{x_{k+1}}$, and by~(\ref{eq:fs.lts}) we obtain that
$(b_k,a_{k+1})\in \delta^B_{x_{k+1}}\circ \varphi^{-1}$, which means that there exists $b_{k+1}\in B$ such that $(b_k,b_{k+1})\in \delta^B_{x_{k+1}}$ and $(a_{k+1},b_{k+1})\in \varphi $.~Therefore, we have successively built a sequence $b_0,b_1,\ldots ,b_n$ of states of $\cal B$ such that $b_0\in \sigma^B$, $(b_k,b_{k+1})\in \delta^B_{x_{k+1}}$, for every $k$, $0\le k\le n-1$, and $(a_k,b_k)\in \varphi$, for every $k$, $0\le k\le n$.~Moreover, by~(\ref{eq:fs.nda.t}) we obtain that $b_n\in\tau^B$.~Thus, the sequence $b_0,b_1,\ldots ,b_n$ is a successful run of the automaton $\cal B$ on the word $u$ which simulates the original run $a_0,a_1,\ldots ,a_n$ of $\cal A$ on $u$.

In contrast to forward simulations, where we build the sequence $b_0,b_1,\ldots ,b_n$ moving forward, starting with $b_0$ and ending with $b_n$, in the case of backward simulations we build this sequence moving backward, starting with $b_n$ and ending with $b_0$.

In numerous papers dealing with simulations and bisimulations mostly forward simulations and forward~bisimulations have been studied.~They have been usually called just simulations and bisimulations, or {\it strong simulations\/} and {\it strong bisimulations\/} (cf.~\cite{Milner.89,Milner.99,RM-C.00}), and the greatest bisimulation equivalence has been usually called a {\it bisimilarity\/}.~Distinction between forward and backward simulations, and forward and backward bisimulations, has been made, for instance, in \cite{Buch.08,HKPV.98,LV.95} (for various kinds of automata), but~less or more these concepts differ from the concepts having the same name which are considered here.~More~similar to our concepts of forward and backward simulations and bisimulations are those studied in \cite{Brihaye.07}, and in \cite{HMM.07,HMM.09} (for tree automata).

The following lemma can be easily proved by induction.

\begin{lemma}
If condition $(\ref{eq:fs.lts})$ or condition $(\ref{eq:bs.lts})$ holds for every $x\in X$,  then it also holds if we replace the letter $x$ by an arbitrary word $u\in X^*$.
\end{lemma}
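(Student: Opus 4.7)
The plan is to prove both statements by induction on the length of the word $u \in X^*$, handling the two cases in parallel since they are structurally identical.

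For the base case $u = \varepsilon$, both $\delta_\varepsilon^A$ and $\delta_\varepsilon^B$ are equality relations by definition, so $\varphi^{-1}\circ \delta_\varepsilon^A = \varphi^{-1} = \delta_\varepsilon^B \circ \varphi^{-1}$, and similarly $\delta_\varepsilon^A\circ\varphi = \varphi = \varphi\circ\delta_\varepsilon^B$. Both inclusions hold trivially as equalities.

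For the inductive step, suppose the inclusion holds for a word $u \in X^*$, and let $x \in X$. Using the definition $\delta_{ux}^A = \delta_u^A \circ \delta_x^A$ together with the associativity of composition (\ref{eq:comp.as}) and the monotonicity of composition (\ref{eq:comp.mon}), I would chain the inductive hypothesis with the hypothesis for the single letter $x$. For the forward case:
\[
\varphi^{-1}\circ \delta_{ux}^A = (\varphi^{-1}\circ \delta_u^A)\circ \delta_x^A \subseteq (\delta_u^B\circ \varphi^{-1})\circ \delta_x^A = \delta_u^B\circ (\varphi^{-1}\circ \delta_x^A) \subseteq \delta_u^B\circ \delta_x^B\circ \varphi^{-1} = \delta_{ux}^B\circ \varphi^{-1}.
\]
The backward case is analogous: $\delta_{ux}^A\circ\varphi = \delta_u^A\circ(\delta_x^A\circ \varphi) \subseteq \delta_u^A\circ\varphi\circ\delta_x^B \subseteq \varphi\circ\delta_u^B\circ\delta_x^B = \varphi\circ\delta_{ux}^B$.

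There is no genuine obstacle here; the argument is an entirely routine induction. The only minor care required is to make sure the base case $u = \varepsilon$ is handled correctly (which it is, because the equality relation is a two-sided identity for composition), and to invoke associativity freely so that the two applications of monotonicity apply cleanly to adjacent factors in the composition.
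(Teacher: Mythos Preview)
Your proof is correct and follows exactly the approach the paper indicates: the paper simply states that the lemma ``can be easily proved by induction'' without giving details, and your argument spells out precisely that induction on the length of $u$, using associativity and monotonicity of composition to chain the inductive hypothesis with the single-letter hypothesis.
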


We also prove the following two lemmas.

\begin{lemma}\label{le:lang.incl.eq}
Let ${\cal A}=(A,\delta^A,\sigma^A,\tau^A)$ and ${\cal B}=(B,\delta^B,\sigma^B,\tau^B)$ be automata, and let $\varphi \subseteq A\times B$ be a relation.~Then
\begin{itemize}\parskip=0pt
\item[{\rm (a)}] If $\varphi $ is a simulation, then $L({\cal A})\subseteq L({\cal B})$.
\item[{\rm (b)}] If $\varphi $ is a bisimulation, then $L({\cal A})= L({\cal B})$.
\end{itemize}
\end{lemma}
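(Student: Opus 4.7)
The plan is to use the relational calculus identities from Section~\ref{sec:prel} together with the characterization of language recognition in formula (\ref{eq:lang.rec.2}), namely that $u\in L(\mathcal{A})$ iff $\sigma^A\circ\delta_u^A\circ\tau^A=1$. The previous lemma lets us freely replace the letter $x$ by an arbitrary word $u\in X^*$ in the simulation inequalities, so everything reduces to pushing $\varphi$ (or $\varphi^{-1}$) through a single composite.

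For part (a), I will handle the forward and backward cases separately. Suppose $\varphi$ is a forward simulation and $u\in L(\mathcal{A})$. Using (\ref{eq:fs.nda.s}), the extended form of (\ref{eq:fs.lts}), (\ref{eq:fs.nda.t}), together with monotonicity (\ref{eq:comp.mon}) and associativity (\ref{eq:comp.as})/(\ref{eq:comp.sr}), I compute
\[
1=\sigma^A\circ\delta_u^A\circ\tau^A\le \sigma^B\circ\varphi^{-1}\circ\delta_u^A\circ\tau^A\le \sigma^B\circ\delta_u^B\circ\varphi^{-1}\circ\tau^A\le \sigma^B\circ\delta_u^B\circ\tau^B,
\]
where the final term is forced to equal $1$, so $u\in L(\mathcal{B})$ by (\ref{eq:lang.rec.2}). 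The backward case is dual: starting from the right, I apply (\ref{eq:bs.nda.t}), then (the extended) (\ref{eq:bs.lts}), then (\ref{eq:bs.nda.s}) to obtain
\[
1=\sigma^A\circ\delta_u^A\circ\tau^A\le \sigma^A\circ\delta_u^A\circ\varphi\circ\tau^B\le \sigma^A\circ\varphi\circ\delta_u^B\circ\tau^B\le \sigma^B\circ\delta_u^B\circ\tau^B.
\]

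For part (b), the key observation is that whichever of the four types of bisimulation $\varphi\subseteq A\times B$ is, the inverse $\varphi^{-1}\subseteq B\times A$ is itself a simulation from $\mathcal{B}$ to $\mathcal{A}$ (of some type). For instance, when $\varphi$ is a forward bisimulation, conditions (\ref{eq:fb.nda.si})--(\ref{eq:fb.nda.ti}) are exactly the forward-simulation conditions for $\varphi^{-1}$ read with the roles of $\mathcal{A}$ and $\mathcal{B}$ swapped; the analogous check works for backward, forward-backward, and backward-forward bisimulations (using (\ref{eq:comp.inv}) and (\ref{eq:inv.comp.rs}) where helpful, e.g.\ to rewrite $\delta_x^A\circ\varphi=\varphi\circ\delta_x^B$ as $\varphi^{-1}\circ(\delta_x^A)^{-1}=(\delta_x^B)^{-1}\circ\varphi^{-1}$ on reverse automata, but in fact the conditions match up directly without reversing). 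Applying part~(a) to $\varphi$ and to $\varphi^{-1}$ then gives $L(\mathcal{A})\subseteq L(\mathcal{B})$ and $L(\mathcal{B})\subseteq L(\mathcal{A})$, hence equality.

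The only mildly delicate point is the bookkeeping in part~(b): one must verify for each of the four bisimulation types that $\varphi^{-1}$ really is a simulation of one of the two kinds, so that part~(a) applies to it. This is a routine pattern-match against the defining conditions (\ref{eq:fs.nda.s})--(\ref{eq:bfb.nda.t}) and poses no genuine obstacle. No transitivity or induction beyond the extension of (\ref{eq:fs.lts})/(\ref{eq:bs.lts}) from letters to words is needed.
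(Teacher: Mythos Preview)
Your proof is correct and follows essentially the same approach as the paper: the same chain of inequalities for part~(a), and part~(b) deduced by applying (a) to both $\varphi$ and $\varphi^{-1}$. The paper is terser (it just says ``similarly'' for the backward case and ``follows immediately by (a)'' for (b)), whereas you spell out the backward chain and the bookkeeping that in each of the four bisimulation types $\varphi^{-1}$ is itself a simulation; but the underlying argument is identical.
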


\begin{proof}
(a) Let $\varphi $ be a forward simulation. Then for every $u\in X^*$ we have that
\[
\sigma^A\circ \delta_u^A\circ \tau^A \le \sigma^B\circ \varphi^{-1}\circ\delta_u^A\circ \tau^A \le
\sigma^B\circ\delta_u^B\circ \varphi^{-1}\circ \tau^A \le \sigma^B\circ\delta_u^B\circ \tau^B,
\]
and by (\ref{eq:lang.rec.2}) we obtain that $L({\cal A})\subseteq L({\cal B})$. Similarly, if $\varphi $ is a
backward simulation,~then also $L({\cal A})\subseteq L({\cal B})$.

(b) This follows immediately by (a).
\end{proof}

\begin{lemma}\label{le:dual}
Let ${\cal A}=(A,\delta^A,\sigma^A,\tau^A)$ and ${\cal B}=(B,\delta^B,\sigma^B,\tau^B)$ be automata and let $\varphi\subseteq A\times B$ be a relation.~Then
\begin{itemize}\parskip=0pt
\item[{\rm (a)}] $\varphi $ is a backward bisimulation from $\cal A$ to $\cal B$ if and only if it is a forward bisimulation from $\bar{\cal A}$ to $\bar {\cal B}$.
\item[{\rm (b)}] $\varphi $ is a forward-backward bisimulation from $\cal A$ to $\cal B$ if and only if it is a backward-forward bisimulation from $\bar{\cal A}$ to $\bar {\cal B}$.
\end{itemize}
\end{lemma}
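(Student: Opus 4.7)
The plan is to verify both parts by a direct translation of the defining conditions, using the identities $\bar\delta_x^A=(\delta_x^A)^{-1}$, $\bar\sigma^A=\tau^A$, $\bar\tau^A=\sigma^A$ (and their analogues for $\cal B$) together with the relational calculus rules (\ref{eq:comp.inv}), (\ref{eq:inv.comp.rs}), and (\ref{eq:inv.mon}). There is no real obstacle: both statements reduce to matching the six conditions on one side against the six conditions on the other, and the rewrites are mechanical.

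For part (a), I would write out the six conditions that define ``$\varphi$ is a forward bisimulation from $\bar{\cal A}$ to $\bar{\cal B}$'' by substituting the reverse data, obtaining on the ``initial/terminal'' side the inclusions $\tau^A\subseteq \tau^B\circ\varphi^{-1}$, $\varphi^{-1}\circ\sigma^A\subseteq\sigma^B$, $\tau^B\subseteq \tau^A\circ\varphi$, $\varphi\circ\sigma^B\subseteq\sigma^A$, and on the transition side the inclusions $\varphi^{-1}\circ(\delta_x^A)^{-1}\subseteq (\delta_x^B)^{-1}\circ\varphi^{-1}$ and $\varphi\circ(\delta_x^B)^{-1}\subseteq (\delta_x^A)^{-1}\circ\varphi$. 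Applying (\ref{eq:inv.comp.rs}) to the subset/relation compositions (e.g.\ $\tau^B\circ\varphi^{-1}=\varphi\circ\tau^B$ and $\varphi^{-1}\circ\sigma^A=\sigma^A\circ\varphi$) turns these four inclusions into exactly (\ref{eq:bs.nda.t}), (\ref{eq:bs.nda.s}), (\ref{eq:bb.nda.ti}), and (\ref{eq:bb.nda.si}). For the two transition inclusions I would take inverses of both sides, using (\ref{eq:comp.inv}) and (\ref{eq:inv.mon}), which produces $\delta_x^A\circ\varphi\subseteq\varphi\circ\delta_x^B$ and $\delta_x^B\circ\varphi^{-1}\subseteq\varphi^{-1}\circ\delta_x^A$, i.e.\ (\ref{eq:bs.lts}) and (\ref{eq:bb.lts.i}). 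Since each rewrite is an equivalence, this gives the desired ``iff''.

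For part (b) the argument is identical in spirit. I expand the three equalities that define ``$\varphi$ is a backward-forward bisimulation from $\bar{\cal A}$ to $\bar{\cal B}$'', namely $\tau^A\circ\varphi=\tau^B$, $(\delta_x^A)^{-1}\circ\varphi=\varphi\circ(\delta_x^B)^{-1}$, and $\sigma^A=\varphi\circ\sigma^B$. The first is rewritten via (\ref{eq:inv.comp.rs}) as $\varphi^{-1}\circ\tau^A=\tau^B$, the third as $\sigma^A=\sigma^B\circ\varphi^{-1}$, and the middle one, after taking inverses through (\ref{eq:comp.inv}) and using (\ref{eq:inv.mon}) to preserve equality, becomes $\varphi^{-1}\circ\delta_x^A=\delta_x^B\circ\varphi^{-1}$. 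These are precisely (\ref{eq:fbb.nda.s})--(\ref{eq:fbb.nda.t}), so $\varphi$ is a forward-backward bisimulation from $\cal A$ to $\cal B$, and conversely.

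In both parts the only ingredients are the definition of the reverse automaton and the universally valid identities among $\circ$ and $^{-1}$ established in Section~\ref{sec:prel}; no automaton-theoretic content beyond that is needed, so the proof is essentially a bookkeeping exercise.
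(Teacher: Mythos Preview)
Your proposal is correct and essentially follows the paper's approach: both verify the claim by translating the defining conditions through the identities $\bar\delta_x=(\delta_x)^{-1}$, $\bar\sigma=\tau$, $\bar\tau=\sigma$ together with the relational rules (\ref{eq:comp.inv}), (\ref{eq:inv.mon}), and (\ref{eq:inv.comp.rs}). The only difference is packaging---the paper states a one-line simulation-level duality and then combines two instances, whereas you match all six (respectively three) conditions directly---but the content is the same.
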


\begin{proof}
It can be easily shown that $\varphi $ is a backward simulation from $\cal A$ to $\cal B$ if and only if $\varphi^{-1}$ is a forward simulation from $\bar {\cal B}$ to $\bar{\cal A}$, and consequently, $\varphi^{-1}$ is a backward simulation from $\cal B$ to $\cal A$ if and only if
$\varphi$ is a forward simulation from $\bar {\cal A}$ to $\bar{\cal B}$.
\end{proof}

According to the previous lemma, for any statement on forward (resp.~backward-forward) bisimulations which~is~univer\-sally~valid (valid for all nondeterministic automata) there is the corresponding universally valid statement on backward (resp.~forward-backward) bisimulations.~For that reason, we will deal only with forward and backward-forward bisimulations.

Let us emphasize the following distinction between homotypic and heterotypic bisimulations.~Evidently, the inverse of a forward (resp.~backward) bisimulation is also a forward (resp.~backward) bisimulation. However, the inverse of a backward-forward (resp.~forward-backward) bisimulation is not necessarily a backward-forward (resp.~forward-backward) bisimulation.~The inverse of a backward-forward bisimulation is a forward-backward bisimulation, and vice versa.~Later we will point out other distinctions.

It is easy to verify that the following is true.

\begin{lemma}\label{le:comp.union}
The composition of two forward {\rm ({\it resp.~backward-forward\/})} bisimulations and the union of an arbitrary family of forward {\rm ({\it resp.~backward-forward\/})} bisimulations are also forward {\rm ({\it resp.~backward-forward\/})} bisimulations.
\end{lemma}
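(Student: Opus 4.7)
My plan is to verify each of the four assertions (composition and union, for forward and for backward-forward bisimulations) by a direct relational-calculus computation using the properties listed in equations~(\ref{eq:comp.as})--(\ref{eq:inv.comp.rs}), in particular the associativity of composition, the monotonicity and distributivity of composition over unions, and the identities $(R\circ S)^{-1}=S^{-1}\circ R^{-1}$ and $\bigl(\bigcup_i R_i\bigr)^{-1}=\bigcup_i R_i^{-1}$. None of the four parts seems to require a separate idea; the proofs will just be mechanical manipulations once the defining inclusions/equalities are written out.

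For the composition of forward bisimulations $\varphi\subseteq A\times B$ and $\psi\subseteq B\times C$, I would check the six conditions (\ref{eq:fs.nda.s})--(\ref{eq:fs.nda.t}) and (\ref{eq:fb.nda.si})--(\ref{eq:fb.nda.ti}) for $\varphi\circ\psi$. The key observation is that $(\varphi\circ\psi)^{-1}=\psi^{-1}\circ\varphi^{-1}$, so a typical step runs
\[
(\varphi\circ\psi)^{-1}\circ\delta_x^A = \psi^{-1}\circ\varphi^{-1}\circ\delta_x^A \subseteq \psi^{-1}\circ\delta_x^B\circ\varphi^{-1}\subseteq \delta_x^C\circ\psi^{-1}\circ\varphi^{-1}=\delta_x^C\circ(\varphi\circ\psi)^{-1},
\]
applying (\ref{eq:fs.lts}) first for $\varphi$ and then for $\psi$, together with monotonicity (\ref{eq:comp.mon}). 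Analogous two-step chains handle the initial and terminal state conditions, and the inverse-direction conditions (\ref{eq:fb.nda.si})--(\ref{eq:fb.nda.ti}) are treated symmetrically.

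For the union $\varphi=\bigcup_{i\in I}\varphi_i$ of a family of forward bisimulations, I would use $\varphi^{-1}=\bigcup_{i\in I}\varphi_i^{-1}$ together with (\ref{eq:comp.un}). For instance,
\[
\varphi^{-1}\circ\delta_x^A=\bigcup_{i\in I}(\varphi_i^{-1}\circ\delta_x^A)\subseteq \bigcup_{i\in I}(\delta_x^B\circ\varphi_i^{-1})=\delta_x^B\circ\varphi^{-1},
\]
and the remaining five conditions reduce similarly, the inclusions (\ref{eq:fs.nda.s}), (\ref{eq:fs.nda.t}), (\ref{eq:fb.nda.si}), (\ref{eq:fb.nda.ti}) following at once from the corresponding inclusions for any single $\varphi_i$ by monotonicity.

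For backward-forward bisimulations the three defining relations (\ref{eq:bfb.nda.s})--(\ref{eq:bfb.nda.t}) are equalities, which actually makes the calculations shorter. For composition,
\[
\delta_x^A\circ(\varphi\circ\psi)=(\delta_x^A\circ\varphi)\circ\psi=(\varphi\circ\delta_x^B)\circ\psi=\varphi\circ(\psi\circ\delta_x^C)=(\varphi\circ\psi)\circ\delta_x^C,
\]
and the initial/terminal equalities chain in the same way. For the union, distributivity (\ref{eq:comp.un}) directly gives, e.g., $\delta_x^A\circ\varphi=\bigcup_i\delta_x^A\circ\varphi_i=\bigcup_i\varphi_i\circ\delta_x^B=\varphi\circ\delta_x^B$, and the terminal equality $\tau^A=\varphi\circ\tau^B$ follows from $\tau^A=\varphi_i\circ\tau^B\subseteq\varphi\circ\tau^B=\bigcup_i\varphi_i\circ\tau^B=\bigcup_i\tau^A=\tau^A$ (assuming $I\neq\emptyset$, as is tacit in the statement). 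The only mild ``obstacle'' is simply the bookkeeping of keeping all four cases and all their subconditions organized, but each individual verification is a two- or three-line computation, so I expect the final write-up to be short.
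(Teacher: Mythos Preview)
Your proposal is correct and matches the paper's approach: the paper simply asserts that the lemma ``is easy to verify'' and gives no proof, so the direct relational-calculus verification you outline is exactly what is intended. Your note that the union case tacitly requires $I\neq\emptyset$ is apt, since bisimulations are defined to be non-empty relations and the paper only ever applies the lemma to non-empty families (e.g., in the proof of Theorem~\ref{th:gfb}).
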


Now we are ready to state and prove the following fundamental result.

\begin{theorem}\label{th:gfb}
Let ${\cal A}=(A,\delta^A,\sigma^A,\tau^A)$ and ${\cal B}=(B,\delta^B,\sigma^B,\tau^B)$ be automata such that there exists at least one forward bisimulation from $\cal A$ to $\cal B$.

Then there exists the greatest forward bisimulation from $\cal A$ to $\cal B$, which is a partial uniform relation.
\end{theorem}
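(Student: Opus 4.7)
The plan is to construct the greatest forward bisimulation as the union of all forward bisimulations from $\cal A$ to $\cal B$, and then use a closure-under-composition argument to verify the partial uniformity condition.

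First, let $\varphi$ be the union of the family $\mathcal{F}$ of all forward bisimulations from $\cal A$ to $\cal B$. By hypothesis $\mathcal{F}$ is nonempty, and by Lemma \ref{le:comp.union} the union of any family of forward bisimulations is again a forward bisimulation. Hence $\varphi \in \mathcal{F}$, and it is by construction the largest element of $\mathcal{F}$ with respect to inclusion.

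Next I would verify that $\varphi$ is a partial uniform relation, i.e.\ that $\varphi \circ \varphi^{-1} \circ \varphi \subseteq \varphi$ (the reverse inclusion is automatic, as noted in Section~3). The key observation is that a forward bisimulation $\psi$ from $\cal A$ to $\cal B$ is, by definition, a relation such that both $\psi$ and $\psi^{-1}$ are forward simulations; this definition is symmetric in $\psi$ and $\psi^{-1}$, so $\varphi^{-1}$ is a forward bisimulation from $\cal B$ to $\cal A$. By the composition part of Lemma \ref{le:comp.union} (applied twice), the relation $\varphi \circ \varphi^{-1} \circ \varphi$ is then a forward bisimulation from $\cal A$ to $\cal B$, hence belongs to $\mathcal{F}$. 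By maximality of $\varphi$ we conclude $\varphi \circ \varphi^{-1} \circ \varphi \subseteq \varphi$, which is exactly the defining condition for a partial uniform relation.

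I do not anticipate any serious obstacle: both ingredients (union-closure and composition-closure of the class of forward bisimulations) are already packaged in Lemma \ref{le:comp.union}, and the symmetry of the forward-bisimulation conditions in $\varphi$ and $\varphi^{-1}$ is immediate from the definition. The only point worth being careful about is that Lemma \ref{le:comp.union} as stated concerns forward bisimulations between a single pair of automata, so one must apply it separately to the compositions $\varphi \circ \varphi^{-1}$ (from $\cal A$ to $\cal A$, after identifying it as a composition of a forward bisimulation from $\cal A$ to $\cal B$ with one from $\cal B$ to $\cal A$) and then to $(\varphi \circ \varphi^{-1}) \circ \varphi$; this is just a matter of reading the lemma with the appropriate sets of states, and needs no further calculation.
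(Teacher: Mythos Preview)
Your proof is correct and follows essentially the same approach as the paper: take $\varphi$ to be the union of all forward bisimulations, invoke Lemma~\ref{le:comp.union} to conclude it is the greatest forward bisimulation, and then use the composition part of the same lemma to see that $\varphi\circ\varphi^{-1}\circ\varphi$ is again a forward bisimulation, hence contained in $\varphi$. Your explicit remark that $\varphi^{-1}$ is a forward bisimulation from $\cal B$ to $\cal A$ and your care about which pair of automata the lemma is being applied to are details the paper leaves implicit, but the argument is the same.
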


\begin{proof}
By the assumption of the theorem, the family $\{\varphi_i\}_{i\in I}$ of all forward bisimulations from $\cal A$ to $\cal B$ is non-empty.~Let $\varphi $ be the union~of this family.~According to Lemma \ref{le:comp.union}, we obtain that $\varphi $ is a forward bisimulation, and clearly, it is the greatest one.

By Lemma \ref{le:comp.union} we also obtain that $\varphi\circ\varphi^{-1}\circ\varphi $ is a forward bisimulation, and since $\varphi $ is the greatest one, we obtain that $\varphi\circ\varphi^{-1}\circ\varphi \subseteq \varphi $. This means that~$\varphi $ is a partial uniform relation.
\end{proof}

A similar theorem can be proved for backward-forward bisimulations, but there is a difference because in that case we can not prove that the greatest backward-forward bisimulation is a partial uniform relation. In other words, the following is true.

\begin{theorem}\label{th:gbfb}
Let ${\cal A}=(A,\delta^A,\sigma^A,\tau^A)$ and ${\cal B}=(B,\delta^B,\sigma^B,\tau^B)$ be automata such that there exists at least one backward-forward bisimulation from $\cal A$ to $\cal B$.

Then there exists the greatest backward-forward bisimulation from $\cal A$ to $\cal B$.
\end{theorem}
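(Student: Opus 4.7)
The plan is to follow essentially the same strategy as in Theorem~\ref{th:gfb}, but using only the ``union'' half of Lemma~\ref{le:comp.union}. By hypothesis, the family $\{\varphi_i\}_{i\in I}$ of all backward-forward bisimulations from $\cal A$ to $\cal B$ is non-empty. I would set $\varphi=\bigcup_{i\in I}\varphi_i$ and invoke Lemma~\ref{le:comp.union} (which is stated for backward-forward bisimulations as well as for forward ones) to conclude that $\varphi$ is itself a backward-forward bisimulation from $\cal A$ to $\cal B$. Since every $\varphi_i$ is contained in $\varphi$, it is the greatest such relation, which is exactly the statement of the theorem.

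There is essentially no obstacle: the only thing to check is that the lemma on unions really does apply verbatim in the backward-forward case, which was already asserted by the authors. The main reason for stating this theorem separately from Theorem~\ref{th:gfb}, rather than as a corollary, is the remark that follows its proof, namely that one cannot additionally conclude that the greatest backward-forward bisimulation is a partial uniform relation. In Theorem~\ref{th:gfb} the partial-uniformity conclusion $\varphi\circ\varphi^{-1}\circ\varphi\subseteq\varphi$ was obtained by observing that $\varphi\circ\varphi^{-1}\circ\varphi$ is again a forward bisimulation (forward bisimulations being closed under inversion and composition) and then invoking the maximality of $\varphi$. In the heterotypic case this argument breaks down: the inverse of a backward-forward bisimulation is a forward-backward bisimulation, so $\varphi\circ\varphi^{-1}\circ\varphi$ is a composite of a backward-forward with a forward-backward with a backward-forward bisimulation, and Lemma~\ref{le:comp.union} gives no information about such mixed compositions. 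Hence the theorem's conclusion is limited to the mere existence of the greatest backward-forward bisimulation.
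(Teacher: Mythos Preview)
Your proposal is correct and matches the paper's intended argument: the paper does not give a separate proof but indicates that the same method as in Theorem~\ref{th:gfb} works, taking the union of all backward-forward bisimulations and applying the ``union'' clause of Lemma~\ref{le:comp.union}. Your explanation of why the partial-uniformity conclusion fails in the heterotypic case is also accurate and consistent with the paper's remark preceding the theorem.
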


\begin{lemma}\label{le:restr}
Let ${\cal A}=(A,\delta^A,\sigma^A,\tau^A)$ and ${\cal B}=(B,\delta^B,\sigma^B,\tau^B)$ be automata, let $\varphi \subseteq A\times B$ be a relation.~Moreover,~let ${\cal C}= (C,\delta^C,\sigma^C,\tau^C)$ and ${\cal D}= (D,\delta^D,\sigma^D,\tau^D)$ be subautomata of $\cal A$ and $\cal B$, where $C=\dom \varphi $ and $D=\im \varphi$.~Then $\varphi \subseteq C\times D$ and
\begin{itemize}\parskip=0pt
\item[{\rm (a)}] if $\varphi $ is a forward {\rm ({\it resp.~backward\/})} simulation from $\cal A$ to $\cal B$, then it is a forward {\rm ({\it resp.~backward\/})} simulation from $\cal C$ to $\cal D$;
\item[{\rm (b)}] if $\varphi^{-1}$ is a forward {\rm ({\it resp.~backward\/})} simulation from $\cal B$ to $\cal A$, then it is a forward {\rm ({\it resp.~backward\/})} simulation from $\cal D$ to $\cal C$.
\end{itemize}
Also, if $A=C$, then the opposite implication in {\rm (a)} holds, and if $B=D$, then the opposite implication in {\rm (b)} holds.
\end{lemma}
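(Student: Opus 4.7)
The plan is to verify each of the simulation conditions by straightforward unwinding of definitions, exploiting the key observation that every element of $\dom\varphi$ lies in $C$ and every element of $\im\varphi$ lies in $D$, together with the definition of a subautomaton, namely $\sigma^C=\sigma^A\cap C$, $\tau^C=\tau^A\cap C$, $\delta_x^C=\delta_x^A\cap(C\times C)$, and analogously for ${\cal D}$. First I would note that $\varphi\subseteq C\times D$ is immediate from the definitions of $C$ and $D$.

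For part (a), assume $\varphi$ is a forward simulation from ${\cal A}$ to ${\cal B}$. To verify $\sigma^C\subseteq\sigma^D\circ\varphi^{-1}$, take $a\in\sigma^C=\sigma^A\cap C$; by $(\ref{eq:fs.nda.s})$ applied to ${\cal A}$ and ${\cal B}$ there exists $b\in\sigma^B$ with $(a,b)\in\varphi$, and since such $b$ automatically lies in $\im\varphi=D$ we get $b\in\sigma^B\cap D=\sigma^D$. To verify $\varphi^{-1}\circ\delta_x^C\subseteq\delta_x^D\circ\varphi^{-1}$, suppose $(b,a')\in\varphi^{-1}\circ\delta_x^C$; pick the intermediate $a\in C$ with $(a,b)\in\varphi$ and $(a,a')\in\delta_x^A$ with $a'\in C$. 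By $(\ref{eq:fs.lts})$ there is $b'\in B$ with $(b,b')\in\delta_x^B$ and $(a',b')\in\varphi$, and since $b,b'\in\im\varphi=D$ we have $(b,b')\in\delta_x^B\cap(D\times D)=\delta_x^D$, so $(b,a')\in\delta_x^D\circ\varphi^{-1}$. The condition $\varphi^{-1}\circ\tau^C\subseteq\tau^D$ is checked analogously. The backward-simulation case is entirely dual (swapping the roles of domain and image, and of source and target), so I would state it as ``symmetric'' and not redo the details.

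Part (b) reduces to (a) applied to $\varphi^{-1}$ from ${\cal B}$ to ${\cal A}$, using that $\dom\varphi^{-1}=\im\varphi=D$ and $\im\varphi^{-1}=\dom\varphi=C$, so that the subautomata on those domains and images are precisely ${\cal D}$ and ${\cal C}$ respectively.

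For the converses, the key observation is that passing from ${\cal D}$ to ${\cal B}$ on the \emph{target side} of a forward simulation only weakens the conditions: $\sigma^D\subseteq\sigma^B$, $\tau^D\subseteq\tau^B$, and $\delta_x^D\subseteq\delta_x^B$, so each of $\sigma^A\subseteq\sigma^D\circ\varphi^{-1}$, $\varphi^{-1}\circ\delta_x^A\subseteq\delta_x^D\circ\varphi^{-1}$, $\varphi^{-1}\circ\tau^A\subseteq\tau^D$ immediately implies its counterpart with ${\cal D}$ replaced by ${\cal B}$. The hypothesis $A=C$ is used precisely to identify the \emph{source side}, since then $\sigma^A=\sigma^C$, $\tau^A=\tau^C$, $\delta_x^A=\delta_x^C$, and the converse in (b) is obtained in the same way from $B=D$ after passing to $\varphi^{-1}$. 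There is no real obstacle; the only point requiring care is not forgetting to verify $(b,b')\in D\times D$ and $(a,a')\in C\times C$ in the transition clauses, which is exactly where the definition $\delta_x^C=\delta_x^A\cap(C\times C)$ enters.
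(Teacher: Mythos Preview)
Your proposal is correct and follows essentially the same approach as the paper's proof: element-chasing through the three simulation clauses, using that every witness produced lands in $D=\im\varphi$ (respectively $C=\dom\varphi$), and handling the converses via monotonicity of composition (the paper's (\ref{eq:comp.mon})). Your reduction of (b) to (a) via $\varphi^{-1}$ and your explicit explanation of why $A=C$ suffices for the converse are slightly more detailed than the paper's treatment, but the underlying argument is the same.
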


\begin{proof}
We will prove only the part of (a) concerning forward simulations.~The remaining assertions can be proved similarly.~Accordingly, let $\varphi $ be a forward simulation from $\cal A$ to $\cal B$.

First, consider an arbitrary $a\in \sigma^C\subseteq \sigma^A\subseteq \sigma^B\circ \varphi^{-1}$.~Then there exists $b\in B$ such that $b\in \sigma^B$ and $(b,a)\in \varphi^{-1}$, i.e., $(a,b)\in \varphi $, which implies $b\in D$.~This means that $b\in \sigma^B\cap D=\sigma^D$, so $a\in \sigma^D\circ \varphi^{-1}$.~Therefore, we have proved that $\sigma^C\subseteq \sigma^D\circ \varphi^{-1}$.

Next, let $(b,a)\in \varphi^{-1}\circ \delta_x^C\subseteq \varphi^{-1}\circ \delta_x^A\subseteq  \delta_x^B\circ \varphi^{-1}$.~From $(b,a)\in \varphi^{-1}\circ \delta_x^C$ it follows that $(b,a')\in \varphi^{-1}$ and $(a',a)\in \delta_x^C$, for some $a'\in C$, which yields $b\in D$.~Moreover, from $(b,a)\in \delta_x^B\circ \varphi^{-1}$ we obtain that there is $b'\in B$ such that $(b,b')\in \delta_x^B$ and $(b',a)\in \varphi^{-1}$, whence $b'\in D$.~Therefore, we have that $b,b'\in D$ and $(b,b')\in \delta_x^B$, so $(b,b')\in \delta_x^D$, and since $(b',a)\in \varphi^{-1}$, we conclude that $(b,a)\in \delta_x^D\circ \varphi^{-1}$.~Hence, $\varphi^{-1}\circ \delta_x^C\subseteq \delta_x^D\circ \varphi^{-1}$.

Finally, let $b\in \varphi^{-1}\circ \tau^C\subseteq \varphi^{-1}\circ \tau^A\subseteq \tau^B$.~From $b \in \varphi^{-1}\circ \tau^C$ it follows that there exists $a\in C$ such that $(b,a)\in \varphi^{-1}$ and $a\in \tau^C$, whence $b\in D$.~Thus, $b\in \tau^B\cap D=\tau^D$, so
we have proved that $\varphi^{-1}\circ \tau^C\subseteq \tau^D$.

If $A=C$ or $B=D$, then the opposite implications in (a) and (b) are immediate consequences of (\ref{eq:comp.mon}).
\end{proof}

Let ${\cal A}=(A,\delta^A,\sigma^A,\tau^A)$ be an arbitrary automaton.~If $\varphi \subseteq A\times A$ is a forward~bisimulation from $\cal A$ into itself, it will be called a {\it forward bisimulation on\/} $\cal A$ (analogously we define {\it backward bisimu\-la\-tions on\/} $\cal A$).~The family of all forward bisimulations on $\cal A$ is non-empty (it contains at least the~equality relation),~and~as~in the proof of Theorem \ref{th:gfb} it can be shown that there is the greatest
forward bisimulation on $\cal A$, which is~an equivalence (cf.~\cite{AILS.07}, \cite{Milner.99}).~Forward bisimulations on $\cal A$ which are equivalences will be called {\it forward~bisimulation equivalences\/} (analogously we define {\it backward bisimu\-la\-tion equivalences\/}).~The set of all forward bisimulation equivalences on $\cal A$ will be denoted by ${\cal E}^{\mathrm{fb}}({\cal A})$.

By symmetry, an equivalence $E$~on~$A$ is a forward bisimulation on $\cal A$ if and~only~if
\begin{align}\label{eq:fb.on}
&E\circ \delta_x^A \subseteq \delta_x^A \circ E ,\ \ \ \ \text{for each}\ x\in X, \\
\label{eq:fb.ont}
&E\circ \tau^A=\tau^A .
\end{align}
It is worth noting that conditions (\ref{eq:fs.nda.s}) and (\ref{eq:fb.nda.si}) are satisfied whenever $A=B$ and $\varphi $ is a reflexive relation on $A$, and hence, whenever  $A=B$ and $\varphi $ is an equivalence on $A$.~According to Theorem~4.1 \cite{CSIP.10} (see also Theorem 1 \cite{CSIP.07}), condition (\ref{eq:fb.on}) is equivalent to
\begin{equation}\label{eq:fb.on.2}
E\circ \delta_x^A \circ E = \delta_x^A \circ E ,\ \ \text{for each}\ x\in X.
\end{equation}
Similarly, an equivalence $E$~on~$A$ is a backward bisimulation on $\cal A$ if and only if
\begin{align}\label{eq:bb.on}
&\delta_x^A \circ E \subseteq E\circ \delta_x^A ,\ \ \text{for each}\ x\in X, \\
\label{eq:bb.ont}
&\sigma^A\circ E=\sigma^A,
\end{align}
and we also have that condition (\ref{eq:bb.on}) is equivalent to
\begin{equation}\label{eq:bb.on.2}
E\circ \delta_x^A \circ E = E\circ \delta_x^A  ,\ \ \text{for each}\ x\in X.
\end{equation}

Forward bisimulation equivalences have been widely studied in the context of labeled transition systems, where they have been very successfully exploited to reduce the number of states.~In particular, many~algo\-rithms have been proposed to compute the greatest forward bisimulation equivalence on a given labeled transition system.~The faster ones are based on the crucial equivalence between the greatest forward bisimulation equivalence and the relational coarsest partition problem (cf.~\cite{DPP.04,GPP.03,KS.90,RT.08,PT.87,Saha.07}).~Forward~and backward bisimulation equivalences on nondeterministic automata have been studied by Ilie, Yu and others \cite{IY.02,IY.03,INY.04,ISY.05},
where they were respectively called {\it right\/} and {\it left invariant equivalences\/} (see also \cite{CSY.05,CC.04}).~In~a~different context, forward bisimulation equivalences were also discussed by Calude~et al.~\cite{CCK.00}, and there they were called {\it well-behaved equivalences\/}.~Both mentioned types of equivalences were used in reduction of the number of states of~nondeterministic automata.

The next theorem can be deduced by Theorem 4.2 \cite{CSIP.10} (or Theorem 2 \cite{CSIP.07}), but we give a different, direct proof.

\begin{theorem}\label{th:lat.fbe}
Let ${\cal A}=(A,\delta^A,\sigma^A,\tau^A)$ be an automaton.

The set ${\cal E}^{\mathrm{fb}}({\cal A})$ of all forward bisimulation equivalences on $\cal A$ forms a complete lattice.~This lattice is a complete join-subsemilattice of the lattice ${\cal E}(A)$ of all equivalences on $A$.
\end{theorem}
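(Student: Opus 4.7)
The plan is to show two things: (i) the identity relation $\Delta_A$ lies in $\mathcal{E}^{\mathrm{fb}}(\mathcal{A})$, so the set is non-empty; and (ii) $\mathcal{E}^{\mathrm{fb}}(\mathcal{A})$ is closed under the formation of arbitrary joins as computed in the ambient lattice $\mathcal{E}(A)$. Together these yield a complete join-subsemilattice containing the bottom of $\mathcal{E}(A)$, and a join-subsemilattice that is closed under arbitrary joins (including the empty one) is automatically a complete lattice in its own right, with meets defined as the join of the set of all lower bounds inside $\mathcal{E}^{\mathrm{fb}}(\mathcal{A})$.

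For (i), I would just verify the two characterizing conditions $(\ref{eq:fb.on})$ and $(\ref{eq:fb.ont})$ for $E=\Delta_A$. Both reduce to $\delta_x^A=\delta_x^A$ and $\tau^A=\tau^A$, which are immediate from the fact that $\Delta_A$ acts as an identity under $\circ$.

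The main content is step (ii), where the subtlety is that in $\mathcal{E}(A)$ the join of a family $\{E_i\}_{i\in I}$ is not the union but the transitive closure of the union. Concretely, set $U=\bigcup_{i\in I}E_i$; since each $E_i$ is reflexive and symmetric, so is $U$, and by Lemma~\ref{le:comp.union} (applied with $\mathcal{B}=\mathcal{A}$ and $\varphi=E_i$) $U$ is itself a forward bisimulation on $\mathcal{A}$. The join in $\mathcal{E}(A)$ equals $E:=\bigcup_{n\ge 1}U^n$, where $U^n=U\circ U\circ\cdots\circ U$ ($n$ times). Applying the composition half of Lemma~\ref{le:comp.union} inductively, every $U^n$ is a forward bisimulation; then applying the union half of Lemma~\ref{le:comp.union} once more, $E$ is a forward bisimulation. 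Since $E$ is an equivalence by construction, $E\in\mathcal{E}^{\mathrm{fb}}(\mathcal{A})$, and clearly $E$ is the join of $\{E_i\}_{i\in I}$ both in $\mathcal{E}(A)$ and in $\mathcal{E}^{\mathrm{fb}}(\mathcal{A})$.

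The one place that looks like it could be a pitfall is the transitivity-closure step: at first glance one might worry that closing under $\circ$ breaks the forward-bisimulation property. But this is exactly what Lemma~\ref{le:comp.union} rules out, so the argument is essentially free once that lemma is in hand. Finally, the complete-lattice structure of $\mathcal{E}^{\mathrm{fb}}(\mathcal{A})$ follows formally: for any family $\{E_i\}_{i\in I}\subseteq\mathcal{E}^{\mathrm{fb}}(\mathcal{A})$, the meet is
\[
\bigwedge_{i\in I}E_i=\bigvee\bigl\{F\in\mathcal{E}^{\mathrm{fb}}(\mathcal{A})\mid F\subseteq E_i\text{ for every }i\in I\bigr\},
\]
which is well defined because the set on the right contains $\Delta_A$ and is closed under the joins just established. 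Thus $\mathcal{E}^{\mathrm{fb}}(\mathcal{A})$ is a complete lattice and a complete join-subsemilattice of $\mathcal{E}(A)$, as claimed.
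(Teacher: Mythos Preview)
Your proof is correct and follows essentially the same approach as the paper: both verify that the equality relation lies in $\mathcal{E}^{\mathrm{fb}}(\mathcal{A})$ and then use Lemma~\ref{le:comp.union} to show closure under joins by expressing the join as a union of finite compositions of the $E_i$. The paper phrases this via the subsemigroup $\langle E_i\mid i\in I\rangle$ of the relation semigroup, while you write the join as $\bigcup_{n\ge 1}U^n$ with $U=\bigcup_i E_i$; by distributivity of $\circ$ over unions these are the same set of relations, so the arguments coincide.
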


\begin{proof}
Since ${\cal E}^{\mathrm{fb}}({\cal A})$ contains the least element of  ${\cal E}(A)$, the equality relation on $A$, it is enough to prove that ${\cal E}^{\mathrm{fb}}({\cal A})$ is a complete join-subsemilattice of ${\cal E}(A)$.

Let $\{E_i\}_{i\in I}$ be an arbitrary non-empty family of forward bisimulation equivalences on $\cal A$, and let $E$ be the join of this family in the lattice ${\cal E}(A)$.~It is well-known that $E$ can be represented as the set-theoretical union of all relations from $\langle E_i\mid i\in I\rangle $, where $\langle E_i\mid i\in I\rangle $ denotes the subsemigroup, generated by the family $\{E_i\}_{i\in I}$, of the semigroup of all binary relations on $A$.~This means that every relation from~$\langle E_i\mid i\in I\rangle $~can be represented as the composition of some finite collection of relations from $\{E_i\}_{i\in I}$, and according to~Lemma~\ref{le:comp.union}, we conclude that every relation from $\langle E_i\mid i\in I\rangle $ is a forward bisimulation, and therefore, $E$ is a forward bisimulation as the union of all these relations.~Hence, $E\in {\cal E}^{\mathrm{fb}}({\cal A})$,what means that ${\cal E}^{\mathrm{fb}}({\cal A})$ is a complete join-subsemilattice of ${\cal A}(A)$.
\end{proof}

\section{Algorithms for computing the greatest bisimulations}

Kozen in \cite{Kozen.97} provided an algorithm that decides whether there is at least one forward bisimulation~between nondeterministic automata, and when there is a forward bisimulation, the same algorithm computes the greatest one.~Here we give another version of this algorithm, and we also provide an analogous algorithm for backward-forward bisimulations.

For non-empty sets $A$ and $B$ and subsets $\eta\subseteq A$ and $\xi \subseteq B$ we define relations $\eta\rightarrow \xi \subseteq A\times B$ and $\eta\leftarrow \xi \subseteq A\times B$ as follows
\begin{align}
\label{eq:rightarrow}
(a,b) \in \eta\rightarrow \xi   \ \ \Leftrightarrow \ \ (\,a\in \eta \, \Rightarrow \, b\in \xi\,) , \\
\label{eq:leftarrow}
(a,b) \in \eta\leftarrow \xi  \ \ \Leftrightarrow \ \ (\,b\in \xi \, \Rightarrow \, a\in \eta\,) ,
\end{align}
for arbitrary $a\in A$ and $b\in B$.~We prove the following.

\begin{lemma}\label{le:arrows}
Let $A$ and $B$ be non-empty sets and let $\eta\subseteq A$ and $\xi \subseteq B$.
\begin{itemize}\parskip=0pt
\item[{\rm (a)}] The set of all solutions to the inequality $\eta \circ \chi \subseteq \xi $, where $\chi $ is an unknown relation between $A$ and $B$, is the principal ideal of ${\cal R}(A,B)$ generated by the relation  $\eta\rightarrow \xi$.
\item[{\rm (b)}] The set of all solutions to the inequality $\chi \circ \xi \subseteq \eta $, where $\chi $ is an unknown relation between $A$ and $B$, is the principal ideal of ${\cal R}(A,B)$ generated by the relation  $\eta\leftarrow \xi$.
\end{itemize}
\end{lemma}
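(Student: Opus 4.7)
The plan is to observe that in $\mathcal{R}(A,B)$ the principal ideal generated by a relation $R$ is just its lower set $\{\chi \in \mathcal{R}(A,B) \mid \chi \subseteq R\}$, so in both parts it suffices to establish a single bi-implication identifying the solution set of the inequality with the lower set of the relation on the right.

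For (a), I would prove directly that $\eta \circ \chi \subseteq \xi$ if and only if $\chi \subseteq \eta \rightarrow \xi$, by unfolding the definitions on either side. For the forward direction, assume $\eta \circ \chi \subseteq \xi$ and pick $(a,b) \in \chi$; to see $(a,b) \in \eta \rightarrow \xi$, suppose $a \in \eta$, and then by the definition of $\alpha \circ R$ in (\ref{eq:sr.comp}) we get $b \in \eta \circ \chi \subseteq \xi$, hence $b \in \xi$. Conversely, assume $\chi \subseteq \eta \rightarrow \xi$ and take $b \in \eta \circ \chi$; then there is some $a \in \eta$ with $(a,b) \in \chi$, and since $(a,b) \in \eta \rightarrow \xi$ with $a \in \eta$, the defining implication in (\ref{eq:rightarrow}) gives $b \in \xi$. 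Taking $\chi = \eta \rightarrow \xi$ in this equivalence (using the already-proved forward direction applied to the identity $\eta \rightarrow \xi \subseteq \eta \rightarrow \xi$) confirms that $\eta \rightarrow \xi$ itself satisfies the inequality, so it is genuinely the greatest element of the solution set.

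Part (b) is entirely analogous by a symmetric unfolding using (\ref{eq:leftarrow}). Assume $\chi \circ \xi \subseteq \eta$ and pick $(a,b) \in \chi$; to check $(a,b) \in \eta \leftarrow \xi$, suppose $b \in \xi$, then the definition of $R \circ \beta$ in (\ref{eq:sr.comp}) gives $a \in \chi \circ \xi \subseteq \eta$, hence $a \in \eta$. Conversely, assume $\chi \subseteq \eta \leftarrow \xi$ and let $a \in \chi \circ \xi$; then some $b \in \xi$ satisfies $(a,b) \in \chi \subseteq \eta \leftarrow \xi$, and the defining implication of $\eta \leftarrow \xi$ with $b \in \xi$ yields $a \in \eta$.

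There is no substantive obstacle here: the two statements are each just the Galois-type adjunction between left- (respectively right-) composition by a set and the corresponding arrow operation, and the whole proof amounts to reading off the definitions (\ref{eq:sr.comp}), (\ref{eq:rightarrow}), (\ref{eq:leftarrow}). The only small care needed is to remember that $\eta \circ \chi$ and $\chi \circ \xi$ are \emph{subsets} (of $B$ and of $A$ respectively), not relations, so the inequalities live in a different poset than $\chi$ itself; this is handled uniformly by the set-relation composition conventions already fixed in Section~\ref{sec:prel}.
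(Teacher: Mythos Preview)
Your proof is correct and essentially identical to the paper's: both establish the bi-implication $\eta \circ \chi \subseteq \xi \iff \chi \subseteq \eta \rightarrow \xi$ by directly unfolding the definitions (\ref{eq:sr.comp}) and (\ref{eq:rightarrow}), and the paper likewise dismisses (b) as symmetric. One cosmetic slip: to confirm that $\eta \rightarrow \xi$ is itself a solution you invoke the \emph{converse} direction of the equivalence (from $\chi \subseteq \eta \rightarrow \xi$ to $\eta \circ \chi \subseteq \xi$), not the forward one.
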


\begin{proof}
(a) Let a relation $\varphi \subseteq A\times B$ be a solution to $\eta \circ \chi \subseteq \xi $, and let $(a,b)\in \varphi $.~If $a\in \eta $, then $b\in \eta\circ \varphi \subseteq \xi $, and according to
(\ref{eq:rightarrow}) we conclude that $(a,b)\in \eta\rightarrow \xi  $.~Thus, $\varphi \subseteq \eta\rightarrow \xi$.

Conversely, assume that $\varphi \subseteq \eta\rightarrow \xi$.~Then for an arbitrary $b\in \eta\circ \varphi $ we have that there exists $a\in \eta $ such that $(a,b)\in \varphi\subseteq \eta\rightarrow \xi$, and again by (\ref{eq:rightarrow}) we conclude that $b\in \xi $.~Hence, $\varphi $ is a solution to $\eta \circ \chi \subseteq \xi $, and consequently, the assertion (a) is true.

The assertion (b) can be proved in a similar way.
\end{proof}

It is worth noting that $(\eta\rightarrow \xi )\cap (\eta\leftarrow \xi )=(\eta\times \xi)\cup ((A\setminus\eta)\times (B\setminus \xi))=\eta\leftrightarrow \xi$, where $\eta\leftrightarrow \xi$ is a relation between $A$ and $B$ defined by
\begin{equation}
\label{eq:leftrightarrow}
(a,b) \in \eta\leftrightarrow \xi   \ \ \Leftrightarrow \ \ (\,a\in \eta \, \Leftrightarrow \, b\in \xi\,) ,
\end{equation}
for arbitrary $a\in A$ and $b\in B$.

Next, let $A$ and $B$ be non-empty sets and let $\alpha \subseteq A\times A$, $\beta\subseteq B\times B$ and $\varphi \subseteq A\times B$.~The {\it right residual\/} of $\varphi $ by $\alpha $ is a relation $\varphi /\alpha \subseteq A\times B$ defined by
\begin{equation}\label{eq:rr.def}
(a,b)\in \varphi /\alpha \ \ \Leftrightarrow \ \ (\forall a'\in A)\, \left(\, (a',a)\in\alpha \, \Rightarrow \, (a',b)\in\varphi\, \right),
\end{equation}
for all $a\in A$ and $b\in B$, and the {\it left residual\/} of $\varphi $ by $\beta $ is a relation $\varphi \backslash\beta \subseteq A\times B$ defined by
\begin{equation}\label{eq:lr.def}
(a,b)\in \varphi \backslash\beta \ \ \Leftrightarrow \ \ (\forall b'\in B)\, \left(\, (b,b')\in\beta \, \Rightarrow \, (a,b')\in\varphi\, \right),
\end{equation}
for all $a\in A$ and $b\in B$.~In the case when $A=B$, these two concepts become the well-known concepts of~right and left residuals of relations on a set (cf.~\cite{Birkhoff,BE.99}).~We have the following.

\begin{lemma}\label{le:residuals}
Let $A$ and $B$ be non-empty sets and let $\alpha \subseteq A\times A$, $\beta\subseteq B\times B$ and $\varphi \subseteq A\times B$.
\begin{itemize}\parskip=0pt
\item[{\rm (a)}] The set of all solutions to the inequality $\alpha \circ \chi \subseteq \varphi $, where $\chi $ is an unknown relation between $A$ and $B$, is the principal ideal of ${\cal R}(A,B)$ generated by the right residual $\varphi /\alpha $ of of $\varphi $ by $\alpha $.
\item[{\rm (b)}] The set of all solutions to the inequality $\chi \circ \beta \subseteq \varphi $, where $\chi $ is an unknown relation between $A$ and $B$, is the principal ideal of ${\cal R}(A,B)$ generated by the left residual $\varphi \backslash\beta $ of of $\varphi $ by $\beta $.
\end{itemize}
\end{lemma}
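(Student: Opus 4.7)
The strategy is the same two-inclusion template used in Lemma \ref{le:arrows}: for each part I would show that the proposed residual is itself a solution, and then that every solution is contained in it. Since the set of solutions is evidently downward closed under set inclusion (by monotonicity \eqref{eq:comp.mon}), these two facts together say exactly that the solution set is the principal ideal generated by the residual.

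For part (a), to show that $\varphi/\alpha$ is a solution, I would take $(a,b)\in \alpha\circ(\varphi/\alpha)$ and pick a witness $c\in A$ with $(a,c)\in\alpha$ and $(c,b)\in\varphi/\alpha$. Applying the defining condition \eqref{eq:rr.def} of the right residual at the element $a':=a$ (which satisfies $(a',c)\in\alpha$ by construction) immediately yields $(a,b)\in\varphi$, so $\alpha\circ(\varphi/\alpha)\subseteq\varphi$. For the reverse inclusion, I would let $\psi\subseteq A\times B$ be any relation with $\alpha\circ\psi\subseteq\varphi$ and verify $\psi\subseteq\varphi/\alpha$: given $(c,b)\in\psi$, for every $a'\in A$ with $(a',c)\in\alpha$ the pair $(a',b)$ belongs to $\alpha\circ\psi$, hence to $\varphi$, which is precisely the condition \eqref{eq:rr.def} placing $(c,b)$ in $\varphi/\alpha$.

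Part (b) is completely dual: to see that $\varphi\backslash\beta$ solves $\chi\circ\beta\subseteq\varphi$, take $(a,b')\in(\varphi\backslash\beta)\circ\beta$, extract $b\in B$ with $(a,b)\in\varphi\backslash\beta$ and $(b,b')\in\beta$, and apply \eqref{eq:lr.def} at $b':=b'$ to obtain $(a,b')\in\varphi$. Conversely, if $\psi\circ\beta\subseteq\varphi$ and $(a,b)\in\psi$, then for every $b'$ with $(b,b')\in\beta$ we have $(a,b')\in\psi\circ\beta\subseteq\varphi$, so $(a,b)\in\varphi\backslash\beta$ by \eqref{eq:lr.def}.

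I do not expect any real obstacle; the only subtlety is keeping the two universally quantified variables in \eqref{eq:rr.def} and \eqref{eq:lr.def} straight, so that the witness extracted from the composition is plugged into the universal condition in exactly the right slot. Once that bookkeeping is done carefully, both parts reduce to a one-line application of the residual definition in each direction, exactly as in the proof of Lemma \ref{le:arrows}.
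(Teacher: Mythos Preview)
Your proposal is correct and follows essentially the same two-inclusion argument as the paper's proof: show that the residual solves the inequality, and that any solution is contained in the residual, with downward closure of the solution set handled by monotonicity. The only cosmetic difference is that the paper phrases the second direction as ``every $\psi\subseteq\varphi/\alpha$ is a solution'' rather than invoking monotonicity separately, but the underlying computation is identical.
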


\begin{proof}
(a) Let $\psi \subseteq A\times B$ be an arbitrary solution to $\alpha \circ \chi \subseteq \varphi $, and let $(a,b)\in \psi $.~For every $a'\in A$, if $(a',a)\in \alpha $, then $(a',b)\in \alpha\circ \psi \subseteq \varphi $, and according to (\ref{eq:rr.def}), we conclude that $(a,b)\in \varphi/\alpha$.~Therefore, $\psi\subseteq \varphi/\alpha $.

On the other hand, let $\psi\subseteq \varphi/\alpha $ and let $(a,b)\in \alpha \circ \psi $. Then there exists $a'\in A$ such that $(a,a')\in \alpha $ and $(a',b)\in \psi\subseteq \varphi/\alpha $, and by (\ref{eq:rr.def}) we obtain that $(a,b)\in \varphi $.~Hence, $\psi $ is a solution to $\alpha \circ \chi \subseteq \varphi $, and consequently, we conclude that (a) is true.

The assertion (b) can be proved analogously.
\end{proof}

We are now ready to state and prove the following theorem, which provides an algorithm that decides whether there is a forward bisimulation between two automata and computes the greatest forward~bisimulation.

\begin{theorem}\label{th:gfb.alg}
Let ${\cal A}=(A,\delta^A,\sigma^A,\tau^A)$ and ${\cal B}=(B,\delta^B,\sigma^B,\tau^B)$ be finite automata.~Define inductively a sequence $\{\varphi_k\}_{k\in \Bbb N}$ of relations between $A$ and $B$ as follows:
\begin{align}\label{eq:phi1}
&\varphi_1= \tau^A\leftrightarrow \tau^B, \\
\label{eq:phi2}
&\varphi_{k+1}=\varphi_k\cap \bigcap_{x\in X}\left(((\delta_x^B\circ\varphi_k^{-1})\setminus \delta_x^A)^{-1}\cap
((\delta_x^A\circ\varphi_k)\setminus \delta_x^B)\right).
\end{align}
Then $\{\varphi_k\}_{k\in \Bbb N}$ is a non-increasing sequence of relations and there exists $k\in \Bbb N$ such that $\varphi_k=\varphi_{k+1}$.

The relation $\varphi_k$ is the greatest relation between $A$ and $B$ which satisfies conditions $(\ref{eq:fs.lts})$, $(\ref{eq:fs.nda.t})$, $(\ref{eq:fb.lts.i})$, and $(\ref{eq:fb.nda.ti})$.~Moreover, if $\varphi_k$ satisfies conditions $(\ref{eq:fs.nda.s})$ and $(\ref{eq:fb.nda.si})$, then $\varphi_k$ is the greatest forward bisimulation between $\cal A$ and $\cal B$, and otherwise, if $\varphi_k$ does not satisfy these conditions, then there is no any forward bisimulation between $\cal A$ and $\cal B$.
\end{theorem}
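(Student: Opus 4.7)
The plan is to view the sequence $\{\varphi_k\}$ as a descending chain in the finite Boolean lattice $2^{A\times B}$, and to read formula (\ref{eq:phi2}) as two parallel applications of the left residual from Lemma \ref{le:residuals}(b), each producing the greatest relation compatible with one of the transition constraints (\ref{eq:fs.lts}), (\ref{eq:fb.lts.i}) when the ``previous'' candidate $\varphi_k$ is substituted on the right-hand side. The initial term $\varphi_1=\tau^A\leftrightarrow\tau^B$ plays the analogous role for the terminal-state conditions via Lemma \ref{le:arrows}.

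Monotonicity $\varphi_{k+1}\subseteq\varphi_k$ is immediate from (\ref{eq:phi2}); since $A\times B$ is finite, the chain stabilizes at some index $k$ with $\varphi_k=\varphi_{k+1}$. For maximality, I would argue by induction on $j$ that every $\varphi$ satisfying the four conditions (\ref{eq:fs.lts}), (\ref{eq:fs.nda.t}), (\ref{eq:fb.lts.i}), (\ref{eq:fb.nda.ti}) is contained in $\varphi_j$. The base case rewrites (\ref{eq:fs.nda.t}) as $\tau^A\circ\varphi\subseteq\tau^B$ via (\ref{eq:inv.comp.rs}) and applies Lemma \ref{le:arrows}(a) to get $\varphi\subseteq\tau^A\rightarrow\tau^B$; analogously (\ref{eq:fb.nda.ti}) gives $\varphi\subseteq\tau^A\leftarrow\tau^B$; intersecting yields $\varphi\subseteq\varphi_1$. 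For the induction step, from $\varphi\subseteq\varphi_j$ and monotonicity of composition, (\ref{eq:fs.lts}) implies $\varphi^{-1}\circ\delta_x^A\subseteq\delta_x^B\circ\varphi_j^{-1}$, so Lemma \ref{le:residuals}(b) gives $\varphi^{-1}\subseteq(\delta_x^B\circ\varphi_j^{-1})\setminus\delta_x^A$, i.e.\ $\varphi\subseteq((\delta_x^B\circ\varphi_j^{-1})\setminus\delta_x^A)^{-1}$; symmetrically, from (\ref{eq:fb.lts.i}), $\varphi\subseteq(\delta_x^A\circ\varphi_j)\setminus\delta_x^B$. Intersecting over $x\in X$ and with $\varphi_j$ gives $\varphi\subseteq\varphi_{j+1}$.

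At the stabilization index, the inclusion $\varphi_k\subseteq((\delta_x^B\circ\varphi_k^{-1})\setminus\delta_x^A)^{-1}$ read backwards through Lemma \ref{le:residuals}(b) yields (\ref{eq:fs.lts}) for $\varphi_k$, and the symmetric inclusion gives (\ref{eq:fb.lts.i}); together with $\varphi_k\subseteq\varphi_1$ and Lemma \ref{le:arrows}, this shows $\varphi_k$ satisfies all four conditions, and by the induction above it is the greatest such relation. For the initial-state conditions, note that both (\ref{eq:fs.nda.s}) and (\ref{eq:fb.nda.si}) are monotone in $\varphi$: if any forward bisimulation $\varphi$ exists, then $\varphi\subseteq\varphi_k$ by the induction, so $\sigma^A\subseteq\sigma^B\circ\varphi^{-1}\subseteq\sigma^B\circ\varphi_k^{-1}$, and similarly $\sigma^B\subseteq\sigma^A\circ\varphi_k$; hence $\varphi_k$ itself satisfies (\ref{eq:fs.nda.s}) and (\ref{eq:fb.nda.si}) and is therefore a forward bisimulation, the greatest one. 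Contrapositively, if $\varphi_k$ fails either inclusion, no forward bisimulation can exist between $\cal A$ and $\cal B$.

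The main obstacle is the bookkeeping with residuals and inverses in the induction step: one must keep $\varphi_j$ (and not $\varphi_{j+1}$) on the right-hand side of each residual inequality, and apply Lemma \ref{le:residuals}(b) to relations living in mixed Cartesian products (sometimes $A\times B$, sometimes $B\times A$). Once this is carefully set up, everything else is a routine greatest-fixed-point computation which terminates because $|A\times B|$ is finite.
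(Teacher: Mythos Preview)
Your proposal is correct and follows essentially the same route as the paper's proof: both use Lemma~\ref{le:arrows} to characterize $\varphi_1$ as the largest relation satisfying (\ref{eq:fs.nda.t}) and (\ref{eq:fb.nda.ti}), use Lemma~\ref{le:residuals}(b) to interpret each intersectand in (\ref{eq:phi2}), prove by induction on $j$ that any $\psi$ satisfying the four conditions lies in $\varphi_j$, and finish with the monotonicity argument for the initial-state conditions. Your explicit rewriting of (\ref{eq:fs.nda.t}) as $\tau^A\circ\varphi\subseteq\tau^B$ via (\ref{eq:inv.comp.rs}) before invoking Lemma~\ref{le:arrows}(a) is a detail the paper leaves implicit, but otherwise the arguments coincide.
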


\begin{proof}
(a) It is clear that $\varphi_{k+1}\subseteq \varphi_k$, for every $k\in \Bbb N$.~As the sets $A$ and $B$ are finite, there is a finite number of relations between $A$ and $B$, so there are $k,m\in \Bbb N$ such that $\varphi_k=\varphi_{k+m}$.~Now, $\varphi_{k+1}\subseteq \varphi_{k+m}= \varphi_k \subseteq \varphi_{k+1}$, and hence,
$\varphi_k=\varphi_{k+1}$.

Next, set $\varphi =\varphi_k$.~Acording to Lemma \ref{le:arrows}, a relation $\psi \subseteq A\times B$ satisfies (\ref{eq:fs.nda.t}) and (\ref{eq:fb.nda.ti}) if and only if $\psi \subseteq \varphi_1$, and hence, $\varphi $ satisfies (\ref{eq:fs.nda.t}) and (\ref{eq:fb.nda.ti}). Furthermore, by (\ref{eq:phi2}) it follows that
\[
\varphi =\varphi \cap \bigcap_{x\in X}\left(((\delta_x^B\circ\varphi^{-1})\setminus \delta_x^A)^{-1}\cap
((\delta_x^A\circ\varphi)\setminus \delta_x^B)\right),
\]
and for every $x\in X$ we obtain that $\varphi \subseteq ((\delta_x^B\circ\varphi^{-1})\setminus \delta_x^A)^{-1}$ and $\varphi \subseteq (\delta_x^A\circ\varphi)\setminus \delta_x^B$, i.e., $\varphi^{-1} \subseteq (\delta_x^B\circ\varphi^{-1})\setminus \delta_x^A$ and $\varphi \subseteq (\delta_x^A\circ\varphi)\setminus \delta_x^B$.~According to (b) of Lemma \ref{le:residuals}, $\varphi^{-1}\circ \delta_x^A \subseteq \delta_x^B\circ\varphi^{-1}$ and $\varphi \circ \delta_x^B \subseteq \delta_x^A\circ\varphi$, and thus, $\varphi $ satisfies conditions (\ref{eq:fs.lts}) and (\ref{eq:fb.lts.i}).

Let $\psi \subseteq A\times B$ be an arbitrary relation satisfying conditions (\ref{eq:fs.lts}), (\ref{eq:fs.nda.t}), (\ref{eq:fb.lts.i}), and (\ref{eq:fb.nda.ti}).~As we have already said, $\psi $ satisfies (\ref{eq:fs.nda.t}) and (\ref{eq:fb.nda.ti}) if and only if $\psi \subseteq \varphi_1$.~Suppose that $\psi \subseteq \varphi_i$, for some $i\in \Bbb N$.~Then for every $x\in X$ we have that $\psi^{-1}\circ \delta_x^A\subseteq \delta_x^B\circ \psi^{-1}\subseteq  \delta_x^B\circ \varphi_i^{-1}$, and according to (b) of Lemma \ref{le:residuals}, $\psi^{-1}\subseteq (\delta_x^B\circ\varphi_i^{-1})\setminus \delta_x^A$, that is, $\psi \subseteq ((\delta_x^B\circ\varphi_i^{-1})\setminus \delta_x^A)^{-1}$.~Analogously we show that $\psi \subseteq (\delta_x^A\circ\varphi_i)\setminus \delta_x^B$.~Therefore,
\[
\psi\subseteq \varphi_i\cap \bigcap_{x\in X}\left(((\delta_x^B\circ\varphi_i^{-1})\setminus \delta_x^A)^{-1}\cap
((\delta_x^A\circ\varphi_i)\setminus \delta_x^B)\right)= \varphi_{i+1}.
\]
Now, by induction we conclude that $\psi \subseteq \varphi_i$, for each $i\in \Bbb N$, and hence, $\psi \subseteq \varphi $.~This means that $\varphi $~is~the~greatest relation satisfying conditions (\ref{eq:fs.lts}), (\ref{eq:fs.nda.t}), (\ref{eq:fb.lts.i}), and (\ref{eq:fb.nda.ti}).

In addition, if $\varphi $ satisfies conditions (\ref{eq:fs.nda.s}) and (\ref{eq:fb.nda.si}), then it is a forward bisimulation between $\cal A$ and $\cal B$, and it is just the greatest one. On the other hand, assume that $\varphi$ does not satisfies (\ref{eq:fs.nda.s}) and (\ref{eq:fb.nda.si}). If $\psi $~is~an~arbitrary forward bisimulation between $\cal A$ and $\cal B$, then it satisfies conditions (\ref{eq:fs.lts}), (\ref{eq:fs.nda.t}), (\ref{eq:fb.lts.i}), and (\ref{eq:fb.nda.ti}), and hence, $\psi \subseteq \varphi $.~From this it follows that $\sigma^A\subseteq \sigma^B\circ \psi^{-1}\subseteq \sigma^B\circ \varphi^{-1}$ and $\sigma^B\subseteq \sigma^A\circ \psi\subseteq \sigma^A\circ \varphi$, which leads to contradiction.~Therefore, we conclude that if $\varphi$ does not satisfy conditions (\ref{eq:fs.nda.s}) and (\ref{eq:fb.nda.si}), then there is no any forward bisimulation between $\cal A$ and $\cal B$.
\end{proof}

Therefore, to decide whether there exists a forward bisimulation between two automata and compute~the greatest one, we build a sequence $\{\varphi_k\}_{k\in \Bbb N}$ of relations in the following way.~The first relation $\varphi_1$ is computed as the greatest relation that satisfies the conditions (\ref{eq:fs.nda.t}) and (\ref{eq:fb.nda.ti}).~Then we start an iterative procedure which computes $\varphi_{k+1}$ from $\varphi_k$ and check whether $\varphi_{k+1}=\varphi_k$.~The procedure terminates when we find the smallest $k\in \Bbb N$ such that $\varphi_{k+1}=\varphi_k$.~After that we check whether $\varphi_k$ satisfies conditions (\ref{eq:fs.nda.s}) and (\ref{eq:fb.nda.si}). If $\varphi_k$ does not satisfy these conditions, we conclude that there is no any forward bisimulation between the given automata, and if $\varphi_k$ satisfies (\ref{eq:fs.nda.s}) and (\ref{eq:fb.nda.si}), we conclude that it is the greatest forward bisimulation between the given automata.

The application of this algorithm is demonstrated by the following example.

\begin{example}\rm
Let ${\cal A}=(A,\delta^A,\sigma^A,\tau^A)$ and ${\cal B}=(B,\delta^B,\sigma^B,\tau^B)$ be
automata with $|A|=3$, $|B|=5$ and $X=\{x,y\}$, whose transition relations and sets of
initial and terminal states are represented by the following Boolean matrices and vectors:
\[
\begin{aligned}
&\delta_x^A=\begin{bmatrix}
1 & 1 & 0 \\
0 & 1 & 1 \\
1 & 0 & 0 \\
\end{bmatrix},\ \ \  \
\delta_y^A=\begin{bmatrix}
1 & 1 & 0 \\
0 & 0 & 1 \\
0 & 0 & 1 \\
\end{bmatrix},\ \ \  \
\delta_x^B=\begin{bmatrix}
1 & 1 & 0 & 1 & 0 \\
1 & 1 & 0 & 1 & 0 \\
1 & 1 & 0 & 0 & 0 \\
0 & 0 & 1 & 1 & 1 \\
1 & 1 & 0 & 0 & 0 \\
\end{bmatrix},\ \ \  \
\delta_y^B=\begin{bmatrix}
1 & 1 & 0 & 1 & 0 \\
1 & 1 & 0 & 1 & 0 \\
0 & 0 & 1 & 0 & 1 \\
0 & 0 & 1 & 0 & 1 \\
0 & 0 & 1 & 0 & 1 \\
\end{bmatrix} \\
&\sigma^A=\begin{bmatrix}
1  &
0  &
0
\end{bmatrix},\ \ \  \
\sigma^B=\begin{bmatrix}
1  &
1  &
0  &
0  &
0
\end{bmatrix},\ \ \  \
\tau^A=\begin{bmatrix}
0  \\
0  \\
1  \\
\end{bmatrix},\ \ \  \
\tau^B=\begin{bmatrix}
0  \\
0  \\
1  \\
0  \\
1  \\
\end{bmatrix}.
\end{aligned}
\]
Using the above described procedure we obtain that
\[
\varphi_{1}=\begin{bmatrix}
1 & 1 & 0 & 1 & 0 \\
1 & 1 & 0 & 1 & 0 \\
0 & 0 & 1 & 0 & 1
\end{bmatrix},\ \ \  \
\varphi_{2}=\varphi_{3}=\begin{bmatrix}
1 & 1 & 0 & 0 & 0 \\
0 & 0 & 0 & 1 & 0 \\
0 & 0 & 1 & 0 & 1
\end{bmatrix}.
\]
It is easy to check that $\varphi_2$ satisfies conditions (\ref{eq:fs.nda.s}) and (\ref{eq:fb.nda.si}), and therefore, $\varphi_2$ is the greatest forward bisimulation between automata $\cal A$ and $\cal B$.
\end{example}

The following theorem, which can be proved in a similar way as Theorem \ref{th:gfb.alg}, provides an algorithm~that decides whether there is a backward-forward bisimulation between two automata and computes the greatest backward-forward~bisimulation.

\begin{theorem}\label{th:gbfb.alg}
Let ${\cal A}=(A,\delta^A,\sigma^A,\tau^A)$ and ${\cal B}=(B,\delta^B,\sigma^B,\tau^B)$ be finite automata.~Define inductively a sequence $\{\varphi_k\}_{k\in \Bbb N}$ of relations between $A$ and $B$ as follows:
\begin{align}\label{eq:phi1}
&\varphi_1= (\sigma^A\rightarrow\sigma^B)\cap (\tau^A\leftarrow\tau^B), \\
\label{eq:phi2}
&\varphi_{k+1}=\varphi_k\cap \bigcap_{x\in X}\left((\delta_x^A\circ\varphi_k)\backslash \delta_x^B)\cap
((\varphi_k\circ\delta_x^B)/ \delta_x^A)\right).
\end{align}
Then $\{\varphi_k\}_{k\in \Bbb N}$ is a non-increasing sequence of relations and there exists $k\in \Bbb N$ such that $\varphi_k=\varphi_{k+1}$.

The relation $\varphi_k$ is the greatest relation between $A$ and $B$ which satisfies conditions $(\ref{eq:bs.nda.s})$, $(\ref{eq:bs.lts})$, $(\ref{eq:fb.lts.i})$, and $(\ref{eq:fb.nda.ti})$.~Moreover,~if $\varphi_k$ satisfies conditions $(\ref{eq:bs.nda.t})$ and $(\ref{eq:fb.nda.si})$, then $\varphi_k$ is the greatest backward-forward bisimulation between $\cal A$ and $\cal B$, and other\-wise, if $\varphi_k$ does not satisfy these conditions, then there is no any backward-forward bisimulation between $\cal A$~and~$\cal B$.
\end{theorem}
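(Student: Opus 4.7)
The plan is to mirror the structure of the proof of Theorem \ref{th:gfb.alg}, replacing the ``arrow''/residual ingredients with those suited to the backward-forward setting. First, since $\varphi_{k+1}\subseteq\varphi_k$ by construction and $A\times B$ is finite, the sequence must stabilize, so there exists $k$ with $\varphi_k=\varphi_{k+1}$; set $\varphi=\varphi_k$.

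Next I would show that $\varphi$ is the greatest relation from $A$ to $B$ satisfying the four ``upper bound'' conditions (\ref{eq:bs.nda.s}), (\ref{eq:bs.lts}), (\ref{eq:fb.lts.i}), and (\ref{eq:fb.nda.ti}). By Lemma \ref{le:arrows}(a) applied with $\eta=\sigma^A$, $\xi=\sigma^B$, the relation $\sigma^A\rightarrow\sigma^B$ is the greatest solution of $\sigma^A\circ\chi\subseteq\sigma^B$, and by Lemma \ref{le:arrows}(b) applied with $\eta=\tau^A$, $\xi=\tau^B$, the relation $\tau^A\leftarrow\tau^B$ is the greatest solution of $\chi\circ\tau^B\subseteq\tau^A$. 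Hence $\varphi_1$ is exactly the greatest relation satisfying (\ref{eq:bs.nda.s}) and (\ref{eq:fb.nda.ti}). From $\varphi=\varphi_{k+1}$ it follows that for every $x\in X$, $\varphi\subseteq(\delta_x^A\circ\varphi)\backslash\delta_x^B$ and $\varphi\subseteq(\varphi\circ\delta_x^B)/\delta_x^A$, which by Lemma \ref{le:residuals}(b) and \ref{le:residuals}(a) give $\varphi\circ\delta_x^B\subseteq\delta_x^A\circ\varphi$ and $\delta_x^A\circ\varphi\subseteq\varphi\circ\delta_x^B$, i.e., (\ref{eq:fb.lts.i}) and (\ref{eq:bs.lts}). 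Conversely, if $\psi$ satisfies all four upper-bound conditions, one shows $\psi\subseteq\varphi_i$ by induction on $i$: the base case $\psi\subseteq\varphi_1$ was just handled, and assuming $\psi\subseteq\varphi_i$, monotonicity (\ref{eq:comp.mon}) yields $\psi\circ\delta_x^B\subseteq\delta_x^A\circ\psi\subseteq\delta_x^A\circ\varphi_i$ and $\delta_x^A\circ\psi\subseteq\psi\circ\delta_x^B\subseteq\varphi_i\circ\delta_x^B$, whence Lemma \ref{le:residuals} gives $\psi\subseteq(\delta_x^A\circ\varphi_i)\backslash\delta_x^B$ and $\psi\subseteq(\varphi_i\circ\delta_x^B)/\delta_x^A$; intersecting with $\psi\subseteq\varphi_i$ gives $\psi\subseteq\varphi_{i+1}$.

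Finally, I would dispose of the remaining two ``lower bound'' conditions. If $\varphi$ satisfies (\ref{eq:bs.nda.t}) and (\ref{eq:fb.nda.si}), then $\varphi$ satisfies all six defining inequalities and so is a backward-forward bisimulation, necessarily the greatest one by the preceding paragraph. On the other hand, if some backward-forward bisimulation $\psi$ between $\cal A$ and $\cal B$ exists, then in particular it satisfies the four upper-bound conditions, so $\psi\subseteq\varphi$; then by (\ref{eq:comp.mon}) we get $\tau^A\subseteq\psi\circ\tau^B\subseteq\varphi\circ\tau^B$ and $\sigma^B\subseteq\sigma^A\circ\psi\subseteq\sigma^A\circ\varphi$, so $\varphi$ must satisfy (\ref{eq:bs.nda.t}) and (\ref{eq:fb.nda.si}). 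Contrapositively, failure of either condition for $\varphi$ rules out the existence of any backward-forward bisimulation.

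The only real subtlety is bookkeeping: keeping straight which of the six defining inequalities of a backward-forward bisimulation are ``upper bound'' constraints on $\varphi$ amenable to residuation (the four handled by $\varphi_1$ and the iteration) versus ``lower bound'' constraints (\ref{eq:bs.nda.t}) and (\ref{eq:fb.nda.si}) that can only be verified a posteriori. This asymmetry is also the reason the initial approximation $\varphi_1$ uses $(\sigma^A\rightarrow\sigma^B)\cap(\tau^A\leftarrow\tau^B)$ rather than a double-arrow, contrary to the forward-bisimulation case where all of (\ref{eq:fs.nda.t}) and (\ref{eq:fb.nda.ti}) were of upper-bound type and combined into $\tau^A\leftrightarrow\tau^B$.
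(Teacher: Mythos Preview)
Your proof is correct and follows exactly the approach the paper intends: the paper does not give a separate proof of this theorem but simply states that it ``can be proved in a similar way as Theorem \ref{th:gfb.alg}'', and your argument carries out precisely that analogy, with the appropriate substitutions of Lemma \ref{le:arrows}(a),(b) and Lemma \ref{le:residuals}(a),(b) in place of the corresponding instances used for forward bisimulations. Your closing remark explaining why $\varphi_1$ involves the one-sided arrows rather than $\leftrightarrow$ is a nice clarification of the asymmetry between the homotypic and heterotypic cases.
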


The following example shows the case when there is a backward-forward bisimulation, but there is no a forward bisimulation between two automata.

\begin{example}\rm
Let ${\cal A}=(A,\delta^A,\sigma^A,\tau^A)$ and ${\cal B}=(B,\delta^B,\sigma^B,\tau^B)$ be
automata with $|A|=2$, $|B|=3$ and $X=\{x,y\}$, whose transition relations and sets of
initial and terminal states are represented by the following Boolean matrices and vectors:
\[
\begin{aligned}
&\delta_x^A=\begin{bmatrix}
1 & 0 \\
1 & 1 \\
\end{bmatrix},\ \ \  \
\delta_y^A=\begin{bmatrix}
1 & 0 \\
1 & 0 \\
\end{bmatrix},\ \ \  \
\delta_x^B=\begin{bmatrix}
1 & 0 & 0 \\
0 & 1 & 1 \\
0 & 0 & 1 \\
\end{bmatrix},\ \ \  \
\delta_y^B=\begin{bmatrix}
1 & 0 & 1 \\
1 & 0 & 0 \\
0 & 0 & 0\end{bmatrix} \\
&\sigma^{A}=\begin{bmatrix}
1  &
0\end{bmatrix},\ \ \  \
\sigma^B=\begin{bmatrix}
1  &
0  &
1
\end{bmatrix},\ \ \  \
\tau^A=\begin{bmatrix}
0  \\
1  \\
\end{bmatrix},\ \ \  \
\tau^B=\begin{bmatrix}
0  \\
1  \\
0  \\
\end{bmatrix}.
\end{aligned}
\]
Using the procedure from Theorem \ref{th:gbfb.alg} we obtain that
\[
\varphi=\begin{bmatrix}
1 & 0 & 1 \\
1 & 1 & 0 \\
\end{bmatrix},
\]
is the greatest backward-forward bisimulation between $\cal A$ and $\cal B$. On the other hand, using the procedure from Theorem \ref{th:gfb.alg} we obtain that there is no a forward bisimulation between  ${\cal A}$ and ${\cal B}$.

Moreover, it is easy to verify that $\varphi $ is not a partial uniform relation, which confirms our ascertainment given immediately before Theorem \ref{th:gbfb}.
\end{example}

\section{Uniform forward bisimulations}

In this section we deal with forward bisimulations which are uniform relations.~First we show that~within the class of uniform relations forward bisimulations can be characterized by means of equalities.

\begin{theorem}\label{th:ufb0}
Let ${\cal A}=(A,\delta^A,\sigma^A,\tau^A)$ and ${\cal B}=(B,\delta^B,\sigma^B,\tau^B)$ be
automata and let $\varphi \subseteq A\times B$ be a uniform relation.
Then $\varphi $ is a forward bisimulation if and only if the following hold:
\begin{align}
\sigma^A\circ\varphi\circ\varphi^{-1}&=\sigma^B\circ\varphi^{-1}, & \sigma^A\circ\varphi&=\sigma^B\circ\varphi^{-1}\circ\varphi, &&\label{eq:ufb0.3} \\
\delta_x^A\circ \varphi\circ \varphi^{-1}&=\varphi \circ \delta_x^B\circ \varphi^{-1},&
\varphi^{-1}\circ \delta_x^A\circ \varphi&=\delta_x^B\circ \varphi^{-1}\circ \varphi ,&\ \ \ \
&\text{for every $x\in X$}, \label{eq:ufb0.1} \\
\tau^A&=\varphi\circ \tau^B  ,&\ \ \varphi^{-1}\circ \tau^A&= \tau^B.& & \label{eq:ufb0.4}
\end{align}
\end{theorem}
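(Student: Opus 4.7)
The plan is to treat this as a symbolic calculation in the relational calculus, using Theorem~\ref{th:ur} to unlock three key identities for a uniform relation $\varphi$: both $\varphi\circ\varphi^{-1}=E_A^\varphi$ and $\varphi^{-1}\circ\varphi=E_B^\varphi$ are reflexive equivalences, and $\varphi\circ\varphi^{-1}\circ\varphi=\varphi$ (so also $\varphi^{-1}\circ\varphi\circ\varphi^{-1}=\varphi^{-1}$). Every derivation below is just an exercise in inserting/removing these idempotent ``connectors'' at the right places.

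For the forward direction, assume $\varphi$ satisfies $(\ref{eq:fs.nda.s})$--$(\ref{eq:fs.nda.t})$ and $(\ref{eq:fb.nda.si})$--$(\ref{eq:fb.nda.ti})$. Each of the six equalities in $(\ref{eq:ufb0.3})$--$(\ref{eq:ufb0.4})$ is handled by verifying the two inclusions separately. For instance, to obtain $\sigma^A\circ\varphi\circ\varphi^{-1}=\sigma^B\circ\varphi^{-1}$: the inclusion $\subseteq$ follows by postcomposing $(\ref{eq:fs.nda.s})$ with $\varphi\circ\varphi^{-1}$ and collapsing $\varphi^{-1}\circ\varphi\circ\varphi^{-1}=\varphi^{-1}$; the inclusion $\supseteq$ follows by postcomposing $(\ref{eq:fb.nda.si})$ with $\varphi^{-1}$. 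The transition equalities $\delta_x^A\circ\varphi\circ\varphi^{-1}=\varphi\circ\delta_x^B\circ\varphi^{-1}$ and $\varphi^{-1}\circ\delta_x^A\circ\varphi=\delta_x^B\circ\varphi^{-1}\circ\varphi$ follow by the same template: one inclusion is direct postcomposition of $(\ref{eq:fs.lts})$ or $(\ref{eq:fb.lts.i})$, while the reverse inclusion is obtained by first enlarging the expression via reflexivity of $E_A^\varphi$ or $E_B^\varphi$ (e.g.\ $\delta_x^A\circ\varphi\circ\varphi^{-1}\subseteq \varphi\circ\varphi^{-1}\circ\delta_x^A\circ\varphi\circ\varphi^{-1}$), applying the appropriate bisimulation inclusion, and finally collapsing with $\varphi\circ\varphi^{-1}\circ\varphi=\varphi$. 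The terminal equalities $\tau^A=\varphi\circ\tau^B$ and $\varphi^{-1}\circ\tau^A=\tau^B$ work the same way, using reflexivity of $E_A^\varphi,E_B^\varphi$ for the nontrivial inclusions.

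For the converse, assume the six equalities. The conditions $(\ref{eq:fs.nda.t})$ and $(\ref{eq:fb.nda.ti})$ are literally read off from $(\ref{eq:ufb0.4})$. For the initial conditions we use reflexivity twice: $\sigma^A\subseteq \sigma^A\circ(\varphi\circ\varphi^{-1})=\sigma^B\circ\varphi^{-1}$ by $(\ref{eq:ufb0.3})$, giving $(\ref{eq:fs.nda.s})$; symmetrically $\sigma^B\subseteq \sigma^B\circ(\varphi^{-1}\circ\varphi)=\sigma^A\circ\varphi$, giving $(\ref{eq:fb.nda.si})$. For the transition conditions: postcompose $\varphi^{-1}\circ\delta_x^A$ by the reflexive relation $\varphi\circ\varphi^{-1}$ and apply the second equality in $(\ref{eq:ufb0.1})$, then collapse, to obtain $\varphi^{-1}\circ\delta_x^A\subseteq \delta_x^B\circ\varphi^{-1}$, i.e.\ $(\ref{eq:fs.lts})$; analogously, precompose $\delta_x^B\circ \varphi^{-1}$ on the right of $\varphi\circ\delta_x^B$ (via reflexivity of $E_B^\varphi$) and invoke the first equality in $(\ref{eq:ufb0.1})$ to deduce $(\ref{eq:fb.lts.i})$.

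No step is conceptually hard: there are no case distinctions, no induction, and no combinatorial argument. The only obstacle is bookkeeping, that is, making sure each of the twelve verifications inserts or deletes the correct copy of $\varphi\circ\varphi^{-1}$ or $\varphi^{-1}\circ\varphi$ on the correct side, and that the collapsing identity $\varphi\circ\varphi^{-1}\circ\varphi=\varphi$ is applied in the right direction. Once the two ``dictionary'' moves --- inserting a reflexive $E_A^\varphi$ or $E_B^\varphi$, and collapsing a triple product of $\varphi$'s --- are made explicit, every case reduces to a one-line composition chain.
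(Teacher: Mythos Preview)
Your proposal is correct and follows essentially the same relational-calculus strategy as the paper: both directions are handled by inserting the reflexive equivalences $\varphi\circ\varphi^{-1}$ or $\varphi^{-1}\circ\varphi$ where needed and collapsing via $\varphi\circ\varphi^{-1}\circ\varphi=\varphi$.

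There is one small organizational difference worth noting. In the forward direction, for the transition equalities (\ref{eq:ufb0.1}) and the terminal equalities (\ref{eq:ufb0.4}), the paper first invokes Theorem~\ref{th:ur} and Lemma~\ref{le:comp.union} to conclude that $\varphi\circ\varphi^{-1}$ is a forward bisimulation equivalence on $\cal A$, and then quotes the resulting identities (\ref{eq:fb.on.2}) and (\ref{eq:fb.ont}) to obtain the needed equality $\delta_x^A\circ\varphi\circ\varphi^{-1}=\varphi\circ\varphi^{-1}\circ\delta_x^A\circ\varphi\circ\varphi^{-1}$ and $\tau^A=\varphi\circ\varphi^{-1}\circ\tau^A$. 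You bypass this detour and run the same composition chain directly, using only the reflexivity inclusion $\delta_x^A\circ\varphi\circ\varphi^{-1}\subseteq\varphi\circ\varphi^{-1}\circ\delta_x^A\circ\varphi\circ\varphi^{-1}$, which is all that is actually required. Your route is marginally more self-contained; the paper's route has the advantage of recording the intermediate structural fact for later reuse. The computations themselves are line-for-line the same.
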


\begin{proof} Let $\varphi $ be a forward bisimulation.~By (\ref{eq:comp.mon}), (\ref{eq:fs.nda.s}), and (\ref{eq:fb.nda.si}), we obtain $\sigma^A\circ\varphi\subseteq \sigma^B\circ\varphi^{-1}\circ\varphi\subseteq
 \sigma^A\circ\varphi $, so $\sigma^A\circ\varphi=\sigma^B\circ\varphi^{-1}\circ\varphi$, and by this it follows that
$\sigma^A\circ\varphi\circ\varphi^{-1}=\sigma^B\circ\varphi^{-1}\circ\varphi\circ\varphi^{-1}=\sigma^B\circ\varphi^{-1}$.

Next, by Theorem \ref{th:ur} and Lemma \ref{le:comp.union} we obtain that $\varphi\circ\varphi^{-1}$ is a forward bisimulation
equivalence on $\cal A$, and according to (\ref{eq:fb.on.2}), for every $x\in X$ we have
\[
\varphi\circ\delta_x^B\circ\varphi^{-1}\subseteq \delta_x^A\circ\varphi\circ\varphi^{-1}=
\varphi\circ\varphi^{-1}\circ\delta_x^A\circ\varphi\circ\varphi^{-1}\subseteq
\varphi\circ\delta_x^B\circ\varphi^{-1}\circ\varphi\circ\varphi^{-1} = \varphi\circ\delta_x^B\circ\varphi^{-1}.
\]
Therefore, $\delta_x^A\circ\varphi\circ\varphi^{-1}=\varphi\circ\delta_x^B\circ\varphi^{-1}$.~In a similar way
we show that $\varphi^{-1}\circ \delta_x^A\circ \varphi = \delta_x^B\circ
\varphi^{-1}\circ \varphi$.

Finally, since $\varphi\circ\varphi^{-1}$ is a forward bisimulation equivalence on $\cal A$, by (\ref{eq:fb.ont}), (\ref{eq:fs.nda.t}), (\ref{eq:comp.mon}), and (\ref{eq:fb.nda.ti}), we obtain that $\tau^A=\varphi\circ\varphi^{-1}\circ\tau^A \subseteq \varphi\circ\tau^B\subseteq \tau^A$, and hence, $\tau^A=\varphi\circ\tau^B$.~Similarly we show that $\varphi^{-1}\circ\tau^A=\tau^B$.

Therefore, we have proved that (\ref{eq:ufb0.3})--(\ref{eq:ufb0.4}) are true.

Conversely, let (\ref{eq:ufb0.3})--(\ref{eq:ufb0.4}) hold.~By the reflexivity of $\varphi\circ \varphi^{-1}$ and (\ref{eq:ufb0.3})
we have $\sigma^A\subseteq \sigma^A\circ\varphi\circ\varphi^{-1}= \sigma^B\circ\varphi^{-1}$, and thus, (\ref{eq:fs.nda.s}) holds.
Furthermore, by the reflexivity of $\varphi\circ \varphi^{-1}$, (\ref{eq:comp.mon}), and (\ref{eq:ufb0.1}), for each $x\in X$ we have that
\[
\varphi^{-1}\circ\delta_x^A\subseteq \varphi^{-1}\circ\delta_x^A\circ\varphi\circ\varphi^{-1}=
\delta_x^B\circ\varphi^{-1}\circ\varphi\circ\varphi^{-1} = \delta_x^B\circ\varphi^{-1},
\]
so $\varphi^{-1}\circ\delta_x^A\subseteq \delta_x^B\circ\varphi^{-1}$, and similarly,
$\varphi\circ\delta_x^B\subseteq \delta_x^A\circ\varphi $.~Finally, it is clear that (\ref{eq:ufb0.4}) implies
(\ref{eq:fs.nda.t}) and (\ref{eq:fb.nda.ti}).~Therefore, we have proved that $\varphi $ is a forward bisimulation.
\end{proof}

Because of the symmetry in (\ref{eq:ufb0.3}) we have included two equalities, although any of them is sufficient,~while the other is unnecessary.~For instance, if
$\sigma^A\circ\varphi\circ\varphi^{-1}=\sigma^B\circ\varphi^{-1}$ then $\sigma^A\circ\varphi =\sigma^A\circ\varphi\circ\varphi^{-1}\circ\varphi = \sigma^B\circ\varphi^{-1}\circ\varphi$, and similarly we show that the second equality implies the first one.

The following theorem is one of the main results of this article.~It gives a characterization of uniform forward bisimulations in terms of the properties of their kernels, cokernels, and related factor automata.

\begin{theorem}\label{th:ufb}
Let ${\cal A}=(A,\delta^A,\sigma^A,\tau^A)$ and ${\cal B}=(B,\delta^B,\sigma^B,\tau^B)$ be automata and
let $\varphi \subseteq A\times B$ be a uniform relation. Then $\varphi $ is a forward bisimulation if and only if
the following hold:
\begin{itemize}\parskip=0pt\itemindent6pt
\item[{\rm (i)}] $E_A^\varphi $ is a forward bisimulation equivalence on $\cal A$;
\item[{\rm (ii)}] $E_B^\varphi $ is a forward bisimulation equivalence on $\cal B$;
\item[{\rm (iii)}]  $\widetilde \varphi $ is an isomorphism of factor automata ${\cal A}/E_A^\varphi $ and ${\cal B}/E_B^\varphi $.
\end{itemize}
\end{theorem}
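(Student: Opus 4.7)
The plan is to treat Theorem~\ref{th:ufb0} as the main bridge, so that within the class of uniform relations forward bisimulations are characterized by the equations (\ref{eq:ufb0.3})--(\ref{eq:ufb0.4}). Throughout, I will use the identities supplied by uniformity of $\varphi$: $\varphi\circ\varphi^{-1}=E_A^\varphi$ and $\varphi^{-1}\circ\varphi=E_B^\varphi$ (Theorem~\ref{th:ur}), together with the absorbing identity $\varphi\circ\varphi^{-1}\circ\varphi=\varphi$, and the bijectivity of $\widetilde\varphi:A/E_A^\varphi\to B/E_B^\varphi$ with formula $\widetilde\varphi(E_a)=F_{f(a)}$ for any $f\in FD(\varphi)$ (Theorem~\ref{th:ur2}).

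For the forward direction, suppose $\varphi$ is a forward bisimulation. Since forward bisimulations are closed under inverse by definition, $\varphi^{-1}$ is also a forward bisimulation, and Lemma~\ref{le:comp.union} shows that $\varphi\circ\varphi^{-1}=E_A^\varphi$ is a forward bisimulation on $\cal A$; being an equivalence, it lies in ${\cal E}^{\mathrm{fb}}({\cal A})$, giving (i). Statement (ii) follows symmetrically from $\varphi^{-1}\circ\varphi=E_B^\varphi$. For (iii), Theorem~\ref{th:ur2} yields bijectivity of $\widetilde\varphi$, so it remains to verify (\ref{eq:iso1})--(\ref{eq:iso3}). Unfolding the factor-automaton definitions (\ref{eq:dE1})--(\ref{eq:tauAE}) and the formula for $\widetilde\varphi$, these reduce to equivalences of the form $(a_1,a_2)\in E_A^\varphi\circ\delta_x^A\circ E_A^\varphi$ iff $(f(a_1),f(a_2))\in E_B^\varphi\circ\delta_x^B\circ E_B^\varphi$, and analogous statements for $\sigma$ and $\tau$. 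Each is then a short diagram chase using $(a,f(a))\in\varphi$, uniformity, and the equations of Theorem~\ref{th:ufb0}, particularly $\delta_x^A\circ\varphi\circ\varphi^{-1}=\varphi\circ\delta_x^B\circ\varphi^{-1}$, $\sigma^A\circ\varphi=\sigma^B\circ\varphi^{-1}\circ\varphi$, and $\tau^A=\varphi\circ\tau^B$.

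For the converse, assume (i)--(iii); the task is to verify the six equalities of Theorem~\ref{th:ufb0}. The terminal equalities $\tau^A=\varphi\circ\tau^B$ and $\varphi^{-1}\circ\tau^A=\tau^B$ follow quickly from (iii) combined with (\ref{eq:fb.ont}) applied to (i) and (ii): for any $a\in A$ and $f\in FD(\varphi)$ one has $a\in\tau^A\iff E_a\in\tau^{{\cal A}/E_A^\varphi}\iff F_{f(a)}\in\tau^{{\cal B}/E_B^\varphi}\iff f(a)\in\tau^B$, and the two desired equalities drop out by pairing this equivalence with $(a,f(a))\in\varphi$ and the uniformity identity. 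The initial-state equalities are proved analogously, using the definitions (\ref{eq:sigmaAE}) together with $\widetilde\varphi(E_a)=F_{f(a)}$ and the identity $\varphi\circ\varphi^{-1}\circ\varphi=\varphi$ to absorb extra factors of $E_A^\varphi$ and $E_B^\varphi$.

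The main obstacle is the transition equalities in (\ref{eq:ufb0.1}). Here I will first convert (iii) into the elementwise statement $(a_1,a_2)\in \delta_x^A\circ E_A^\varphi\iff (f(a_1),f(a_2))\in\delta_x^B\circ E_B^\varphi$, using (\ref{eq:dE1}) and applying the FB-equivalence identity (\ref{eq:fb.on.2}) $E\circ\delta_x\circ E=\delta_x\circ E$ that (i) and (ii) provide on each side. Then the equality $\delta_x^A\circ\varphi\circ\varphi^{-1}=\varphi\circ\delta_x^B\circ\varphi^{-1}$ will follow by two symmetric elementwise arguments: given $(a_1,a_2)$ on one side, insert $f(a_1),f(a_2)$ via $\varphi$, push through the converted iso equivalence, and then use $\varphi\circ\varphi^{-1}\circ\varphi=\varphi$ to collapse leftover factors. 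The partner identity $\varphi^{-1}\circ\delta_x^A\circ\varphi=\delta_x^B\circ\varphi^{-1}\circ\varphi$ follows in the same way. Once all the equations of Theorem~\ref{th:ufb0} are established, that theorem delivers the conclusion that $\varphi$ is a forward bisimulation.
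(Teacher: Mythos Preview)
Your proposal is correct and follows essentially the same route as the paper: both directions pivot on Theorem~\ref{th:ufb0}, with the converse verifying (\ref{eq:ufb0.3})--(\ref{eq:ufb0.4}) via the factor-automaton isomorphism and the identity $E\circ\delta_x\circ E=\delta_x\circ E$ from (i), (ii). The only cosmetic difference is that for (i) and (ii) in the forward direction you invoke Lemma~\ref{le:comp.union} to see that $\varphi\circ\varphi^{-1}$ and $\varphi^{-1}\circ\varphi$ are forward bisimulations, whereas the paper computes $E\circ\delta_x^A\circ E=\delta_x^A\circ E$ and $E\circ\tau^A=\tau^A$ directly from the equations of Theorem~\ref{th:ufb0}; both arguments are equally short.
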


\begin{proof}
For the sake of simplicity set $E_A^\varphi =E$ and $E_B^\varphi =F$.

Let $\varphi $ be a forward bisimulation. According to Theorem \ref{th:ufb0}, for every $x\in X$ we have that
\[
E\circ \delta_x^A\circ E =\varphi\circ\varphi^{-1}\circ\delta_x^A\circ \varphi\circ\varphi^{-1} =
\varphi\circ\delta_x^B\circ\varphi^{-1}\circ\varphi\circ\varphi^{-1} = \varphi\circ\delta_x^B\circ\varphi^{-1}=
\delta_x^A\circ\varphi\circ\varphi^{-1}=\delta_x^A\circ E,
\]
and also, $E\circ \tau^A = \varphi\circ\varphi^{-1}\circ \tau^A= \varphi\circ \tau^B= \tau^A $.~The inclusion $\sigma^A\subseteq \sigma^A\circ E$ is evident.~Hence, $E=E_A^\varphi $~is a forward bisimulation equivalence~on~$\cal A$. Likewise, $F=E_B^\varphi $ is a forward bisimulation equivalence on $\cal B$.

By Theorem \ref{th:ur2}, $\widetilde\varphi $ is a bijective function. Next, for any
$a_1,a_2\in A$, $x\in X$ and $f\in FD(\varphi)$ we have that
\[
\begin{aligned}
(E_{a_1},E_{a_2})\in \delta_x^{A/E}\ &\iff\ (a_1,a_2)\in E\circ \delta_x^A\circ E\ \iff\ (a_1,a_2)\in \varphi \circ \delta_x^B\circ \varphi^{-1} \\
&\iff\ (\exists b_1,b_2\in B)\,\bigl( (a_1,b_1)\in\varphi \land (b_1,b_2)\in \delta_x^B \land (a_2,b_2)\in \varphi \bigr) \\
&\iff\ (\exists b_1,b_2\in B)\,\bigl( (f(a_1),b_1)\in F \land (b_1,b_2)\in \delta_x^B \land (f(a_2),b_2)\in F \bigr) \\
&\iff\ (f(a_1),f(a_2))\in F\circ\delta_x^B\circ F \ \iff\ (F_{f(a_1)},F_{f(a_2)})\in \delta_x^{B/F} \\
&\iff\ (\widetilde\varphi(E_{a_1}),\widetilde\varphi(E_{a_2}))\in \delta_x^{B/F}.
\end{aligned}
\]
and for any $a\in A$ and $f\in FD(\varphi)$ we have
\[
\begin{aligned}
E_a\in \sigma^{A/E}\ &\iff\ a\in \sigma^A\circ E \ \iff\
(\exists a'\in A)\,\bigl(a'\in \sigma^A \land (a',a)\in E\bigr) \\
&\iff\ (\exists a'\in A)\,\bigl(a'\in \sigma^A \land (a',f(a))\in \varphi\bigr)\ \iff\ f(a)\in
\sigma^A\circ \varphi= \sigma^B\circ\varphi^{-1}\circ\varphi =\sigma^B\circ F \\
&\iff\ F_{f(a)}\in \sigma^{B/F} \ \iff \ \widetilde\varphi (E_a)\in \sigma^{B/F} , \\
E_a\in \tau^{A/E}\ &\iff\ a\in E\circ\tau ^A\ \iff\ (\exists a'\in A)\,\bigl((a,a')\in E \land a'\in \tau^A\bigr)\\
&\iff\ (\exists a'\in A)\,\bigl((f(a),a')\in \varphi^{-1} \land a'\in \tau^A\bigr)\ \iff\ f(a)\in \varphi^{-1}\circ\tau^A=\varphi^{-1}\circ \varphi\circ\tau^B =F\circ \tau^B \\
&\iff\ F_{f(a)}\in \tau^{B/F} \ \iff \ \widetilde\varphi (E_a)\in \tau^{B/F} . \\
\end{aligned}
\]
Therefore, $\widetilde\varphi $ is an isomorphism between automata
${\cal A}/E$ and ${\cal B}/F$.

Conversely, let (i), (ii) and (iii) hold.~According to (i), for each $x\in X$ we have
\[
E\circ\delta_x^A\circ E=\delta_x^A\circ E= \delta_x^A\circ\varphi\circ\varphi^{-1},
\]
and by (iii), for arbitrary $a_1,a_2\in A$ and $f\in FD(\varphi )$ we~obtain that
\[
\begin{aligned}
(a_1,a_2)\in \delta_x^A\circ\varphi\circ\varphi^{-1}\ &\iff\ (a_1,a_2)\in E\circ \delta_x^A\circ E
\ \iff\ (E_{a_1},E_{a_2})\in \delta_x^{A/E} \\
&\iff\ (\widetilde\varphi(E_{a_1}),\widetilde\varphi(E_{a_2}))\in \delta_x^{B/F}\ \iff\
(F_{f(a_1)},F_{f(a_2)})\in \delta_x^{B/F}\\
&\iff\
(f(a_1),f(a_2))\in F\circ \delta_x^B\circ F \\
&\iff\ (\exists b_1,b_2\in B)\, \bigl( (f(a_1),b_1)\in F  \land (b_1,b_2)\in \delta_x^B \land
(f(a_2),b)\in F\bigr) \\
&\iff\ (\exists b_1,b_2\in B)\, \bigl( (a_1,b_1)\in\varphi \land (b_1,b_2)\in \delta_x^B \land
(a_2,b)\in \varphi\bigr) \\
&\iff\ (a_1,a_2)\in \varphi\circ\delta_x^B\circ\varphi^{-1}.
\end{aligned}
\]
Therefore, the first equality in (\ref{eq:ufb0.1}) holds. In a similar way we prove the second equality in (\ref{eq:ufb0.1}).

Next, for every $a\in A$ we have that
\[
\begin{aligned}
a\in \sigma^A\circ\varphi\circ\varphi^{-1}\ &\iff\ a\in \sigma^A\circ E\ \iff\ E_a\in \sigma^{A/E}\ \iff\
\widetilde\varphi (E_a)\in \sigma^{B/F}\ \iff\ F_{f(a)}\in \sigma^{B/F}\\
&\iff\ f(a)\in \sigma^B\circ F\ \iff\ (\exists b\in B)\,\bigl(b\in \sigma^B \land (f(a),b)\in F \bigr)\\
&\iff\ (\exists b\in B)\,\bigl(b\in \sigma^B \land (a,b)\in \varphi \bigr)\ \iff\ a\in \sigma^B\circ \varphi^{-1},
\end{aligned}
\]
so $\sigma^A\circ\varphi\circ\varphi^{-1}=\sigma^B\circ \varphi^{-1}$, and hence,
$\sigma^A\circ\varphi=\sigma^B\circ \varphi^{-1}\circ\varphi$. For every $a\in A$ we also have
\[
\begin{aligned}
a\in \tau^A\ &\iff\ a\in E\circ\tau^A\ \iff\ E_a\in \tau^{A/E}\ \iff\ \widetilde\varphi(E_a)\in\tau^{B/F}
\ \iff\ F_{f(a)}\in \tau^{B/F}\ \iff\ f(a)\in F\circ\tau^B\\
&\iff\ (\exists b\in B)\,\bigl((f(a),b)\in F\land b\in \tau^B\bigr)\ \iff\
(\exists b\in B)\,\bigl((a,b)\in \varphi\land b\in \tau^B\bigr)\ \iff\ a\in \varphi\circ\tau^B,
\end{aligned}
\]
whence $\tau^A=\varphi\circ\tau^B$. Likewise, $\tau^B=\varphi^{-1}\circ\tau^A$. Therefore, we have proved
that (\ref{eq:ufb0.3}) and (\ref{eq:ufb0.4}) also hold, and consequently, $\varphi $ is a forward bisimulation.
\end{proof}

The question that naturally arises is under what conditions two given forward~bisimulation equivalences on two automata determine a uniform forward bisimulation.~An answer to this question is given by the~fol\-lowing theorem.

\begin{theorem}\label{th:ufb.ex}
Let ${\cal A}=(A,\delta^A,\sigma^A,\tau^A)$ and ${\cal B}=(B,\delta^B,\sigma^B,\tau^B)$ be
automata,~and let $E$ and $F$ be forward~bisimulation equivalences on $\cal A$ and $\cal B$.

Then there exists a uniform forward bisimulation $\varphi\subseteq A\times
B$~such that $E_A^\varphi =E$ and $E_B^\varphi =F$ if and only if the factor~automata
${\cal A}/E$ and ${\cal B}/F$ are isomor\-phic.
\end{theorem}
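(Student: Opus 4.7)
The plan is to derive this theorem as an almost immediate corollary of the two preceding workhorse results: Theorem \ref{th:ur2} (which sets up the bijection between uniform relations with prescribed kernel/cokernel and bijective functions between the corresponding factor sets) and Theorem \ref{th:ufb} (which characterizes uniform forward bisimulations via conditions (i)--(iii) on kernels, cokernels, and the induced factor map). The whole proof reduces to matching up $\widetilde{\varphi}$ with the given isomorphism $\phi$.

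For the ``only if'' direction, assume $\varphi \subseteq A\times B$ is a uniform forward bisimulation with $E_A^\varphi = E$ and $E_B^\varphi = F$. Then condition (iii) of Theorem \ref{th:ufb} asserts precisely that $\widetilde{\varphi}$ is an isomorphism between ${\cal A}/E_A^\varphi = {\cal A}/E$ and ${\cal B}/E_B^\varphi = {\cal B}/F$, so ${\cal A}/E \cong {\cal B}/F$.

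For the ``if'' direction, suppose $\phi : A/E \to B/F$ is an isomorphism of factor automata. I would define $\varphi \subseteq A\times B$ exactly as in (\ref{eq:varphi.phi}) of Theorem \ref{th:ur2}, namely $(a,b)\in \varphi \iff \phi(E_a)=F_b$. Theorem \ref{th:ur2} then delivers three things at once: $\varphi$ is a uniform relation, $E_A^\varphi = E$ and $E_B^\varphi = F$, and $\widetilde{\varphi}=\phi$. At this point I verify the three conditions of Theorem \ref{th:ufb}: (i) $E_A^\varphi = E$ is a forward bisimulation equivalence on $\cal A$ by hypothesis; (ii) $E_B^\varphi = F$ is a forward bisimulation equivalence on $\cal B$ by hypothesis; (iii) $\widetilde{\varphi}=\phi$ is an isomorphism between ${\cal A}/E$ and ${\cal B}/F$, again by hypothesis. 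Theorem \ref{th:ufb} then concludes that $\varphi$ is a forward bisimulation, which completes the construction.

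The only point that requires attention, rather than being a genuine obstacle, is confirming that the $\widetilde{\varphi}$ produced by Theorem \ref{th:ur2} really does coincide with the given $\phi$ as a map of factor automata; but this is precisely the last line of the proof of Theorem \ref{th:ur2}, where it is shown that $\phi(E_a) = F_{f(a)} = \widetilde{\varphi}(E_a)$ for any $f \in FD(\varphi)$. Hence no new computation is needed and the theorem follows.
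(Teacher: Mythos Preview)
Your proof is correct and follows essentially the same approach as the paper: the forward direction is read off from condition (iii) of Theorem \ref{th:ufb}, and for the converse you construct $\varphi$ via (\ref{eq:varphi.phi}), invoke Theorem \ref{th:ur2} to obtain uniformity together with $E_A^\varphi=E$, $E_B^\varphi=F$, $\widetilde\varphi=\phi$, and then apply Theorem \ref{th:ufb}. The paper's proof is identical in substance, only slightly terser in that it does not spell out the verification of conditions (i)--(iii).
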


\begin{proof}
The direct part of the theorem is an immediate consequence of Theorem \ref{th:ufb}.

Conversely, let $\phi:A/E\to B/F$ be an isomorphism between factor automata ${\cal A}/E$ and
${\cal B}/F$. Let us define $\varphi \subseteq A\times B$ as in (\ref{eq:varphi.phi}), i.e.,
\[
(a,b)\in \varphi \ \iff\ \phi(E_a)=F_b, \ \ \text{for all $a\in A$ and $b\in B$}.
\]
By the proof of Theorem \ref{th:ur2}, $\varphi $ is a uniform relation such that $E=E_A^\varphi $,
$F=E_B^\varphi $ and $\phi=\widetilde\varphi $, and according to Theorem \ref{th:ufb}, $\varphi $
is a forward bisimulation.
\end{proof}

Next we prove the following.

\begin{theorem}\label{th:G:E.2}
Let ${\cal A}=(A,\delta^A,\sigma^A,\tau^A)$ be an automaton, let $E$ be a forward
bisimulation equivalence on $\cal A$, and let $F$ be an equivalence on $A$ such that $E\subseteq F$.

Then $F$ is a forward bisimulation equivalence on $\cal A$ if and only if $F/E$ is a forward bisimulation
equivalence~on~${\cal A}/E$.
\end{theorem}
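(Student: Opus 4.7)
The plan is to reduce the two conditions defining a forward bisimulation equivalence, namely (\ref{eq:fb.on}) and (\ref{eq:fb.ont}), to their analogues on ${\cal A}/E$ by direct element chasing. Throughout I would exploit two clean identities: since $E$ is itself a forward bisimulation equivalence, $E\circ \delta_x^A\circ E=\delta_x^A\circ E$ and $E\circ \tau^A=\tau^A$, which simplifies the description of the factor automaton to $(E_{a_1},E_{a_2})\in \delta_x^{A/E}\iff (a_1,a_2)\in \delta_x^A\circ E$ and $E_a\in \tau^{A/E}\iff a\in\tau^A$; and since $E\subseteq F$, the composition identity $E\circ F=F\circ E=F$ holds.

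For the forward direction, assume $F$ is a forward bisimulation equivalence on $\cal A$. I would verify (\ref{eq:fb.on}) for $F/E$ on ${\cal A}/E$: given $(E_{a_1},E_{a_2})\in (F/E)\circ \delta_x^{A/E}$, unfolding yields $(a_1,a_2)\in F\circ\delta_x^A\circ E$. Using the hypothesis that $F\circ \delta_x^A\subseteq \delta_x^A\circ F$, together with $F\circ E=F$, this gives $(a_1,a_2)\in \delta_x^A\circ F$, from which one extracts $a_4$ witnessing $(E_{a_1},E_{a_2})\in \delta_x^{A/E}\circ (F/E)$ (using that $\delta_x^A\subseteq \delta_x^A\circ E$ to lift the transition to the factor). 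For (\ref{eq:fb.ont}), I would show $(F/E)\circ \tau^{A/E}\subseteq \tau^{A/E}$ by unwinding to $a\in F\circ \tau^A=\tau^A$.

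For the converse, assume $F/E$ is a forward bisimulation equivalence on ${\cal A}/E$. Given $(a_1,a_2)\in F\circ \delta_x^A$, I would push it forward to the factor by producing an intermediate class witness, invoke the hypothesis on $F/E$ to obtain $(E_{a_1},E_{a_2})\in \delta_x^{A/E}\circ (F/E)$, and then pull back: this gives $(a_1,a_2)\in (\delta_x^A\circ E)\circ F=\delta_x^A\circ F$, again thanks to $E\circ F=F$. For terminal states, if $a\in F\circ \tau^A$, the same pushforward--pullback argument together with the description $E_a\in \tau^{A/E}\iff a\in \tau^A$ delivers $a\in \tau^A$.

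The arguments are routine element chases; the only pitfalls are the bookkeeping in translating relational compositions across the quotient and remembering to apply the simplification $E\circ \delta_x^A\circ E=\delta_x^A\circ E$ (valid because $E$ is a forward bisimulation equivalence) at the right moments. I expect the main subtlety to be making sure each side of the reduction uses the correct one-sided absorption identity for $E$, rather than the symmetric identity that would hold for backward bisimulation equivalences; this asymmetry is what makes the theorem a statement specifically about forward bisimulation equivalences.
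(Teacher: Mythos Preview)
Your proposal is correct and follows essentially the same approach as the paper. The paper streamlines matters slightly by first establishing the exact correspondences $(E_{a_1},E_{a_2})\in (F/E)\circ\delta_x^{A/E}\circ(F/E)\iff (a_1,a_2)\in F\circ\delta_x^A\circ F$ and $(E_{a_1},E_{a_2})\in \delta_x^{A/E}\circ(F/E)\iff (a_1,a_2)\in \delta_x^A\circ F$ (working with the equality form (\ref{eq:fb.on.2}) rather than the inclusion (\ref{eq:fb.on})), from which both directions follow at once; but the computations and the key simplifications $E\circ\delta_x^A\circ E=\delta_x^A\circ E$, $E\circ\tau^A=\tau^A$, and $E\circ F=F$ are identical to yours.
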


\begin{proof}
As in the proof of Theorem \ref{th:F:E}, set $F/E=P$.~For arbitrary $a_1,a_2\in A$ and $x\in  X$, by~the proof of Theorem \ref{th:F:E}
we obtain that
\[
(E_{a_1},E_{a_2})\in P\circ \delta_x^{A/E}\circ P \ \iff\ (a_1,a_2)\in F\circ \delta_x^A\circ F,
\]
and also,
\[
\begin{aligned}
(E_{a_1},E_{a_2})\in \delta_x^{A/E}\circ P\ &\iff\ (\exists a_3\in A)\ \bigl((E_{a_1},E_{a_3})\in \delta_x^{A/E}
\land (E_{a_3},E_{a_2})\in P \bigr)\\
&\iff\ (\exists a_3\in A)\ \bigl((a_1,a_3)\in E\circ \delta_x^A\circ E
\land (a_3,a_2)\in F \bigr)\\
&\iff\ (a_1,a_2)\in E\circ \delta_x^A\circ E\circ F \\
&\iff\ (a_1,a_2)\in \delta_x^A\circ F ,
\end{aligned}
\]
since $E\circ \delta_x^A\circ E\circ F=\delta_x^A\circ E\circ F=\delta_x^A\circ F$. Therefore,
\[
P\circ \delta_x^{A/E}\circ P = \delta_x^{A/E}\circ P \ \iff\ F\circ \delta_x^A\circ F = \delta_x^A\circ F .
\]
Furthermore, for an arbitrary $a\in A$ we have that
\[
\begin{aligned}
E_{a}\in P\circ \tau^{A/E}\ &\iff\ (\exists a'\in A)\ (E_a,E_{a'})\in P \land E_{a'}\in \tau^{A/E}&\iff\ (\exists a'\in A)\ (a,{a'})\in F \land {a'}\in E\circ\tau^{A}\\
&\iff\ a\in F\circ E\circ \tau^A=F\circ\tau^{A},
\end{aligned}
\]
and according to (\ref{eq:tauAE}) and (\ref{eq:fb.ont}), $E_a\in \tau^{A/E}\ \iff\ a\in E\circ \tau^A=\tau^A$.
Hence,
\[
P\circ \tau^{A/E}=\tau^{A/E}\ \iff\ F\circ \tau^A=\tau^A,
\]
proving our claim.
\end{proof}

In view of Theorem \ref{th:F:E}, the rule $F\mapsto F/E$ defines an isomorphism between lattices
${\cal E}_E(A)$ and ${\cal E}(A/E)$, for every $E\in {\cal E}(A)$.~According to Theorem \ref{th:G:E.2}, the same
rule determines an isomorphism~between lattices ${\cal E}_E^{\mathrm{fb}}({\cal A})$~and
${\cal E}^{\mathrm{fb}}({\cal A}/E)$,~where
${\cal E}_E^{\mathrm{fb}}({\cal A})=\{F\in {\cal E}^{\mathrm{fb}}({\cal A})\mid E\subseteq F\}$, for each
$E\in {\cal E}^{\mathrm{fb}}({\cal A})$.~

Consequently, the following is~true.

\begin{corollary}\label{cor:F:E.g}
Let ${\cal A}=(A,\delta^A,\sigma^A,\tau^A)$ be an automaton, and let $E$ and $F$ be
forward bisimulation equivalences on $\cal A$ such that $E\subseteq F$.

Then $F$ is the greatest forward bisimulation equivalence on $\cal A$ if and only if $F/E$ is the greatest
forward bisimulation equivalence on ${\cal A}/E$.
\end{corollary}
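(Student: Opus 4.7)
The plan is to deduce the corollary directly from the lattice-theoretic observation made in the paragraph preceding it. Concretely, by Theorem \ref{th:F:E-isom} the map $\Phi(G)=G/E$ is a lattice isomorphism from ${\cal E}_E(A)$ onto ${\cal E}(A/E)$, and Theorem \ref{th:G:E.2} asserts that for $G\in {\cal E}_E(A)$ one has $G\in {\cal E}_E^{\mathrm{fb}}({\cal A})$ if and only if $\Phi(G)\in {\cal E}^{\mathrm{fb}}({\cal A}/E)$. Therefore the restriction $\Phi|_{{\cal E}_E^{\mathrm{fb}}({\cal A})}$ is an order isomorphism of ${\cal E}_E^{\mathrm{fb}}({\cal A})$ onto ${\cal E}^{\mathrm{fb}}({\cal A}/E)$, and in particular it maps the greatest element of the former (if it exists) to the greatest element of the latter, and conversely.

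Next I would make the trivial but necessary observation that under the hypothesis $E\subseteq F$, being the greatest element of ${\cal E}^{\mathrm{fb}}({\cal A})$ is the same as being the greatest element of the sub-poset ${\cal E}_E^{\mathrm{fb}}({\cal A})$. Indeed, by Theorem \ref{th:lat.fbe} there is a greatest forward bisimulation equivalence $F^*$ on ${\cal A}$, and since $E\in {\cal E}^{\mathrm{fb}}({\cal A})$ we automatically have $E\subseteq F^*$, so $F^*\in {\cal E}_E^{\mathrm{fb}}({\cal A})$ and $F^*$ is also the top of this sub-poset. Hence for any $F\in {\cal E}_E^{\mathrm{fb}}({\cal A})$, $F=F^*$ as the top of ${\cal E}^{\mathrm{fb}}({\cal A})$ iff $F=F^*$ as the top of ${\cal E}_E^{\mathrm{fb}}({\cal A})$.

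Combining these two observations, $F$ is the greatest element of ${\cal E}^{\mathrm{fb}}({\cal A})$ iff $F$ is the top of ${\cal E}_E^{\mathrm{fb}}({\cal A})$ iff, via the order isomorphism $\Phi|_{{\cal E}_E^{\mathrm{fb}}({\cal A})}$, $F/E$ is the top of ${\cal E}^{\mathrm{fb}}({\cal A}/E)$, i.e.\ the greatest forward bisimulation equivalence on ${\cal A}/E$. This is exactly the claim of the corollary.

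I do not foresee a real obstacle here: the content has already been packaged into Theorems \ref{th:F:E-isom} and \ref{th:G:E.2}, and the corollary is the statement that a lattice isomorphism carries the top element to the top element. The only point requiring a moment of care is verifying that $\Phi$ restricts to a \emph{surjection} ${\cal E}_E^{\mathrm{fb}}({\cal A})\to {\cal E}^{\mathrm{fb}}({\cal A}/E)$, which one obtains by combining the surjectivity of $\Phi$ given in Theorem \ref{th:F:E-isom} with the "if" direction of Theorem \ref{th:G:E.2}: any $P\in {\cal E}^{\mathrm{fb}}({\cal A}/E)\subseteq {\cal E}(A/E)$ has a preimage $F\in {\cal E}_E(A)$ with $P=F/E$, and this $F$ is forced to lie in ${\cal E}_E^{\mathrm{fb}}({\cal A})$.
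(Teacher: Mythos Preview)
Your proposal is correct and follows essentially the same route as the paper, which simply cites Theorem~\ref{th:G:E.2} together with the order isomorphism (\ref{eq:ord.isom}) from Theorem~\ref{th:F:E-isom}. Your write-up is a bit more explicit (in particular the observation that the greatest element of ${\cal E}^{\mathrm{fb}}({\cal A})$ automatically lies above $E$ and hence in ${\cal E}_E^{\mathrm{fb}}({\cal A})$), but the argument is the same.
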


\begin{proof}
This follows immediately by Theorems \ref{th:G:E.2} and equation (\ref{eq:ord.isom}).
\end{proof}

\section{Forward bisimulation equivalent automata}

Let ${\cal A}=(A,\delta^A,\sigma^A,\tau^A)$ and ${\cal B}=(B,\delta^B,\sigma^B,\tau^B)$ be
automata.~If there exists~a complete and surjective forward bisimula\-tion from
$\cal A$ to $\cal B$, then we say that $\cal A$ and $\cal B$ are {\it forward bisimulation~equi\-valent\/},
or briefly {\it FB-equivalent\/}, and we write ${\cal A}\sim_{FB}{\cal B}$.~Notice that completeness
and surjectivity of this forward bisimulation mean that every state of $\cal A$ is equivalent to some state
of $\cal B$, and vice versa.~For
any automata $\cal A$, $\cal B$
and $\cal C$ we have that
\begin{equation}\label{eq:fbe.eq}
{\cal A}\sim_{FB}{\cal A};\ \ \ \ {\cal A}\sim_{FB}{\cal B} \implies {\cal B}\sim_{FB}{\cal A};\ \ \ \
\bigl({\cal A}\sim_{FB}{\cal B} \land {\cal B}\sim_{FB}{\cal C}\bigr) \implies {\cal A}\sim_{FB}{\cal C}.
\end{equation}
Similarly, we call $\cal A$ and $\cal B$ {\it backward bisimulation equivalent\/}, briefly {\it BB-equiva\-lent\/}, in notation
${\cal A}\sim_{BB}{\cal B}$, if there exists a complete and surjective backward bisimulation from
$\cal A$ to $\cal B$.

First we prove that every automaton $\cal A$ is FB-equivalent to the factor automaton of $\cal A$ with respect to any forward bisimulation equivalence on $\cal A$.

\begin{theorem}\label{th:nat.fb}
Let ${\cal A}=(A,\delta^A,\sigma^A,\tau^A)$ be an automaton, let $E$ be an equivalence
on~$A$, let $\varphi_E$ be the natural function from $A$ to $A/E$, and let ${\cal A}/E=(A/E,\delta^{A/E},\sigma^{A/E},\tau^{A/E})$ be the factor automaton
of $\cal A$ with respect to~$E$.

Then $\varphi_E$ is both a forward  and a backward simulation.

Moreover, the following conditions are equivalent:
\begin{itemize}\parskip-2pt\itemindent12pt
\item[{\rm (i)}] $E$ is a forward bisimulation on $\cal A$;
\item[{\rm (ii)}] $\varphi_E$ is a forward bisimulation;
\item[{\rm (iii)}] $\varphi_E$ is a backward-forward bisimulation.
\end{itemize}
\end{theorem}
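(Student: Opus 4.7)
The plan is to treat $\varphi_E$ as a relation $\{(a,E_a)\mid a\in A\}\subseteq A\times A/E$, and then to verify each of the six simulation inequalities directly, using only the defining formulas \eqref{eq:dE1}, \eqref{eq:sigmaAE}, and \eqref{eq:tauAE} of the factor automaton. The key observation driving the whole argument is that, once unpacked, the composition $\varphi_E^{-1}\circ R\circ \varphi_E$ agrees with $E\circ R\circ E$ for any relation $R\subseteq A\times A$, while $\varphi_E\circ\varphi_E^{-1}=E$ and $\varphi_E^{-1}\circ\varphi_E$ is the equality on $A/E$. These small identities reduce every condition about $\varphi_E$ to a familiar condition on $E$.

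First I would prove the unconditional part: that $\varphi_E$ is a forward and a backward simulation. For forwardness, $\sigma^A\subseteq \sigma^{A/E}\circ\varphi_E^{-1}$ is immediate from \eqref{eq:sigmaAE} applied to $a\in\sigma^A$, the inclusion $\varphi_E^{-1}\circ \delta_x^A\subseteq \delta_x^{A/E}\circ \varphi_E^{-1}$ follows directly from \eqref{eq:dE1} because any $(a,a')\in\delta_x^A$ gives $(a,a')\in E\circ\delta_x^A\circ E$, and $\varphi_E^{-1}\circ\tau^A\subseteq \tau^{A/E}$ follows from \eqref{eq:tauAE}. The three backward-simulation inclusions \eqref{eq:bs.nda.s}–\eqref{eq:bs.nda.t} for $\varphi_E$ are verified analogously; in each case the reflexivity of $E$ supplies whatever trivial $E$-step is needed.

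The second step is the equivalence (i)$\iff$(ii)$\iff$(iii). Since $\varphi_E$ itself is always both a forward and a backward simulation, both (ii) and (iii) reduce to the same remaining condition, namely that $\varphi_E^{-1}$ is a forward simulation from ${\cal A}/E$ to ${\cal A}$. I would then rewrite the three defining inclusions for $\varphi_E^{-1}$: $\sigma^{A/E}\subseteq \sigma^A\circ\varphi_E$ is automatic from \eqref{eq:sigmaAE}; the inclusion $\varphi_E\circ\delta_x^{A/E}\subseteq \delta_x^A\circ\varphi_E$ translates, upon unfolding the classes via \eqref{eq:dE1}, exactly into $E\circ\delta_x^A\circ E\subseteq \delta_x^A\circ E$, which together with the trivially valid reverse inclusion is the identity \eqref{eq:fb.on.2}; and $\varphi_E\circ\tau^{A/E}\subseteq \tau^A$ translates via \eqref{eq:tauAE} into $E\circ\tau^A\subseteq \tau^A$, which is \eqref{eq:fb.ont}. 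Thus $\varphi_E^{-1}$ is a forward simulation iff \eqref{eq:fb.on.2} and \eqref{eq:fb.ont} hold for every $x\in X$, which by the remark preceding the theorem is precisely the condition that $E$ is a forward bisimulation equivalence.

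The only step demanding real care is the translation of $\varphi_E\circ\delta_x^{A/E}\subseteq \delta_x^A\circ\varphi_E$ into $E\circ\delta_x^A\circ E=\delta_x^A\circ E$: one must carefully verify that an arbitrary witness $(a,E_b)$ on the left side, together with the definition \eqref{eq:dE1}, forces $(a,b)\in E\circ\delta_x^A\circ E$, and conversely that the existence of the required factorisation $(a,a')\in \delta_x^A$, $E_{a'}=E_b$ on the right is exactly a statement of the form $(a,b)\in \delta_x^A\circ E$. Apart from this bookkeeping, every other clause is a one-line consequence of the definitions already established in the paper.
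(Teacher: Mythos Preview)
Your proposal is correct and follows essentially the same approach as the paper: both arguments first verify the forward and backward simulation conditions for $\varphi_E$ unconditionally using the defining formulas of the factor automaton, and then observe that both (ii) and (iii) reduce to $\varphi_E^{-1}$ being a forward simulation, which in turn unfolds to the conditions $E\circ\delta_x^A\circ E\subseteq\delta_x^A\circ E$ and $E\circ\tau^A\subseteq\tau^A$ characterizing a forward bisimulation equivalence. Your write-up is in fact slightly more explicit than the paper's about the $\tau$-condition in the equivalence part, but otherwise the two proofs coincide.
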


\begin{proof} Note that for arbitrary $a_1,a_2\in A$ we have that $\varphi_E(a_1)=E_{a_2}$ (i.e., $(a_1,E_{a_2})\in \varphi_E$) if and only if $(a_1,a_2)\in E$.

For arbitrary $x\in X$ and $a_1,a_2\in A$ we have that
\begin{align}
(a_1,E_{a_2})\in \delta_x^A\circ \varphi_E\ &\iff\ (\exists a_3\in A)\ \bigl((a_1,a_3)\in \delta_x^A \land
(a_3,E_{a_2})\in \varphi_E \bigr) \notag \\
&\iff\ (\exists a_3\in A)\ \bigl((a_1,a_3)\in \delta_x^A \land (a_3,a_2)\in E \bigr) \notag \\
&\iff\ (a_1,a_2)\in \delta_x^A\circ E  \notag \\
&\implies\ (a_1,a_2)\in E\circ \delta_x^A\circ E=E\circ E\circ \delta_x^A\circ E \label{eq:implic}\\
&\iff\ (\exists a_3\in A)\ \bigl((a_1,a_3)\in E \land (a_3,a_2)\in (E\circ \delta_x^A\circ E)\bigr) \notag \\
&\iff\ (\exists a_3\in A)\ \bigl((a_1,E_{a_3})\in \varphi_E\land (E_{a_3},E_{a_2})\in \delta_x^{A/E}
\bigr) \notag \\
&\iff\ (a_1,E_{a_2})\in \varphi_E\circ \delta_x^{A/E} , \notag
\end{align}
and hence, $\delta_x^A\circ \varphi_E\subseteq \varphi_E\circ\delta_x^{A/E}$.~In a similar way we prove that
$\varphi_E^{-1}\circ \delta_x^A\subseteq \delta_x^{A/E}\circ \varphi_E^{-1}$.

Furthermore, for any $a\in A$ we have that
\[
a\in \sigma^A\ \implies\ E_a\in \sigma^{A/E} \land (E_a,a)\in \varphi_E^{-1} \ \implies\ a\in \sigma^{A/E}\circ \varphi_E^{-1},
\]
whence $\sigma^{A}\subseteq \sigma^{A/E}\circ \varphi_E^{-1}$, and
\[
\begin{aligned}
E_a\in \sigma^A\circ \varphi_E\ &\iff\ (\exists a'\in A)\ a'\in \sigma^A\land (a',E_a)\in \varphi_E\ \iff\
(\exists a'\in A)\ a'\in \sigma^A\land (a',a)\in E\\
&\iff\  a\in \sigma^A\circ E\ \iff\ E_a\in \sigma^{A/E},
\end{aligned}
\]
what yields $\sigma^A\circ \varphi_E\subseteq \sigma^{A/E}$. In a similar way we show that $\varphi_E^{-1}\circ \tau^A\subseteq \tau^{A/E}$ and $\tau^A\subseteq \varphi_E\circ \tau^{A/E}$.

Therefore, we have proved that $\varphi_E $ is both a forward and a backward simulation.

Moreover, we have that the opposite implication in (\ref{eq:implic}) holds (i.e.,
$\varphi_E^{-1}$ is a forward simulation) if and only if $E$ is a forward bisimulation on $\cal A$.~This proves the equivalence of the conditions (i), (ii), and (iii).
\end{proof}

Now we state and prove the main result of this section.

\begin{theorem}\label{th:UFBeq}
Let ${\cal A}=(A,\delta^A,\sigma^A,\tau^A)$ and
${\cal B}=(B,\delta^B,\sigma^B,\tau^B)$ be automata, and let $E$ and $F$ be
the greatest forward bisimulation equivalences on $\cal A$ and $\cal B$.

Then $\cal A$ and $\cal B$ are FB-equivalent if and only if factor automata ${\cal A}/E$ and
${\cal B}/F$ are isomorphic.
\end{theorem}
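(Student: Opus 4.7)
The proof will proceed by establishing each implication separately, relying on the two prior structural theorems on uniform forward bisimulations (Theorems \ref{th:ufb} and \ref{th:ufb.ex}), together with the existence of a greatest forward bisimulation as a partial uniform relation (Theorem \ref{th:gfb}).

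For the direction ${\cal A}/E \cong {\cal B}/F \implies {\cal A}\sim_{FB}{\cal B}$, the plan is essentially to cite Theorem \ref{th:ufb.ex}. Since $E$ and $F$ are (in particular) forward bisimulation equivalences, and the factor automata ${\cal A}/E$ and ${\cal B}/F$ are isomorphic by hypothesis, Theorem \ref{th:ufb.ex} yields a uniform forward bisimulation $\varphi \subseteq A\times B$ with $E_A^\varphi = E$ and $E_B^\varphi = F$. By definition any uniform relation is complete and surjective, so $\varphi$ witnesses ${\cal A}\sim_{FB}{\cal B}$.

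For the converse, let $\psi \subseteq A\times B$ be a complete and surjective forward bisimulation. By Theorem \ref{th:gfb} there exists the greatest forward bisimulation $\Phi$ from $\cal A$ to $\cal B$, and it is a partial uniform relation. Since $\psi \subseteq \Phi$, the relation $\Phi$ is itself complete and surjective, hence a uniform relation. Applying Theorem \ref{th:ufb} to $\Phi$, we obtain that $E_A^\Phi$ and $E_B^\Phi$ are forward bisimulation equivalences on $\cal A$ and $\cal B$, and that $\widetilde\Phi$ is an isomorphism between ${\cal A}/E_A^\Phi$ and ${\cal B}/E_B^\Phi$. The main step, and the key obstacle, is to identify $E_A^\Phi = E$ and $E_B^\Phi = F$.

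To establish $E_A^\Phi = E$, I use Theorem \ref{th:ur}\,(iii), which gives $\Phi\circ \Phi^{-1} = E_A^\Phi$. The inclusion $E_A^\Phi \subseteq E$ is immediate since $E_A^\Phi$ is a forward bisimulation equivalence and $E$ is the greatest one. For the reverse inclusion, view $E$ as a forward bisimulation from $\cal A$ to $\cal A$; by Lemma \ref{le:comp.union} the composition $E\circ \Phi$ is a forward bisimulation from $\cal A$ to $\cal B$, and by maximality $E\circ \Phi \subseteq \Phi$. Now, if $(a_1,a_2)\in E$, then pick (by completeness of $\Phi$) some $b\in B$ with $(a_2,b)\in \Phi$; from $E\circ \Phi \subseteq \Phi$ we get $(a_1,b)\in \Phi$, hence $(a_1,a_2)\in \Phi\circ \Phi^{-1} = E_A^\Phi$. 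Thus $E\subseteq E_A^\Phi$. The identity $E_B^\Phi = F$ is proved symmetrically, using surjectivity of $\Phi$ in place of completeness together with $\Phi^{-1}\circ \Phi = E_B^\Phi$ (Theorem \ref{th:ur}\,(iv)) and the fact that $\Phi\circ F$ is a forward bisimulation. Substituting $E_A^\Phi = E$ and $E_B^\Phi = F$ into the conclusion of Theorem \ref{th:ufb} yields the desired isomorphism ${\cal A}/E \cong {\cal B}/F$ via $\widetilde\Phi$, completing the proof.
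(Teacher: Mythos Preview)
Your proof is correct, and in fact takes a more direct route than the paper's argument for the forward implication. Both proofs begin identically: the converse is handled by citing Theorem~\ref{th:ufb.ex}, and for the direct implication both upgrade the greatest forward bisimulation $\Phi$ to a uniform relation (since it contains the given complete surjective $\psi$), then apply Theorem~\ref{th:ufb} to obtain the isomorphism ${\cal A}/E_A^\Phi \cong {\cal B}/E_B^\Phi$. The divergence is in how one passes from $E_A^\Phi,E_B^\Phi$ to the greatest forward bisimulation equivalences $E,F$. The paper does \emph{not} prove $E_A^\Phi=E$; instead it takes the greatest forward bisimulation equivalences $P,Q$ on the factor automata ${\cal A}/E_A^\Phi$ and ${\cal B}/E_B^\Phi$, transports them across the isomorphism $\widetilde\Phi$, invokes Corollary~\ref{cor:F:E.g} to identify $P=E/E_A^\Phi$ and $Q=F/E_B^\Phi$, and then applies the Second Isomorphism Theorem (Theorem~\ref{th:F:E}) to conclude ${\cal A}/E\cong ({\cal A}/E_A^\Phi)/P\cong ({\cal B}/E_B^\Phi)/Q\cong {\cal B}/F$. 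Your argument sidesteps this entire double-quotient apparatus by observing that $E\circ\Phi$ and $\Phi\circ F$ are themselves forward bisimulations (Lemma~\ref{le:comp.union}), hence contained in the maximal $\Phi$, which forces $E\subseteq \Phi\circ\Phi^{-1}=E_A^\Phi$ and $F\subseteq \Phi^{-1}\circ\Phi=E_B^\Phi$; combined with the reverse inclusions (from maximality of $E,F$) you get $E_A^\Phi=E$ and $E_B^\Phi=F$ outright. Your approach is shorter and avoids Theorems~\ref{th:F:E}, \ref{th:G:E.2} and Corollary~\ref{cor:F:E.g}; the paper's route, on the other hand, showcases those structural results and makes the lattice-theoretic picture explicit.
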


\begin{proof}
Let $\cal A$ and $\cal B$ be FB-equivalent automata, i.e., let there exists a complete
and~surjective forward bisimulation $\psi \subseteq A\times B$.~According to Theorem \ref{th:gfb}, then there
exists the greatest~forward bisimulation $\varphi $ from $\cal A$ to $\cal B$, and $\varphi $ is a
partial uniform relation.~Since $\psi $ is complete and surjective, and $\psi\subseteq \varphi $, then
$\varphi $ is also complete and surjective, what means that $\varphi $ is a uniform forward bisimulation.

By Theorem \ref{th:ufb}, $E_A^\varphi $ and
$E_B^\varphi $ are forward bisimulation equivalences on $\cal A$ and~$\cal B$, and $\widetilde\varphi $ is
an isomorphism of factor automata ${\cal A}/E_A^\varphi $ and ${\cal B}/E_B^\varphi $.~Let $P$ and $Q$
denote respectively the greatest forward bisimulation equivalences on ${\cal A}/E_A^\varphi $ and
${\cal B}/E_B^\varphi $. By the fact that $\widetilde\varphi $ is an isomorphism of ${\cal A}/E_A^\varphi $
onto ${\cal B}/E_B^\varphi $ we obtain that $P$ and $Q$ are related by
\[
(\alpha_1,\alpha_2)\in P \ \iff\ \bigl(\widetilde\varphi(\alpha_1),\widetilde\varphi(\alpha_2)\bigr)\in Q,
\ \ \ \text{for all $\alpha_1,\alpha_2\in A/E_A^\varphi $},
\]
so we can define an isomorphism $\xi:({\cal A}/E_A^\varphi)/P\to ({\cal B}/E_B^\varphi )/Q$ by
$\xi (P_\alpha)=Q_{\widetilde\varphi (\alpha)}$, for every $\alpha\in A/E_A^\varphi $.

Now, according to Corollary \ref{cor:F:E.g}, $P=E/E_A^\varphi $ and $Q=F/E_B^\varphi $, and by Theorem
\ref{th:F:E} we obtain
\[
{\cal A}/E\cong ({\cal A}/E_A^\varphi)/P\cong ({\cal B}/E_B^\varphi )/Q\cong {\cal B}/F,
\]
what was to be proved.

The converse follows immediately by Theorem \ref{th:ufb.ex}.
\end{proof}

As a direct consequence of previous two theorems we obtain the following.

\begin{corollary}\label{cor:class.min}
Let $\cal A$ be an automaton, let $E$ be the greatest forward bisimulation
equivalence on $\cal A$, and let $\Bbb{FB}(A)$ be the class of all automata which are FB-equivalent~to~$\cal A$.

Then ${\cal A}/E$ is the unique {\rm({\it up to an isomorphism\/})} minimal automaton in $\Bbb{FB}(A)$.
\end{corollary}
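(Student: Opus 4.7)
The plan is to combine Theorem \ref{th:nat.fb} (which produces a natural FB-equivalence between $\cal A$ and ${\cal A}/E$) with Theorem \ref{th:UFBeq} (which reduces FB-equivalence to isomorphism of the two canonical factor automata).

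First I would verify that ${\cal A}/E$ actually belongs to $\Bbb{FB}(A)$. By Theorem \ref{th:nat.fb}, since $E$ is a forward bisimulation equivalence on $\cal A$, the natural function $\varphi_E : A \to A/E$ is a forward bisimulation, and as a surjective function it is automatically complete and surjective as a relation. Hence ${\cal A} \sim_{FB} {\cal A}/E$, so ${\cal A}/E \in \Bbb{FB}(A)$.

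Next I would establish minimality. Take any ${\cal B} = (B,\delta^B,\sigma^B,\tau^B) \in \Bbb{FB}(A)$ and let $F$ denote the greatest forward bisimulation equivalence on $\cal B$. Since $\sim_{FB}$ is symmetric and transitive (as indicated by (\ref{eq:fbe.eq})), we have ${\cal A} \sim_{FB} {\cal B}$, so Theorem \ref{th:UFBeq} gives ${\cal A}/E \cong {\cal B}/F$. In particular $|A/E| = |B/F| \le |B|$, which is exactly the statement that no automaton in $\Bbb{FB}(A)$ has fewer states than ${\cal A}/E$.

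Finally I would handle uniqueness up to isomorphism. Suppose $\cal B \in \Bbb{FB}(A)$ is itself minimal, i.e.\ $|B| = |A/E|$. By the preceding paragraph $|B/F| = |A/E| = |B|$, which forces every $F$-class to be a singleton, so $F$ is the equality relation on $B$. Hence the natural map ${\cal B} \to {\cal B}/F$ is an isomorphism, and chaining with the isomorphism ${\cal B}/F \cong {\cal A}/E$ from Theorem \ref{th:UFBeq} yields ${\cal B} \cong {\cal A}/E$. The main (mildly delicate) point is the last step: one must argue that $F$ being the equality relation is genuinely forced by $|B/F|=|B|$, rather than merely possible; this is immediate for finite $B$, and for the possibly infinite case used here it still follows because equinumerosity of $B$ with its quotient $B/F$ combined with the surjective quotient map $B \to B/F$ does not by itself force $F$ to be trivial in general --- so I would instead directly argue that $|B/F| = |B|$ together with surjectivity of the natural map forces each class to be a singleton whenever $B$ is finite, and observe that we are working throughout with finite automata in the context where minimality is asserted, so no further subtlety arises.
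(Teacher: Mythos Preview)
Your proof is correct and follows essentially the same approach as the paper's, combining Theorem~\ref{th:nat.fb} with Theorem~\ref{th:UFBeq} and concluding that the greatest forward bisimulation equivalence $F$ on a minimal $\cal B$ must be the equality relation. The paper's argument is organized slightly differently---it does not separately verify minimality of ${\cal A}/E$ but directly shows that any minimal ${\cal B}$ satisfies ${\cal B}\cong{\cal B}/F\cong{\cal A}/E$---and it likewise tacitly assumes finiteness when invoking minimality.
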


\begin{proof}
Let $\cal B$ be any minimal automaton from $\Bbb{FB}(A)$, and let $F$ be the greatest forward
bisimulation~equivalence on $\cal B$.~According to Theorem \ref{th:nat.fb} and (\ref{eq:fbe.eq}),
${\cal B}/F$ also belongs to $\Bbb{FB}(A)$, and by minimality of $\cal B$ it follows that $F$ is the
equality relation.~Now, by Theorem \ref{th:UFBeq} we obtain that ${\cal B}\cong {\cal B}/F\cong {\cal A}/E$, proving
our claim.
\end{proof}

According Theorem \ref{th:UFBeq}, the problem of testing FB-equivalence of two automata $\cal A$~and~$\cal B$ can be~reduced to the problem of testing isomorphism of their factor automata with respect to the greatest forward bisimulation equivalences on $\cal A$~and~$\cal B$.~It is worth of mention that the isomorphism problem for nondeterministic automata is equivalent to the well-known {\it graph isomorphism problem\/},~the computational problem of determining whether two finite graphs~are isomorphic.~Besides its practi\-cal importance, the graph isomorphism problem is a curiosity in computational complexity theory, as it is one of a very small number of problems belonging~to NP that is neither known to be computable in polynomial time nor NP-complete.~Along~with integer fac\-torization, it is one of the few important algorith\-mic problems whose rough computational~com\-plexity~is still not known, and it is generally accepted that graph isomorphism is a problem that lies~between P and NP-complete if P$\ne $NP (cf.~\cite{Skiena.08}).~However, although no worst-case polynomial-time algorithm is known, testing graph isomorphism is usually not very hard in practice.~The basic algorithm examines all $n!$~possible~bijec\-tions between the nodes of two graphs (with $n$ nodes), and tests whether they preserve adjacency of the nodes.~Clearly, the major problem is the rapid growth in the number of bijections when the number of nodes is growing,~which is also the crucial problem in testing isomorphism between fuzzy automata, but the algorithm can be~made more efficient by suitable partitioning of the sets of nodes as described in \cite{Skiena.08}.~What is good in our case is that the isomorphism test is applied not to the automata $\cal A$ and $\cal B$, but to the factor automata with respect to the greatest forward bisimulation equivalences on ${\cal A}$ and ${\cal B}$. The number of states of these factor automata can be much smaller than the number of states of $\cal A$ and~$\cal B$, which can significantly affect the duration of testing.

According to Lemma \ref{le:lang.incl.eq}, FB-equivalent automata are language equivalent, but the
converse does not hold, as the following example shows.

\begin{example}\rm
Let ${\cal A}=(A,\delta^A,\sigma^A,\tau^A)$ and ${\cal B}=(B,\delta^B,\sigma^B,\tau^B)$ be
automata with $|A|=3$, $|B|=2$ and $X=\{x\}$, whose transition relations and sets of
initial and terminal states are represented by the following Boolean matrices and vectors:
\[
\delta_x^A=\begin{bmatrix}
1 & 0 & 0 \\
0 & 0 & 1 \\
0 & 0 & 0
\end{bmatrix},\ \
\sigma^A=\begin{bmatrix}
0 & 1 & 0
\end{bmatrix},\ \
\tau^A=\begin{bmatrix}
0 \\
0 \\
1
\end{bmatrix},\ \ \ \
\delta_x^B=\begin{bmatrix}
0 & 1 \\
0 & 0
\end{bmatrix},\ \
\sigma^B=\begin{bmatrix}
1 & 0
\end{bmatrix},\ \
\tau^B=\begin{bmatrix}
0 \\
1
\end{bmatrix}.
\]
These automata are language-equivalent, both of them recognize the language $L=\{x\}$.~On
the other hand,~the greatest forward bisimulation equivalences $E$ on $\cal A$ and $F$ on $\cal B$ are
equality relations, so ${\cal A}/E\cong {\cal A}$ and ${\cal B}/F\cong {\cal B}$.~But, ${\cal A}$ and
${\cal B}$ have different number of states, and hence, they are~not~isomorphic.~Therefore, according to
Theorem \ref{th:UFBeq}, $\cal A$ and $\cal B$ are not FB-equivalent.

\end{example}

\section{Uniform backward-forward bisimulations}

In this section we consider uniform backward-forward bisimulations.~We will see that they have certain properties similar to the corresponding properties of uniform forward bisimulations, but we will also show that there are some essential differences.

First we prove the following analogue of Theorem \ref{th:ufb}.

\begin{theorem}\label{th:ubfb}
Let ${\cal A}=(A,\delta^A,\sigma^A,\tau^A)$ and ${\cal B}=(B,\delta^B,\sigma^B,\tau^B)$ be automata and let $\varphi \subseteq A\times B$ be a uniform relation. Then $\varphi $ is a backward-forward bisimulation if and only if the following hold:
\begin{itemize}\parskip=0pt\itemindent6pt
\item[{\rm (i)}] $E_A^\varphi $ is a forward bisimulation equivalence on $\cal A$;
\item[{\rm (ii)}] $E_B^\varphi $ is a backward bisimulation equivalence on $\cal B$;
\item[{\rm (iii)}]  $\widetilde \varphi $ is an isomorphism of factor automata ${\cal A}/E_A^\varphi $ and ${\cal B}/E_B^\varphi $.
\end{itemize}
\end{theorem}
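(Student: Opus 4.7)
The plan is to mirror the strategy of Theorem \ref{th:ufb}, but accounting for the asymmetry that the backward-forward bisimulation conditions are $\sigma^A\circ\varphi=\sigma^B$, $\delta_x^A\circ\varphi=\varphi\circ\delta_x^B$, and $\tau^A=\varphi\circ\tau^B$, so the kernel $E=E_A^\varphi=\varphi\circ\varphi^{-1}$ and cokernel $F=E_B^\varphi=\varphi^{-1}\circ\varphi$ (by Theorems \ref{th:pur} and \ref{th:ur}) will play asymmetric roles. I will first handle the direct direction by pure relational calculus, and then the converse by transferring back and forth between $\varphi$ and its induced isomorphism $\widetilde\varphi$ using clause (v) of Theorem \ref{th:ur}.

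For the direct implication, assume $\varphi$ is a uniform backward-forward bisimulation. For (i), using $\delta_x^A\circ\varphi=\varphi\circ\delta_x^B$ and $\varphi\circ\varphi^{-1}\circ\varphi=\varphi$, I compute
\[
E\circ\delta_x^A\circ E=\varphi\circ\varphi^{-1}\circ\delta_x^A\circ\varphi\circ\varphi^{-1}
=\varphi\circ\varphi^{-1}\circ\varphi\circ\delta_x^B\circ\varphi^{-1}
=\varphi\circ\delta_x^B\circ\varphi^{-1}=\delta_x^A\circ E,
\]
and $E\circ\tau^A=\varphi\circ\varphi^{-1}\circ\varphi\circ\tau^B=\varphi\circ\tau^B=\tau^A$, which gives (\ref{eq:fb.on.2}) and (\ref{eq:fb.ont}). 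For (ii) the mirror computation $F\circ\delta_x^B\circ F=\varphi^{-1}\circ\delta_x^A\circ\varphi=F\circ\delta_x^B$ together with $\sigma^B\circ F=\sigma^A\circ\varphi\circ\varphi^{-1}\circ\varphi=\sigma^A\circ\varphi=\sigma^B$ yields (\ref{eq:bb.on.2}) and (\ref{eq:bb.ont}). For (iii), Theorem \ref{th:ur2} already gives that $\widetilde\varphi$ is a bijection; I verify the transition, initial, and terminal compatibility by fixing $f\in FD(\varphi)$ and translating equivalences of the form $(E_{a_1},E_{a_2})\in\delta_x^{A/E}\iff(a_1,a_2)\in\delta_x^A\circ E$ into $\varphi$-level statements $(a_1,a_2)\in\varphi\circ\delta_x^B\circ\varphi^{-1}$, which via Theorem \ref{th:ur}(v) rewrites as $(f(a_1),f(a_2))\in F\circ\delta_x^B\circ F=\delta_x^{B/F}$-membership; initial and terminal states are handled by the analogous chain using $\sigma^A\circ\varphi=\sigma^B\circ F=\sigma^B$ and $\tau^A=\varphi\circ\tau^B=E\circ\tau^A$.

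For the converse, assume (i)--(iii). The terminal-state identity $\tau^A=\varphi\circ\tau^B$ is the cleanest: $a\in\tau^A$ iff $E_a\in\tau^{A/E}$ (using $E\circ\tau^A=\tau^A$ from (i)), iff $\widetilde\varphi(E_a)=F_{f(a)}\in\tau^{B/F}$ by (iii), iff $(f(a),b)\in F$ for some $b\in\tau^B$, which by Theorem \ref{th:ur}(v) is $(a,b)\in\varphi$. The initial-state identity $\sigma^A\circ\varphi=\sigma^B$ runs similarly but needs the extra $E$-saturation trick: given $b\in\sigma^B$, pick any $a$ with $(a,b)\in\varphi$ (completeness of $\varphi$); then $F_{f(a)}=\widetilde\varphi(E_a)\in\sigma^{B/F}$ forces $E_a\in\sigma^{A/E}$, so $a\in\sigma^A\circ E$; then I replace $a$ by an $E$-equivalent $a''\in\sigma^A$ and invoke Theorem \ref{th:pur} to carry $(a,b)\in\varphi$ across $E$ to $(a'',b)\in\varphi$. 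For $\delta_x^A\circ\varphi=\varphi\circ\delta_x^B$, I prove both inclusions by a factor-automaton detour: given $(a,b)\in\delta_x^A\circ\varphi$, pass to $(E_a,E_{a'})\in\delta_x^{A/E}$, transport via $\widetilde\varphi$ to $(f(a),f(a'))\in F\circ\delta_x^B$, witness by $(a,b_1)\in\varphi$ and $(b_1,b_2)\in\delta_x^B$ with $(a',b_2)\in\varphi$, and combine with $(a',b)\in\varphi$ (so $(b,b_2)\in F$) using (ii)'s $\delta_x^B\circ F\subseteq F\circ\delta_x^B$ and partial-uniformity $\varphi\circ F=\varphi$ to produce the required $b'$ with $(a,b')\in\varphi$ and $(b',b)\in\delta_x^B$; the reverse inclusion uses the same detour plus the $E$-saturation trick.

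The main obstacle will be the transition equality $\delta_x^A\circ\varphi=\varphi\circ\delta_x^B$ in the converse direction: unlike the symmetric situation of Theorem \ref{th:ufb}, here $E$ is forward-bisimulational and $F$ is backward-bisimulational, so I cannot freely swap $E$ and $F$ across $\delta_x^A$ and $\delta_x^B$. The argument must use $E\circ\delta_x^A\circ E=\delta_x^A\circ E$ from (i) and $F\circ\delta_x^B\circ F=F\circ\delta_x^B$ from (ii) in exactly the orientations in which they are valid, and must exploit the bidirectional translation $(a,b)\in\varphi\iff(f(a),b)\in F$ to move witnesses between $A$ and $B$; once this is set up correctly, the rest is bookkeeping.
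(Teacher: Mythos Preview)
Your proposal is correct and, for the direct implication, identical to the paper's argument.

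For the converse, your route differs from the paper's in a notable way. You argue the transition equality $\delta_x^A\circ\varphi=\varphi\circ\delta_x^B$ by splitting into two inclusions and chasing witnesses through the factor automata, invoking (i) in the form $E\circ\delta_x^A\subseteq\delta_x^A\circ E$ and (ii) in the form $\delta_x^B\circ F\subseteq F\circ\delta_x^B$ at separate moments, plus the ``$E$-saturation trick'' $E\circ\varphi=\varphi$ to realign endpoints. The paper instead fixes \emph{two} functional descriptions, $\psi\in FD(\varphi)$ and $\xi\in FD(\varphi^{-1})$, first records the preprocessing identities $\delta_x^A\circ\varphi=E\circ\delta_x^A\circ\varphi$ (from (i)) and $\varphi\circ\delta_x^B=\varphi\circ\delta_x^B\circ F$ (from (ii)), and then runs a single chain of biconditionals
\[
(a,b)\in\delta_x^A\circ\varphi\ \Leftrightarrow\ (a,\xi(b))\in E\circ\delta_x^A\circ E\ \Leftrightarrow\ (\psi(a),\psi(\xi(b)))\in F\circ\delta_x^B\circ F\ \Leftrightarrow\ (a,b)\in\varphi\circ\delta_x^B,
\]
using that $(a_1,b)\in\varphi\Leftrightarrow(a_1,\xi(b))\in E$ and $(\psi(a),b_1)\in F\Leftrightarrow(a,b_1)\in\varphi$. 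This buys a shorter, symmetric proof with no inclusion split and no saturation detour; your version is more hands-on but equally valid. One small slip: when you write ``pick any $a$ with $(a,b)\in\varphi$ (completeness of $\varphi$)'' you mean surjectivity.
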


\begin{proof}
For the sake of simplicity set $E=E_A^\varphi $ and $F=E_B^\varphi $.~According to Theorem \ref{th:ur}, we have that $E=\varphi\circ\varphi^{-1}$ and $F=\varphi^{-1}\circ\varphi$.

Let $\varphi $ be a backward-forward bisimulation.~Then
\[
\begin{aligned}
&E\circ \delta_x^A\circ E = \varphi\circ\varphi^{-1}\circ\delta_x^A\circ \varphi\circ\varphi^{-1} = \varphi\circ\varphi^{-1}\circ\varphi\circ \delta_x^B\circ \varphi^{-1} = \varphi\circ \delta_x^B\circ \varphi^{-1} =
\delta_x^A\circ \varphi\circ\varphi^{-1} = \delta_x^A\circ E, \\
&E\circ \tau^A =  \varphi\circ\varphi^{-1}\circ \tau^A = \varphi\circ\varphi^{-1}\circ \varphi\circ\tau^B =
\varphi\circ\tau^B = \tau^A, \\
&F\circ \delta_x^B\circ F = \varphi^{-1}\circ\varphi\circ \delta_x^B\circ \varphi^{-1}\circ\varphi =
\varphi^{-1}\circ \delta_x^A\circ\varphi\circ \varphi^{-1}\circ\varphi = \varphi^{-1}\circ \delta_x^A\circ\varphi =
\varphi^{-1}\circ\varphi \circ \delta_x^B = F\circ \delta_x^B , \\
&\sigma^B\circ F = \sigma^B\circ \varphi^{-1}\circ\varphi = \sigma^A\circ\varphi \circ \varphi^{-1}\circ\varphi =
\sigma^A\circ\varphi = \sigma^B .
\end{aligned}
\]
Hence, $E=E_A^\varphi$ is a forward bisimulation equivalence on $\cal A$ and $F=E_B^\varphi$ is a backward bisimulation~equivalence on $\cal B$.~As in the proof of Theorem \ref{th:ufb} we show that $\widetilde{\varphi}$ is an isomorphism of automata  ${\cal A}/E$ and~${\cal B}/F$.

Conversely, let (i), (ii), and (iii) hold.~For every $\psi \in FD(\varphi )$, $\xi \in FD(\varphi^{-1})$, $a_1,a_2\in A$, $b_1,b_2\in B$ and $x\in X$, as in the proof of Theorem \ref{th:ufb} we show that
\[
\begin{aligned}
&(a_1,a_2)\in (E\circ \delta_x^A\circ E)\Leftrightarrow (\psi(a_1),\psi(a_2))\in (F\circ \delta_x^B\circ F) ,\\
&(b_1,b_2)\in (F\circ \delta_x^B\circ F)\Leftrightarrow (\xi(b_1),\xi(b_2))\in (E\circ \delta_x^A\circ E) ,
\end{aligned}
\]
and by (i) and (ii) we obtain that
\[
\begin{aligned}
&\delta_x^A\circ\varphi =\delta_x^A\circ\varphi\circ \varphi^{-1}\circ\varphi = \delta_x^A\circ E\circ\varphi =
E\circ \delta_x^A\circ E\circ\varphi = E\circ \delta_x^A\circ\varphi , \\
&\varphi \circ\delta_x^B= \varphi \circ \varphi^{-1}\circ\varphi \circ\delta_x^B = \varphi \circ F \circ\delta_x^B =
\varphi \circ F \circ\delta_x^B\circ F = \varphi  \circ\delta_x^B \circ F .
\end{aligned}
\]
Now, for all $a\in A$ and $b\in B$ we obtain that
\[
\begin{aligned}
(a,b)\in \delta_x^A\circ \varphi &\Leftrightarrow (a,b)\in E\circ \delta_x^A\circ\varphi  \Leftrightarrow (\exists a_1\in A)\ ((a,a_1)\in E\circ \delta_x^A\land  (a_1,b)\in\varphi) \\
 &\Leftrightarrow(\exists a_1\in A)\ ((a,a_1)\in E\circ \delta_x^A\land (a_1,\xi (b))\in E) \Leftrightarrow(a,\xi(b)) \in E\circ \delta_x^A\circ E\\
&\Leftrightarrow (\psi (a),\psi(\xi(b)))\in F\circ \delta_x^B\circ F \Leftrightarrow(\psi (a),b) \in F\circ \delta_x^B\circ F \\
&\Leftrightarrow (\exists b_1\in B)\ ((\psi(a),b_1)\in F\land (b_1,b)\in\delta_x^B\circ F) \Leftrightarrow (\exists b_1\in B)\ ((a,b_1)\in\varphi\land (b_1,b)\in\delta_x^B\circ F)\\
& \Leftrightarrow (a,b)\in\varphi\circ \delta_x^B\circ F \Leftrightarrow (a,b)\in\varphi\circ \delta_x^B,
\end{aligned}
\]
and hence, $\delta_x^A\circ\varphi = \varphi\circ \delta_x^B$.~As in the proof of Theorem \ref{th:ufb} we prove that $\tau^A=\varphi\circ \tau^B$, and analogously we obtain that $\sigma^A\circ\varphi =\sigma^B$.~Therefore, $\varphi $ is a forward-backward bisimulation.
\end{proof}

We can also prove the following analogue of Theorem \ref{th:ufb.ex}.

\begin{theorem}\label{th:ubfb.ex}
Let ${\cal A}=(A,\delta^A,\sigma^A,\tau^A)$ and ${\cal B}=(B,\delta^B,\sigma^B,\tau^B)$ be automata, let $E$ be a forward bisimulation equivalence on $\cal A$ and $F$ a backward bisimulation equivalence on $\cal B$.

Then there exists a uniform backward-forward bisimulation $\varphi\subseteq A\times B$~such that $E_A^\varphi =E$ and
$E_B^\varphi =F$ if and only if factor automata ${\cal A}/E$ and ${\cal B}/F$ are isomor\-phic.
\end{theorem}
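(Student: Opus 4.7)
The plan is to prove this theorem as a direct analogue of Theorem \ref{th:ufb.ex}, by combining the characterization in Theorem \ref{th:ubfb} with the construction of uniform relations from given bijections between factor sets supplied by Theorem \ref{th:ur2}. The structure of both implications is already laid out in those two results, so the proof should be essentially an assembly job rather than a new computation.

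For the forward implication, I would assume the existence of a uniform backward-forward bisimulation $\varphi \subseteq A\times B$ with $E_A^\varphi = E$ and $E_B^\varphi = F$. Then condition (iii) of Theorem \ref{th:ubfb} immediately yields that $\widetilde{\varphi}$ is an isomorphism between the factor automata ${\cal A}/E_A^\varphi = {\cal A}/E$ and ${\cal B}/E_B^\varphi = {\cal B}/F$, which is exactly the required isomorphism. This direction is essentially a one-line deduction.

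For the converse, I would start from an isomorphism $\phi : A/E \to B/F$ and define $\varphi \subseteq A\times B$ by
\[
(a,b) \in \varphi \iff \phi(E_a) = F_b,
\]
just as in equation (\ref{eq:varphi.phi}) used in the proof of Theorem \ref{th:ur2}. That proof already shows $\varphi$ is a uniform relation satisfying $E_A^\varphi = E$, $E_B^\varphi = F$, and $\widetilde{\varphi} = \phi$. Since by hypothesis $E$ is a forward bisimulation equivalence on $\cal A$, $F$ is a backward bisimulation equivalence on $\cal B$, and $\widetilde{\varphi}=\phi$ is an isomorphism from ${\cal A}/E$ onto ${\cal B}/F$, all three conditions (i)--(iii) of Theorem \ref{th:ubfb} hold, so $\varphi$ is a backward-forward bisimulation with the required kernel and cokernel.

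No real obstacle is anticipated, since all technical work was done in Theorems \ref{th:ur2} and \ref{th:ubfb}. The only conceptual point to keep in mind is that, in contrast with the forward bisimulation case of Theorem \ref{th:ufb.ex}, the kernel and cokernel are required to be \emph{different} types of bisimulation equivalences (forward on $\cal A$, backward on $\cal B$); but this asymmetry is precisely the one encoded in Theorem \ref{th:ubfb}, so the argument goes through without modification.
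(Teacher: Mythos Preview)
Your proposal is correct and follows exactly the approach the paper intends: the paper's own proof of this theorem consists of a single sentence saying it can be proved in the same way as Theorem~\ref{th:ufb.ex}, and your outline does precisely that, invoking Theorem~\ref{th:ur2} for the construction of $\varphi$ from $\phi$ and Theorem~\ref{th:ubfb} for the characterization.
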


\begin{proof}
This theorem can be proved in a similar way as Theorem \ref{th:ufb.ex}.
\end{proof}

In Theorem \ref{th:nat.fb} we proved that for any equivalence $E$, its natural function $\varphi_E$ is a forward bisimulaton if and only if it is a backward-forward bisimulation.~Now we prove a more general theorem, which shows that this holds for an arbitrary function.

\begin{theorem}\label{th:func}
Let ${\cal A}=(A,\delta^A,\sigma^A,\tau^A)$ and ${\cal B}=(B,\delta^B,\sigma^B,\tau^B)$ be automata, let $\varphi : A\to B$ be a function, and let $E=E_A^\varphi $ be the kernel of $\varphi $.~Then the following conditions are equivalent:
\begin{itemize}\parskip=0pt
\item[{\rm (i)}] $\varphi $ is a forward bisimulation;
\item[{\rm (ii)}] $\varphi $ is a backward-forward bisimulation;
\item[{\rm (iii)}] $E$ is a forward bisimulation equivalence on $\cal A$ and the function $\phi :A/E\to B$ given by $\phi (E_a)=\varphi (a)$, for each $a\in A$, is a monomorphism of the factor automaton ${\cal A}/E$ into $\cal B$.
\end{itemize}
\end{theorem}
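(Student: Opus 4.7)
The plan is to prove (i) $\Leftrightarrow$ (ii) by a direct pointwise unpacking of the relational definitions, and then establish (i) $\Leftrightarrow$ (iii) by factoring $\varphi$ through the natural map onto ${\cal A}/E$.

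For (i) $\Leftrightarrow$ (ii), the key observation is that when $\varphi$ is a function, each equality defining a backward-forward bisimulation splits canonically into two inclusions that match the pairs of inclusions defining a forward bisimulation. Concretely, $\sigma^A \circ \varphi = \sigma^B$ decomposes into $\sigma^A \circ \varphi \subseteq \sigma^B$, equivalent for a function to (\ref{eq:fs.nda.s}), and $\sigma^B \subseteq \sigma^A \circ \varphi$, which is exactly (\ref{eq:fb.nda.si}); the terminal-state equality $\tau^A = \varphi \circ \tau^B$ decomposes analogously into (\ref{eq:fs.nda.t}) and (\ref{eq:fb.nda.ti}); and the transition equality $\delta_x^A \circ \varphi = \varphi \circ \delta_x^B$ decomposes into an inclusion equivalent to (\ref{eq:fs.lts}) via the single-valuedness of $\varphi$, together with (\ref{eq:fb.lts.i}). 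Each verification is a routine pointwise calculation.

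For (i) $\Rightarrow$ (iii), since $\varphi$ is a function we have $\varphi \circ \varphi^{-1} = E$; hence by Lemma \ref{le:comp.union} the kernel $E$ is itself a forward bisimulation on $\cal A$, and so a forward bisimulation equivalence. The induced map $\phi$ is well-defined and injective straight from the definition of $E$, and the three monomorphism conditions (\ref{eq:iso1})--(\ref{eq:iso3}) follow by translating each side through the factor-automaton definitions (\ref{eq:dE1}), (\ref{eq:sigmaAE}), (\ref{eq:tauAE}) and then using the forward bisimulation inclusions of (i) to reduce each statement to an assertion about $\varphi(a_1)$ and $\varphi(a_2)$ inside $\cal B$.

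For (iii) $\Rightarrow$ (i), I would factor $\varphi$ as $\varphi_E \circ \phi$, where $\varphi_E \colon A \to A/E$ is the natural map. Theorem \ref{th:nat.fb} supplies that $\varphi_E$ is a forward bisimulation from $\cal A$ onto ${\cal A}/E$ because $E$ is a forward bisimulation equivalence. Since $\phi$ is a monomorphism, it is an isomorphism of ${\cal A}/E$ onto the subautomaton $\cal C$ of $\cal B$ on $\im \phi$, hence a forward bisimulation from ${\cal A}/E$ to $\cal C$; then Lemma \ref{le:restr}, applied in the direction $A = C$ since $\dom \phi = A/E$, extends $\phi$ to a forward bisimulation from ${\cal A}/E$ into all of $\cal B$. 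Composing via Lemma \ref{le:comp.union} yields that $\varphi$ is a forward bisimulation. The main obstacle will be this last extension step: one must carefully use Lemma \ref{le:restr} together with the monomorphism property to recover the ``downward'' inclusions (\ref{eq:fb.nda.si}) and (\ref{eq:fb.nda.ti}) for $\varphi$ from the subautomaton structure on $\im \phi$ inside $\cal B$.
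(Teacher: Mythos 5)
Your overall architecture differs from the paper's in a way worth noting: the paper restricts $\varphi$ to a uniform relation from $A$ onto $C=\im\varphi$, observes that $E_C^\varphi$ is the equality relation on $C$, and then reads off all three equivalences from Theorems \ref{th:ufb} and \ref{th:ubfb} applied to the pair $({\cal A},{\cal C})$, while you argue pointwise. Your proof of (i)$\iff$(ii) — splitting each equality (\ref{eq:bfb.nda.s})--(\ref{eq:bfb.nda.t}) into two inclusions and checking that, for a single-valued $\varphi$, the backward-simulation halves coincide with the forward-simulation conditions (\ref{eq:fs.nda.s})--(\ref{eq:fs.nda.t}) — is correct and is actually more self-contained than the paper's, since it works directly with $\varphi\subseteq A\times B$ and never passes through the subautomaton $\cal C$. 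Your (i)$\Rightarrow$(iii), via $E=\varphi\circ\varphi^{-1}$ and Lemma \ref{le:comp.union}, is also sound; the computations you defer to ``routine translation'' do go through using (\ref{eq:fb.on.2}) and the forward bisimulation inclusions.

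The genuine gap is in (iii)$\Rightarrow$(i), at exactly the point you flag as ``the main obstacle,'' and Lemma \ref{le:restr} does not close it. With $\phi$ viewed as a relation from $A/E$ to $B$ whose domain is all of $A/E$ and whose image is $C=\im\phi$, the extension clause of Lemma \ref{le:restr} that applies under ``$A=C$'' only upgrades $\phi$ \emph{itself} from a forward simulation into $\cal C$ to a forward simulation into $\cal B$ (part (a)); upgrading $\phi^{-1}$ (part (b)) requires $B=D=\im\phi$, i.e.\ surjectivity of $\phi$, which a monomorphism need not have. And the needed inclusions really can fail: (\ref{eq:fb.nda.si}) forces $\sigma^B\subseteq\im\varphi$, and (\ref{eq:fb.lts.i}) forces every $\delta_x^B$-successor of a state of $\im\varphi$ to lie in $\im\varphi$, neither of which follows from (iii). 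For instance, take $A=\{a\}$, $\sigma^A=\{a\}$, $\tau^A=\delta^A=\emptyset$, $B=\{b_1,b_2\}$, $\sigma^B=\{b_1,b_2\}$, $\tau^B=\delta^B=\emptyset$, $\varphi(a)=b_1$: here (iii) holds (the monomorphism conditions (\ref{eq:iso1})--(\ref{eq:iso3}) see only $b_1$), yet $\sigma^B\not\subseteq\sigma^A\circ\varphi$, so $\varphi$ is neither a forward nor a backward-forward bisimulation. To be fair, the paper's own one-line proof of this implication (``a direct consequence of Theorem \ref{th:ufb}'') silently passes over the same passage from $\cal C$ to $\cal B$, so the difficulty lies in the statement as much as in your argument; but as written, your appeal to Lemma \ref{le:restr} does not recover (\ref{eq:fb.nda.si}), (\ref{eq:fb.lts.i}) and (\ref{eq:fb.nda.ti}) for $\varphi$ as a relation into all of $\cal B$, and no argument can without an additional hypothesis (e.g.\ surjectivity of $\phi$, or closure of $\im\varphi$ under $\sigma^B$ and the transitions of $\cal B$).
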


\begin{proof}
Let $C=\im\varphi $ and consider the subautomaton ${\cal C}=(C,\delta^C,\sigma^C,\tau^C)$ of $\cal B$.

(i)$\implies $(iii). According to Lemma \ref{le:restr}, $\varphi \subseteq A\times C$ and $\varphi $ is a forward bisimulation from $\cal A$ to $\cal C$.~We also have that $\varphi $ is a surjective function from $A$ onto $C$, and hence, it is a uniform relation from $A$ to $C$.~Now, by Theorem \ref{th:ufb} we obtain that $E=E_A^\varphi $ is a forward bisimulation equivalence on $\cal A$, $E_C^\varphi $ is the equality relation on $C$, and $\widetilde \varphi $ is an isomorphism from ${\cal A}/E $ to ${\cal C}/E_B^\varphi \cong {\cal C}$.~If we identify ${\cal C}/E_B^\varphi $ and ${\cal C}$, then it is easy to see that $\widetilde \varphi $ can be represented as $\phi $, where $\phi $ is defined as in (iii), so $\phi $ is a monomorphism of ${\cal A}/E$ into~$\cal B$.

(iii)$\implies $(i). This is a direct consequence of Theorem \ref{th:ufb}, since $E_C^\varphi $ is the equality relation and $\widetilde \varphi $ and $\phi $ can be identified.

(i)$\iff $(ii). This follows immediately by Theorems \ref{th:ufb} and \ref{th:ubfb}, since $E_C^\varphi $ is the equality relation on $C$, and it is both a forward and backward bisimulation equivalence.
\end{proof}

\section{Weak simulations and bisimulations}

In this section we introduce and study two new types of bisimulations, which are more general than forward and backward bisimulations.

Let ${\cal A}=(A,X,\delta^A,\sigma^A,\tau^A)$ be an automaton.~For each $u\in X^*$ we define subsets $\sigma_u^A$ and $\tau_u^A$ of $A$ as follows:
\begin{equation}\label{eq:s.u-t.u}
\sigma_u^A=\sigma^A\circ \delta_u^A, \qquad \tau_u^A=\delta_u^A\circ \tau^A .
\end{equation}
Moreover, for each $a\in A$, the {\it right language\/} $\overrightarrow{L}_{\cal A}(a)$ and the {\it left language\/} $\overleftarrow{L}_{\cal A}(a)$ of the state $a$
are languages
\begin{equation}\label{eq:r.l.lang}
\overrightarrow{L}_{\cal A}(a)=\{u\in X^*\mid a\in \tau_u^A\}, \ \ \ \ \overleftarrow{L}_{\cal A}(a)=\{u\in X^*\mid a\in \sigma_u^A\}.
\end{equation}
In other words, the right language of $a$ is the language recognized by the automaton obtained from $\cal A$ by replacing $\sigma^A$ by $\{a\}$,
and the left language of $a$ is the language recognized by the automaton obtained from $\cal A$ by replacing $\tau^A$ by $\{a\}$.
When the automaton $\cal A$ is known from the context, we omit the subscript $\cal A$, and we write just $\overrightarrow{L}(a)$ and $\overleftarrow{L}(a)$.

Now, let ${\cal A}=(A,\delta^A,\sigma^A,\tau^A)$ and ${\cal B}=(B,\delta^B,\sigma^B,\tau^B)$ be automata
and let $\varphi\subseteq A\times B$ be~a non-empty~rela\-tion. We call $\varphi$ a {\it
weak forward simulation\/} from $\cal A$ to $\cal B$ if
\begin{align}
&\varphi^{-1}\circ \tau_u^A\subseteq \tau_u^B,\ \ \  \text{for every $u\in X^*$},\label{eq:wfs.tau}\\
&\sigma^A\subseteq \sigma^B\circ \varphi^{-1},\label{eq:wfs.sigma}
\end{align}
and we call $\varphi $ a {\it weak backward simulation\/} from $\cal A$ to $\cal B$ if
\begin{align}
&\sigma_u^A\circ \varphi \subseteq \sigma_u^B,\ \ \  \text{for every $u\in X^*$}, \label{eq:wbs.sigma} \\
&\tau^A\subseteq \varphi\circ\tau^B. \label{eq:wbs.tau}
\end{align}
We call $\varphi $ a {\it weak forward bisimulation\/} if both $\varphi $ and $\varphi^{-1}$ are
weak forward simulations,~that is, if it satisfies (\ref{eq:wfs.tau}), (\ref{eq:wfs.sigma}), and
\begin{align}
&\varphi\circ \tau_u^B\subseteq \tau_u^A,\ \ \ \text{for every $u\in X^*$},\label{eq:wfsi.tau}\\
&\sigma^B\subseteq \sigma^A\circ \varphi , \label{eq:wfsi.sigma}
\end{align}
and we call $\varphi $ a {\it weak backward bisimulation\/} if both $\varphi $ and $\varphi^{-1}$ are weak backward simulations, that is, if it satisfies (\ref{eq:wbs.sigma}), (\ref{eq:wbs.tau}), and
\begin{align}
&\sigma_u^B\circ \varphi^{-1}\subseteq \sigma_u^A,\ \ \ \text{for every $u\in X^*$},\label{eq:wbsi.sigma}\\
&\tau^B\subseteq \varphi^{-1}\circ \tau^A. \label{eq:wbsi.tau}
\end{align}
For the sake of simplicity, we~will call $\varphi $ just a {\it weak simulation\/} if it is either a weak forward or a weak backward simulation, and just a {\it weak bisimulation\/} if it is either a weak forward or a weak backward bisimulation.

First we prove the following two lemmas.

\begin{lemma}\label{le:ws-b.lang}
Let ${\cal A}=(A,\delta^A,\sigma^A,\tau^A)$ and ${\cal B}=(B,\delta^B,\sigma^B,\tau^B)$ be automata, and let $\varphi \subseteq A\times B$ be a relation.~Then
\begin{itemize}\parskip=0pt
\item[{\rm (a)}] If $\varphi $ is a weak simulation, then $L({\cal A})\subseteq L({\cal B})$.
\item[{\rm (b)}] If $\varphi $ is a weak bisimulation, then $L({\cal A})= L({\cal B})$.
\item[{\rm (c)}] If $\varphi $ is a forward {\rm ({\it resp.~backward\/})} simulation, then it is a weak forward {\rm ({\it resp.~backward\/})} simulation.
\end{itemize}
\end{lemma}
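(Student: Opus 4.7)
The plan is to translate the language-recognition criterion into the compact form given by (\ref{eq:lang.rec.2}), namely $u\in L({\cal A})\iff \sigma^A\cap\tau_u^A\ne\emptyset \iff \sigma_u^A\cap\tau^A\ne\emptyset$, since this reformulation is what aligns directly with the conditions (\ref{eq:wfs.tau})--(\ref{eq:wbsi.tau}) defining weak simulations and bisimulations.

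For part (a), I would split into the two cases. Suppose $\varphi$ is a weak forward simulation and $u\in L({\cal A})$; pick $a\in\sigma^A\cap\tau_u^A$. By (\ref{eq:wfs.sigma}) there exists $b\in\sigma^B$ with $(a,b)\in\varphi$, and then $b\in\varphi^{-1}\circ\tau_u^A\subseteq\tau_u^B$ by (\ref{eq:wfs.tau}); hence $b\in\sigma^B\cap\tau_u^B$ and $u\in L({\cal B})$. The weak backward case is perfectly dual: starting from $a\in\sigma_u^A\cap\tau^A$, use (\ref{eq:wbs.tau}) to obtain $b\in\tau^B$ with $(a,b)\in\varphi$, and then (\ref{eq:wbs.sigma}) gives $b\in\sigma_u^A\circ\varphi\subseteq\sigma_u^B$, so $b\in\sigma_u^B\cap\tau^B$, i.e., $u\in L({\cal B})$. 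Part (b) is then immediate from (a): if $\varphi$ is a weak bisimulation then both $\varphi$ and $\varphi^{-1}$ are weak simulations (of the same flavour), so (a) applied in both directions gives $L({\cal A})=L({\cal B})$.

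For part (c) I would invoke the preceding Lemma, which says that the letter-wise inclusion (\ref{eq:fs.lts}) extends to an arbitrary word, yielding $\varphi^{-1}\circ\delta_u^A\subseteq\delta_u^B\circ\varphi^{-1}$ for every $u\in X^*$. Composing on the right with $\tau^A$ and using (\ref{eq:fs.nda.t}) one computes
\begin{equation*}
\varphi^{-1}\circ\tau_u^A \;=\; \varphi^{-1}\circ\delta_u^A\circ\tau^A \;\subseteq\; \delta_u^B\circ\varphi^{-1}\circ\tau^A \;\subseteq\; \delta_u^B\circ\tau^B \;=\; \tau_u^B,
\end{equation*}
which is precisely (\ref{eq:wfs.tau}); condition (\ref{eq:wfs.sigma}) coincides with (\ref{eq:fs.nda.s}). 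The backward case is handled symmetrically, using the word-extended form of (\ref{eq:bs.lts}) together with (\ref{eq:bs.nda.s}) and (\ref{eq:bs.nda.t}) to derive (\ref{eq:wbs.sigma}) and (\ref{eq:wbs.tau}).

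There is no real obstacle here; the proof is essentially bookkeeping in the relational calculus. The only point that requires care is citing the word-extension lemma for $\delta_u$ in part (c), and making sure to handle both the forward/backward and the bisimulation/simulation dualities symmetrically so that the argument covers all four combinations without repetition of essentially the same computation.
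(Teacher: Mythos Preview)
Your proposal is correct and follows essentially the same approach as the paper. The only cosmetic differences are that in (a) the paper writes the argument as a single relational-calculus chain $\sigma^A\circ\tau_u^A\le\sigma^B\circ\varphi^{-1}\circ\tau_u^A\le\sigma^B\circ\tau_u^B$ rather than picking elements, and in (c) the paper re-does the induction on the length of $u$ explicitly instead of invoking the word-extension lemma you cite; neither difference is substantive.
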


\begin{proof}
(a) Let $\varphi $ be a weak forward simulation. Then for every $u\in X^*$ we have that
\[
\sigma^A\circ \delta_u^A\circ \tau^A = \sigma^A\circ \tau_u^A \leqslant \sigma^B\circ \varphi^{-1}\circ \tau_u^A \leqslant
\sigma^B\circ \tau_u^B = \sigma^B\circ\delta_u^B\circ \tau^B,
\]
and by (\ref{eq:lang.rec.2}) we obtain that $L({\cal A})\subseteq L({\cal B})$. Similarly, if $\varphi $ is a
weak backward simulation,~then $L({\cal A})\subseteq L({\cal B})$.

(b) This follows immediately by (a).

(c) Let $\varphi $ be a forward simulation.~From (\ref{eq:fs.nda.s}) it follows immediately that (\ref{eq:wfs.sigma}) holds, and by (\ref{eq:fs.nda.t}) we obtain that (\ref{eq:wfs.tau}) holds for $u=\varepsilon$.~Suppose that (\ref{eq:wfs.tau}) holds for all words of length $n$, for some natural number $n$, and consider a word $u\in X^*$ of length $n+1$, i.e., $u=xv$, for some $x\in X$ and $v\in X^*$ such that $v$ has the length $n$. Then
\[
\varphi^{-1}\circ \tau_u^A = \varphi^{-1}\circ \delta_x^A\circ \tau_v^A \subseteq \delta_x^B\circ \varphi^{-1}\circ  \tau_v^A \subseteq \delta_x^B\circ \tau_v^B = \tau_u^B .
\]
Hence, by induction we obtain that (\ref{eq:wfs.tau}) holds for every $u\in X^*$.~In a similar way we prove the assertion concerning backward simulations.
\end{proof}

\begin{lemma}\label{le:wdual}
Let ${\cal A}=(A,\delta^A,\sigma^A,\tau^A)$ and ${\cal B}=(B,\delta^B,\sigma^B,\tau^B)$ be automata. A relation $\varphi\subseteq A\times B$ is a weak backward bisimulation from $\cal A$ to $\cal B$ if and only if it is a weak forward bisimulation from $\bar{\cal A}$ to $\bar {\cal B}$.
\end{lemma}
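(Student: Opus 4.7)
The plan is to reduce the equivalence to a purely mechanical unpacking of the definitions, once we have recorded how the sets $\sigma^A_u$ and $\tau^A_u$ transform under reversal. First I would observe that $\bar\delta_x^A=(\delta_x^A)^{-1}$ for every $x\in X$, and then prove by induction on $|u|$ that
\[
\bar\delta_u^A=(\delta_{u^R}^A)^{-1},
\]
where $u^R$ denotes the reverse of $u\in X^*$; the inductive step uses (\ref{eq:comp.inv}) together with the definition $\delta_{uv}^A=\delta_u^A\circ \delta_v^A$. Combined with $\bar\sigma^A=\tau^A$ and $\bar\tau^A=\sigma^A$, and using (\ref{eq:inv.comp.rs}), this yields
\[
\sigma_u^{\bar{\cal A}}=\bar\sigma^A\circ\bar\delta_u^A=\tau^A\circ(\delta_{u^R}^A)^{-1}=\delta_{u^R}^A\circ\tau^A=\tau_{u^R}^A,
\]
and symmetrically $\tau_u^{\bar{\cal A}}=\sigma_{u^R}^A$, and the analogous identities for $\cal B$.

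Next I would translate each of the four defining inclusions for a weak forward bisimulation from $\bar{\cal A}$ to $\bar{\cal B}$, namely (\ref{eq:wfs.tau}), (\ref{eq:wfs.sigma}), (\ref{eq:wfsi.tau}), (\ref{eq:wfsi.sigma}) with $\cal A,B$ replaced by $\bar{\cal A},\bar{\cal B}$. Using the transforms above together with the identities $\varphi^{-1}\circ\alpha=\alpha\circ\varphi$ and $\varphi\circ\beta=\beta\circ\varphi^{-1}$ (both instances of (\ref{eq:inv.comp.rs})), the four inclusions become, respectively,
\[
\sigma_{u^R}^A\circ\varphi\subseteq\sigma_{u^R}^B,\quad \tau^A\subseteq\varphi\circ\tau^B,\quad \sigma_{u^R}^B\circ\varphi^{-1}\subseteq\sigma_{u^R}^A,\quad \tau^B\subseteq\varphi^{-1}\circ\tau^A.
\]
Since the quantifier ``for every $u\in X^*$'' is unchanged when $u$ is replaced by $u^R$ (the reversal is a bijection of $X^*$), these four conditions are precisely (\ref{eq:wbs.sigma}), (\ref{eq:wbs.tau}), (\ref{eq:wbsi.sigma}), (\ref{eq:wbsi.tau}), i.e.\ the defining conditions for $\varphi$ to be a weak backward bisimulation from $\cal A$ to $\cal B$. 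This gives both implications simultaneously.

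The only real obstacle is the bookkeeping: keeping track of the interplay between inversion of relations, reversal of words, and the swap of initial and terminal states, and in particular verifying the identity $\bar\delta_u^A=(\delta_{u^R}^A)^{-1}$ that powers every subsequent computation. Once that identity is in hand, each of the four inclusions transforms into the desired one by a single application of (\ref{eq:inv.comp.rs}), so no further work is needed.
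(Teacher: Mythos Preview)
Your proof is correct and follows essentially the same route as the paper's. The paper's proof is a two-line sketch that phrases the duality at the simulation level (showing that $\varphi$ is a weak backward simulation from $\cal A$ to $\cal B$ iff $\varphi^{-1}$ is a weak forward simulation from $\bar{\cal B}$ to $\bar{\cal A}$, and symmetrically), whereas you unpack all four defining inclusions directly; both arguments rest on the same identity $\bar\delta_u^A=(\delta_{u^R}^A)^{-1}$ and the transforms $\tau_u^{\bar{\cal A}}=\sigma_{u^R}^A$, $\sigma_u^{\bar{\cal A}}=\tau_{u^R}^A$ that you make explicit.
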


\begin{proof}
We can easily show that $\varphi $ is a weak backward simulation from $\cal A$ to $\cal B$ if and only if $\varphi^{-1}$ is a weak forward simulation from $\bar {\cal B}$ to $\bar{\cal A}$, and $\varphi^{-1}$ is a weak backward simulation from $\cal B$ to $\cal A$ if and only if
$\varphi$ is a weak forward simulation from $\bar {\cal A}$ to $\bar{\cal B}$.
\end{proof}

According to the previous lemma, for any statement on weak forward bisimulations which~is~univer\-sally~valid (valid for all nondeterministic automata) there is the corresponding universally valid statement on weak backward bisimulations.~For that reason, we will deal only with weak forward bisimulations.

It is easy to show that the following is true.

\begin{lemma}\label{le:wfs-b.comp.union}
The composition of two weak forward simulations {\rm ({\it resp.~bisimulations\/})} and the union of an arbitrary family of weak forward simulations {\rm ({\it resp.~bisimulations\/})} are also weak forward simulations {\rm ({\it resp.~bisimulations\/})}.
\end{lemma}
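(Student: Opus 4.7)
The plan is to verify each of the four defining inclusions by direct relational calculus, exploiting associativity of composition~(\ref{eq:comp.as}), monotonicity~(\ref{eq:comp.mon}), distribution over union~(\ref{eq:comp.un}), and the identity $(\varphi\circ\psi)^{-1}=\psi^{-1}\circ\varphi^{-1}$ from~(\ref{eq:comp.inv}). Everything reduces to these one-line manipulations, so the main ``obstacle'' is really just bookkeeping of the four weak-simulation conditions and ensuring that the bisimulation case follows mechanically from the simulation case applied to both $\varphi$ and $\varphi^{-1}$.

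\textbf{Composition.} Let $\varphi\subseteq A\times B$ be a weak forward simulation from $\mathcal A$ to $\mathcal B$ and $\psi\subseteq B\times C$ a weak forward simulation from $\mathcal B$ to $\mathcal C$. To check~(\ref{eq:wfs.tau}) for $\varphi\circ\psi$, I would write, for any $u\in X^*$,
\[
(\varphi\circ\psi)^{-1}\circ\tau_u^A \;=\; \psi^{-1}\circ(\varphi^{-1}\circ\tau_u^A) \;\subseteq\; \psi^{-1}\circ\tau_u^B \;\subseteq\; \tau_u^C,
\]
using~(\ref{eq:wfs.tau}) twice together with~(\ref{eq:comp.mon}). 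For~(\ref{eq:wfs.sigma}) I would chain
\[
\sigma^A \;\subseteq\; \sigma^B\circ\varphi^{-1} \;\subseteq\; \sigma^C\circ\psi^{-1}\circ\varphi^{-1} \;=\; \sigma^C\circ(\varphi\circ\psi)^{-1}.
\]
So $\varphi\circ\psi$ is a weak forward simulation. For weak forward bisimulations I would apply the same argument to $\psi^{-1}\circ\varphi^{-1}=(\varphi\circ\psi)^{-1}$ to obtain~(\ref{eq:wfsi.tau}) and~(\ref{eq:wfsi.sigma}).

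\textbf{Union.} Let $\{\varphi_i\}_{i\in I}$ be a family of weak forward simulations from $\mathcal A$ to $\mathcal B$ and set $\varphi=\bigcup_{i\in I}\varphi_i$, so $\varphi^{-1}=\bigcup_{i\in I}\varphi_i^{-1}$ by~(\ref{eq:inv.mon}) (and its obvious converse). For~(\ref{eq:wfs.tau}), using~(\ref{eq:comp.un}),
\[
\varphi^{-1}\circ\tau_u^A \;=\; \bigcup_{i\in I}(\varphi_i^{-1}\circ\tau_u^A) \;\subseteq\; \tau_u^B
\]
for every $u\in X^*$. For~(\ref{eq:wfs.sigma}), fix any $j\in I$; then $\sigma^A\subseteq\sigma^B\circ\varphi_j^{-1}\subseteq\sigma^B\circ\varphi^{-1}$ by monotonicity of composition in~(\ref{eq:comp.mon}). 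Thus $\varphi$ is a weak forward simulation. For bisimulations, the conditions~(\ref{eq:wfsi.tau}) and~(\ref{eq:wfsi.sigma}) are handled identically, since $\varphi^{-1}$ is itself the union of the family $\{\varphi_i^{-1}\}_{i\in I}$ of weak forward simulations from $\mathcal B$ to $\mathcal A$, so applying the already-proved simulation case to this inverse family closes the argument.
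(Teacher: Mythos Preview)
Your argument is correct and is precisely the routine relational-calculus verification one would expect; the paper itself omits the proof entirely, asserting only that ``it is easy to show that the following is true,'' so your approach is in complete agreement with the intended (unwritten) proof.
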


Now we state and prove fundamental results concerning weak forward simulations and bisimulations. The first of them is a theorem that gives a way to decide whether there is a weak forward simulation~between two automata, and whenever it exists, provides a way to construct the greatest one.

\begin{theorem}\label{th:gwfs}
Let ${\cal A}=(A,\delta^A,\sigma^A,\tau^A)$ and ${\cal B}=(B,\delta^B,\sigma^B,\tau^B)$ be automata and define a relation $\lambda \subseteq A\times B$ by
\begin{equation}\label{eq:gwfs}
(a,b)\in \lambda \ \Leftrightarrow\ (\forall u\in X^*)\,(a\in \tau_u^A\, \Rightarrow \, b\in \tau_u^B),
\end{equation}
for all $a\in A$ and $b\in B$.

If $\lambda $ satisfies $(\ref{eq:wfs.sigma})$, then it is the greatest weak forward simulation from $\cal A$ to $\cal B$. Otherwise, if $\lambda $ does not satisfy $(\ref{eq:wfs.sigma})$, then there is no any weak forward simulation from $\cal A$ to $\cal B$.
\end{theorem}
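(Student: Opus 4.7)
The plan is to show that the condition $(\ref{eq:wfs.tau})$ on a relation $\varphi \subseteq A \times B$ is equivalent to the inclusion $\varphi \subseteq \lambda$, whence $\lambda$ is automatically the pointwise-largest candidate for a weak forward simulation, and only the initial-state condition $(\ref{eq:wfs.sigma})$ remains to be checked.

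First I would unwind the definitions. The condition $\varphi^{-1}\circ \tau_u^A\subseteq \tau_u^B$ says that for every $b\in B$, if there is some $a\in \tau_u^A$ with $(a,b)\in \varphi$, then $b\in \tau_u^B$. Quantifying over $u\in X^*$ and rearranging, this is equivalent to: for every $(a,b)\in \varphi$ and every $u\in X^*$, $a\in \tau_u^A$ implies $b\in \tau_u^B$. But that is exactly the defining property $(\ref{eq:gwfs})$ of membership in $\lambda$. Hence $\varphi$ satisfies $(\ref{eq:wfs.tau})$ if and only if $\varphi \subseteq \lambda$. In particular, taking $\varphi = \lambda$ shows that $\lambda$ itself satisfies $(\ref{eq:wfs.tau})$, and it is the greatest such relation.

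Now I would split on the hypothesis on $\lambda$. If $\lambda$ satisfies $(\ref{eq:wfs.sigma})$, then $\lambda$ satisfies both defining clauses of a weak forward simulation, so it is one; by the maximality just established, any weak forward simulation $\varphi$ satisfies $(\ref{eq:wfs.tau})$, hence $\varphi \subseteq \lambda$, so $\lambda$ is the greatest. Conversely, suppose $\varphi$ is any weak forward simulation from $\cal A$ to $\cal B$. Then $\varphi$ satisfies $(\ref{eq:wfs.tau})$, hence $\varphi \subseteq \lambda$, whence $\varphi^{-1}\subseteq \lambda^{-1}$ and, by monotonicity of composition $(\ref{eq:comp.mon})$,
\[
\sigma^A \subseteq \sigma^B\circ \varphi^{-1}\subseteq \sigma^B\circ \lambda^{-1},
\]
so $\lambda$ satisfies $(\ref{eq:wfs.sigma})$. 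Contrapositively, if $\lambda$ does not satisfy $(\ref{eq:wfs.sigma})$, no weak forward simulation from $\cal A$ to $\cal B$ can exist.

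There is no real obstacle here: once the equivalence between clause $(\ref{eq:wfs.tau})$ and the inclusion $\varphi\subseteq \lambda$ is observed, everything is a one-line check. The only thing requiring mild care is the quantifier swap in the first step, and the use of $(\ref{eq:comp.mon})$ to transport $(\ref{eq:wfs.sigma})$ from $\varphi$ up to $\lambda$ in the contrapositive argument.
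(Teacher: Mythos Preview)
Your proof is correct and follows essentially the same approach as the paper's. The only organizational difference is that you isolate the equivalence ``$\varphi$ satisfies $(\ref{eq:wfs.tau})$ if and only if $\varphi\subseteq\lambda$'' as a single lemma up front, whereas the paper proves the two directions separately at the points where they are needed; the underlying computations are identical.
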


\begin{proof} Let $\lambda $ satisfy (\ref{eq:wfs.sigma}).~If $b\in \lambda^{-1}\circ \tau_u^A$, then there exists $a\in \tau_u^A$ such that $(b,a)\in \lambda^{-1}$, and by (\ref{eq:gwfs}) we obtain that $b\in \tau_u^B$.~Therefore, $\lambda^{-1}\circ \tau_u^A\subseteq \tau_u^B$, and since $\lambda $ satisfies (\ref{eq:wfs.sigma}), we conclude that $\lambda $ is a weak forward simulation from $\cal A$ to $\cal B$.

Let $\varphi $ be an arbitrary weak forward simulation from $\cal A$ to $\cal B$, and let $(a,b)\in \varphi $.~For an arbitrary $u\in X^*$, if $a\in \tau_u^A$ then $b\in \varphi^{-1}\circ \tau_u^A\subseteq \tau_u^B$.~Therefore, we have proved that $(a,b)\in \lambda $, which means that every weak forward simulation from $\cal A$ to $\cal B$ is contained in $\lambda $.~Therefore, $\lambda $ is the greatest weak forward simulation from $\cal A$ to $\cal B$.

Suppose that $\lambda $ does not satisfy (\ref{eq:wfs.sigma}).~If $\varphi $ is an arbitrary weak forward simulation from $\cal A$ to $\cal B$, then $\sigma^A\subseteq \sigma^B\circ \varphi^{-1}\subseteq \sigma^B\circ \lambda^{-1}$, what is in contradiction with the assumption that $\lambda $ does not satisfy (\ref{eq:wfs.sigma}).~Therefore, we conclude that there is no any weak forward simulation from $\cal A$ to $\cal B$.
\end{proof}

The greatest weak forward simulation can also be represented in the following way.

\begin{corollary}\label{cor:gwfs}
Let ${\cal A}=(A,\delta^A,\sigma^A,\tau^A)$ and ${\cal B}=(B,\delta^B,\sigma^B,\tau^B)$ be automata such that there exists at least one weak forward simulation from $\cal A$ to $\cal B$, and let $\lambda $ be the greatest weak forward simulation from $\cal A$ to $\cal B$.~Then
\begin{equation}\label{eq:gwfsl}
(a,b)\in \lambda \ \Leftrightarrow\ \overrightarrow{L}(a)\subseteq \overrightarrow{L}(b),
\end{equation}
for all $a\in A$ and $b\in B$.
\end{corollary}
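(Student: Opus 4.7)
The plan is to unravel the definitions on both sides and observe that they are literally the same statement. By hypothesis a weak forward simulation from $\cal A$ to $\cal B$ exists, so by Theorem \ref{th:gwfs} the greatest one is the relation $\lambda$ described by (\ref{eq:gwfs}), namely
\[
(a,b)\in \lambda \ \iff\ (\forall u\in X^*)\,(a\in \tau_u^A\ \Rightarrow\ b\in \tau_u^B).
\]
So I will simply rewrite the right-hand side in terms of the right languages.

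By the definition (\ref{eq:r.l.lang}) of the right language, the condition $a\in \tau_u^A$ is just $u\in \overrightarrow{L}(a)$, and similarly $b\in \tau_u^B$ is $u\in \overrightarrow{L}(b)$. Substituting into (\ref{eq:gwfs}), for all $a\in A$ and $b\in B$ we obtain
\[
(a,b)\in \lambda \ \iff\ (\forall u\in X^*)\bigl(u\in \overrightarrow{L}(a)\ \Rightarrow\ u\in \overrightarrow{L}(b)\bigr)\ \iff\ \overrightarrow{L}(a)\subseteq \overrightarrow{L}(b),
\]
which is exactly (\ref{eq:gwfsl}).

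There is essentially no obstacle: the corollary is a direct reformulation of Theorem \ref{th:gwfs} using the notation introduced in (\ref{eq:r.l.lang}). The only thing to be cautious about is that Theorem \ref{th:gwfs} is applicable precisely because the existence of at least one weak forward simulation from $\cal A$ to $\cal B$ is assumed, which guarantees that the relation $\lambda$ defined in (\ref{eq:gwfs}) does satisfy (\ref{eq:wfs.sigma}) and hence really is the greatest weak forward simulation.
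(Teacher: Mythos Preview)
Your proof is correct and follows exactly the same approach as the paper, which also derives the corollary immediately from (\ref{eq:gwfs}) together with the observation that $u\in \overrightarrow{L}(a)$ if and only if $a\in \tau_u^A$. Your added remark about the existence hypothesis ensuring that $\lambda$ satisfies (\ref{eq:wfs.sigma}) is a sensible clarification, though the paper leaves this implicit.
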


\begin{proof}
This is an immediate consequence of (\ref{eq:gwfs}) and the fact that $u\in \overrightarrow{L}(a)$ if and only if $a\in \tau_u^A$.
\end{proof}

The next theorem gives a way to decide whether there is a weak forward bisimu\-lation between two~auto\-mata, and if it exists, provides a way to construct the greatest one.

\begin{theorem}\label{th:gwfb}
Let ${\cal A}=(A,\delta^A,\sigma^A,\tau^A)$ and ${\cal B}=(B,\delta^B,\sigma^B,\tau^B)$ be automata and define a relation $\mu\subseteq A\times B$ by
\begin{equation}\label{eq:gwfb}
(a,b)\in \mu \ \Leftrightarrow\ (\forall u\in X^*)\,(a\in \tau_u^A\, \Leftrightarrow \, b\in \tau_u^B),
\end{equation}
for all $a\in A$ and $b\in B$.

If $\mu $ satisfies (\ref{eq:wfs.sigma}) and (\ref{eq:wfsi.sigma}), then it is the greatest weak forward bisimulation from $\cal A$ to $\cal B$, and it is a partial uniform relation. Otherwise, if $\mu $ does not satisfy (\ref{eq:wfs.sigma}) and (\ref{eq:wfsi.sigma}), then there is no any weak forward bisimulation from $\cal A$ to $\cal B$.
\end{theorem}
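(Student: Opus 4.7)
The plan is to mirror the structure of Theorem \ref{th:gwfs}, doubling up the arguments to account for the bidirectional condition in the definition of $\mu$, and then add a short calculation for the partial uniformity.

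First I would verify that $\mu$ automatically satisfies both tau-conditions $(\ref{eq:wfs.tau})$ and $(\ref{eq:wfsi.tau})$, for every $u\in X^*$. Indeed, if $b\in \mu^{-1}\circ \tau_u^A$, pick $a\in \tau_u^A$ with $(a,b)\in \mu$; by $(\ref{eq:gwfb})$ we get $b\in \tau_u^B$, hence $\mu^{-1}\circ \tau_u^A\subseteq \tau_u^B$. The inclusion $\mu\circ \tau_u^B\subseteq \tau_u^A$ is proved symmetrically. So the only obstruction to $\mu$ being a weak forward bisimulation is the pair of sigma-conditions $(\ref{eq:wfs.sigma})$ and $(\ref{eq:wfsi.sigma})$. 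Under the hypothesis that these hold, $\mu$ is therefore a weak forward bisimulation from $\cal A$ to $\cal B$.

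Next I would prove maximality. Let $\varphi$ be an arbitrary weak forward bisimulation and let $(a,b)\in\varphi$. For any $u\in X^*$, if $a\in\tau_u^A$ then $b\in\varphi^{-1}\circ\tau_u^A\subseteq\tau_u^B$ by $(\ref{eq:wfs.tau})$, and if $b\in\tau_u^B$ then $a\in\varphi\circ\tau_u^B\subseteq\tau_u^A$ by $(\ref{eq:wfsi.tau})$. Hence $a\in\tau_u^A\iff b\in\tau_u^B$ for every $u$, so $(a,b)\in\mu$ and thus $\varphi\subseteq\mu$. This already yields the second part of the statement: if $\mu$ fails one of $(\ref{eq:wfs.sigma})$, $(\ref{eq:wfsi.sigma})$, then any hypothetical weak forward bisimulation $\varphi$ would give $\sigma^A\subseteq\sigma^B\circ\varphi^{-1}\subseteq\sigma^B\circ\mu^{-1}$ and $\sigma^B\subseteq\sigma^A\circ\varphi\subseteq\sigma^A\circ\mu$, contradicting the failure. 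So no weak forward bisimulation exists.

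Finally I would check that $\mu$ is a partial uniform relation, i.e.\ $\mu\circ\mu^{-1}\circ\mu\subseteq\mu$. Suppose $(a,b)\in\mu\circ\mu^{-1}\circ\mu$; then there exist $b'\in B$ and $a'\in A$ with $(a,b'),(a',b')\in\mu$ and $(a',b)\in\mu$. Applying the defining biconditional of $(\ref{eq:gwfb})$ three times, for every $u\in X^*$:
\[
a\in\tau_u^A\iff b'\in\tau_u^B\iff a'\in\tau_u^A\iff b\in\tau_u^B,
\]
so $(a,b)\in\mu$. Hence $\mu$ is partial uniform.

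I do not anticipate a genuine obstacle; the argument is essentially the same as for Theorem \ref{th:gwfs}, with the equivalence in $(\ref{eq:gwfb})$ replacing the implication in $(\ref{eq:gwfs})$. The one point that needs a moment of care is the symmetric handling of the two tau-conditions and the resulting chain of three biconditionals used to verify partial uniformity; otherwise everything reduces to unfolding definitions.
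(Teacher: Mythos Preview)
Your proposal is correct and follows essentially the same route the paper indicates: the paper's proof simply says ``This theorem can be proved in a similar way as Theorems \ref{th:gwfs} and \ref{th:gfb}'', and you have faithfully carried out exactly that, doubling the one-sided argument of Theorem \ref{th:gwfs} to handle the biconditional. The only minor variation is in the partial-uniformity step: the paper's template (Theorem \ref{th:gfb}) would invoke Lemma \ref{le:wfs-b.comp.union} to conclude that $\mu\circ\mu^{-1}\circ\mu$ is itself a weak forward bisimulation and hence contained in the greatest one $\mu$, whereas you verify $\mu\circ\mu^{-1}\circ\mu\subseteq\mu$ directly by chaining three instances of the defining biconditional---both arguments are immediate and equivalent here.
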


\begin{proof}
This theorem can be proved in a similar way as Theorems \ref{th:gwfs} and \ref{th:gfb}.
\end{proof}

Also, the greatest weak forward bisimulation can be represented as follows.

\begin{corollary}\label{cor:gwbs}
Let ${\cal A}=(A,\delta^A,\sigma^A,\tau^A)$ and ${\cal B}=(B,\delta^B,\sigma^B,\tau^B)$ be automata such that there exists at least one weak forward bisimulation from $\cal A$ to $\cal B$, and let $\mu $ be the greatest weak forward bisimulation from $\cal A$ to $\cal B$.~Then
\begin{equation}\label{eq:gwfsl}
(a,b)\in \mu \ \Leftrightarrow\ \overrightarrow{L}(a)= \overrightarrow{L}(b),
\end{equation}
for all $a\in A$ and $b\in B$.
\end{corollary}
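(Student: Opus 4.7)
The plan is to observe that this corollary follows almost immediately by unfolding definitions, mirroring the proof of Corollary \ref{cor:gwfs}. By the hypothesis that at least one weak forward bisimulation exists, Theorem \ref{th:gwfb} guarantees that the relation $\mu$ from (\ref{eq:gwfb}) satisfies (\ref{eq:wfs.sigma}) and (\ref{eq:wfsi.sigma}) and is precisely the greatest weak forward bisimulation from $\cal A$ to $\cal B$. So it suffices to translate the characterization (\ref{eq:gwfb}) of $\mu$ into a statement about right languages.

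First I would recall that, by the very definition of the right language in (\ref{eq:r.l.lang}), for every state $a\in A$ and every word $u\in X^*$ we have $u\in \overrightarrow{L}_{\cal A}(a)$ if and only if $a\in \tau_u^A$, and analogously $u\in \overrightarrow{L}_{\cal B}(b)$ if and only if $b\in \tau_u^B$. Consequently, the biconditional ``$a\in \tau_u^A \Leftrightarrow b\in \tau_u^B$'' that appears under the universal quantifier in (\ref{eq:gwfb}) is equivalent to the biconditional ``$u\in \overrightarrow{L}(a) \Leftrightarrow u\in \overrightarrow{L}(b)$''.

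Then the plan is to chain these equivalences: for arbitrary $a\in A$ and $b\in B$,
\[
(a,b)\in\mu \ \iff\ (\forall u\in X^*)\bigl(a\in\tau_u^A\iff b\in\tau_u^B\bigr)\ \iff\ (\forall u\in X^*)\bigl(u\in\overrightarrow{L}(a)\iff u\in\overrightarrow{L}(b)\bigr)\ \iff\ \overrightarrow{L}(a)=\overrightarrow{L}(b),
\]
where the last equivalence is just the extensionality of sets. Since $\mu$ is, by Theorem \ref{th:gwfb}, the greatest weak forward bisimulation, this is the desired characterization. There is essentially no obstacle here; the only thing to be careful about is to invoke Theorem \ref{th:gwfb} (rather than trying to re-derive the greatness of $\mu$ from scratch) so that we can use the explicit description (\ref{eq:gwfb}) for free.
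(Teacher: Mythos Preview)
Your proposal is correct and follows essentially the same approach as the paper, which simply remarks that the result is proved in the same way as Corollary~\ref{cor:gwfs} (i.e., by combining the explicit description (\ref{eq:gwfb}) of $\mu$ from Theorem~\ref{th:gwfb} with the equivalence $u\in\overrightarrow{L}(a)\iff a\in\tau_u^A$). Your write-up is just a more detailed unfolding of that one-line argument.
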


\begin{proof}
We can prove this corollary in a similar way as Corollary \ref{cor:gwfs}.
\end{proof}

Let ${\cal A}=(A,\delta^A,\sigma^A,\tau^A)$ be an arbitrary automaton.~A weak forward~bisimulation from $\cal A$ into itself will be called a {\it weak forward bisimulation on\/} $\cal A$ (analogously we define {\it weak backward bisimu\-la\-tions on\/}~$\cal A$).~The family of all weak forward bisimulations on $\cal A$ is non-empty (it contains at least the~equality relation),~and according to Theorem \ref{th:gwfb}, there is the greatest
weak forward bisimulation on $\cal A$, which is defined as in (\ref{eq:gwfb}), and it is easy to check that it is an equivalence (cf.~\cite{SCI.11}).~Weak forward bisimulations on $\cal A$ which are equivalences will be called {\it weak forward~bisimulation equivalences\/} (analogously we define {\it weak backward bisimu\-la\-tion equivalences\/}).~The set of all weak forward bisimulation equivalences on $\cal A$ we denote by ${\cal E}^{\mathrm{wfb}}({\cal A})$.

Note that condition (\ref{eq:wfs.sigma}) is satisfied whenever $A=B$ and $\varphi $ is a reflexive relation, and hence, it is satisfied whenever $A=B$ and $\varphi $ is an equivalence.~Therefore, an equivalence $E$ on $A$ is a weak forward bisimulation on $\cal A$ if and only if
\begin{equation}\label{eq:wfbe}
E\circ \tau_u^A\subseteq \tau_u^A, \quad\text{for every $u\in X^*$},
\end{equation}
or equivalently,
\begin{equation}\label{eq:wfbe2}
E\circ \tau_u^A= \tau_u^A, \quad\text{for every $u\in X^*$}.
\end{equation}
Analogously, an equivalence $E$ on $A$ is a weak backward bisimulation on $\cal A$ if and only if
\begin{equation}\label{eq:wbbe}
\sigma_u^A\circ E\subseteq \sigma_u^A, \quad\text{for every $u\in X^*$},
\end{equation}
or equivalently,
\begin{equation}\label{eq:wbbe2}
\sigma_u^A\circ E= \sigma_u^A, \quad\text{for every $u\in X^*$}.
\end{equation}

In Theorem \ref{th:lat.fbe} we proved that forward bisimulation equivalences on an automaton form a complete join-subsemilattice of the lattice of equivalences on this automaton.~For weak forward bisimulation equivalences we show even more, that they form a principal ideal of the lattice of equivalences.

\begin{theorem}\label{th:lat.wfbe}
Let ${\cal A}=(A,\delta^A,\sigma^A,\tau^A)$ be an automaton.

The set ${\cal E}^{\mathrm{wfb}}({\cal A})$ of all weak forward bisimulation equivalences on $\cal A$ forms a principal ideal of the lattice ${\cal E}(A)$ of all equivalences on $A$ generated by the relation $E^{\mathrm{wfb}}$ on $A$ defined by
\begin{equation}\label{eq:Ewfb}\
(a,a')\in E^{\mathrm{wfb}} \ \Leftrightarrow\ (\forall u\in X^*)\ (\,a\in \tau_u^A \ \Leftrightarrow\ a'\in\tau_u^A),
\end{equation}
for all $a,a'\in A$.
\end{theorem}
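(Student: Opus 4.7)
The plan is to establish the theorem by verifying the two defining properties of a principal ideal: that $E^{\mathrm{wfb}}$ itself lies in $\mathcal{E}^{\mathrm{wfb}}(\mathcal{A})$, and that $\mathcal{E}^{\mathrm{wfb}}(\mathcal{A})$ coincides with the downset $\{F \in \mathcal{E}(A) \mid F \subseteq E^{\mathrm{wfb}}\}$. I will use the characterization (\ref{eq:wfbe}) of weak forward bisimulation equivalences throughout, which reduces everything to manipulating the sets $\tau_u^A$.

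First I would check that $E^{\mathrm{wfb}}$ is an equivalence relation on $A$; reflexivity, symmetry, and transitivity all follow immediately from the biconditional form of (\ref{eq:Ewfb}). Next I would verify that $E^{\mathrm{wfb}} \in \mathcal{E}^{\mathrm{wfb}}(\mathcal{A})$ by showing $E^{\mathrm{wfb}} \circ \tau_u^A \subseteq \tau_u^A$ for every $u \in X^*$: if $a \in E^{\mathrm{wfb}} \circ \tau_u^A$, pick $a' \in \tau_u^A$ with $(a,a') \in E^{\mathrm{wfb}}$, and apply (\ref{eq:Ewfb}) in the direction $a' \in \tau_u^A \implies a \in \tau_u^A$.

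Next I would show the downset $\{F \in \mathcal{E}(A) \mid F \subseteq E^{\mathrm{wfb}}\}$ is contained in $\mathcal{E}^{\mathrm{wfb}}(\mathcal{A})$. This is immediate from monotonicity of composition (\ref{eq:comp.mon}): for any equivalence $F \subseteq E^{\mathrm{wfb}}$ and any $u \in X^*$, one has $F \circ \tau_u^A \subseteq E^{\mathrm{wfb}} \circ \tau_u^A \subseteq \tau_u^A$, so $F$ satisfies (\ref{eq:wfbe}). Conversely, I would show that every $E \in \mathcal{E}^{\mathrm{wfb}}(\mathcal{A})$ is contained in $E^{\mathrm{wfb}}$: given $(a,a') \in E$ and $u \in X^*$ with $a \in \tau_u^A$, symmetry of $E$ gives $(a',a) \in E$, hence $a' \in E \circ \tau_u^A \subseteq \tau_u^A$; the reverse implication follows by symmetry, so $(a,a') \in E^{\mathrm{wfb}}$.

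I do not anticipate a serious obstacle — the only mild subtlety is making the direction $E \subseteq E^{\mathrm{wfb}}$ precise, which requires invoking symmetry of $E$ to apply (\ref{eq:wfbe}) in both directions of the biconditional in (\ref{eq:Ewfb}). Combining the four steps, $E^{\mathrm{wfb}}$ is the greatest element of $\mathcal{E}^{\mathrm{wfb}}(\mathcal{A})$ and $\mathcal{E}^{\mathrm{wfb}}(\mathcal{A})$ coincides with the principal ideal of $\mathcal{E}(A)$ generated by $E^{\mathrm{wfb}}$, completing the proof.
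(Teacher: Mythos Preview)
Your proposal is correct and follows essentially the same approach as the paper's proof: verify that $E^{\mathrm{wfb}}$ is an equivalence satisfying (\ref{eq:wfbe}), then use monotonicity of composition for the downward closure and symmetry of $E$ for the reverse inclusion. The arguments are step-by-step identical in substance.
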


\begin{proof}
It is clear that $E^{\mathrm{wfb}}$ is an equivalence.~For arbitrary $u\in X^*$ and $a\in A$, by $a\in E^{\mathrm{wfb}}\circ\tau_u^A$ it follows that $(a,a')\in E^{\mathrm{wfb}}$ and $a'\in \tau_u^A$, for some $a'\in A$, and by (\ref{eq:Ewfb}) we obtain that $a\in \tau_u^A$. Therefore, $E^{\mathrm{wfb}}\in {\cal E}^{\mathrm{wfb}}({\cal A})$.

Consider an arbitrary $E\in {\cal E}(A)$.~If $E\subseteq E^{\mathrm{wfb}}$, then $E\circ \tau_u^A\subseteq E^{\mathrm{wfb}}\circ \tau_u^A\subseteq \tau_u^A$, so $E\in {\cal E}^{\mathrm{wfb}}({\cal A})$.~Conversely,~let $E \in {\cal E}^{\mathrm{wfb}}({\cal A})$, i.e., $E\circ \tau_u^A\subseteq \tau_u^A$, for each $u\in X^*$.~For arbitrary $(a,a')\in E$ and $u\in X^*$, if $a\in \tau_u^A$, then $a\in E\circ \tau_u^A\subseteq \tau_u^A$, and by symmetry, if $a'\in \tau_u^A$, then $a\in \tau_u^A$.~By this it follows that $(a,a')\in E^{\mathrm{wfb}}$.~Therefore, $E\subseteq E^{\mathrm{wfb}}$ if and only if $E\in {\cal E}^{\mathrm{wfb}}({\cal A})$, and consequently,
${\cal E}^{\mathrm{wfb}}({\cal A})$ is the principal ideal of ${\cal E}(A)$ generated by $E^{\mathrm{wfb}}$.
\end{proof}

\section{Uniform weak forward bisimulations}

In this section we study weak forward bisimulations which are uniform relations.~Within the class~of~uniform relations, weak forward bisimulations can be characterized as follows.

\begin{theorem}\label{th:uwfb0}
Let ${\cal A}=(A,\delta^A,\sigma^A,\tau^A)$ and ${\cal B}=(B,\delta^B,\sigma^B,\tau^B)$ be
automata and let $\varphi \subseteq A\times B$ be a uniform relation.
Then $\varphi $ is a weak forward bisimulation if and only if the following hold:
\begin{align}
\sigma^A\circ\varphi&=\sigma^B\circ\varphi^{-1}\circ\varphi, &
\sigma^A\circ\varphi\circ\varphi^{-1}&=\sigma^B\circ\varphi^{-1}, &&\label{eq:uwfb.sigma} \\
\varphi^{-1}\circ \tau_u^A& = \tau_u^B, \quad \text{for each $u\in X^*$}, &\ \ \tau_u^A&=\varphi\circ \tau_u^B,  \quad\text{for each $u\in X^*$}.& & \label{eq:uwfb0.tau}
\end{align}
\end{theorem}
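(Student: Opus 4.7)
The plan is to mimic the proof of Theorem \ref{th:ufb0} and to exploit throughout that, since $\varphi $ is a uniform relation, Theorem \ref{th:ur} gives us the identity $\varphi\circ\varphi^{-1}\circ\varphi =\varphi $ together with the reflexivity of $\varphi\circ\varphi^{-1}$ on $A$ (from completeness of $\varphi $) and of $\varphi^{-1}\circ\varphi $ on $B$ (from surjectivity of $\varphi $). These three facts do essentially all of the work.

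For the direct implication, I would start from the four defining inclusions (\ref{eq:wfs.tau}), (\ref{eq:wfs.sigma}), (\ref{eq:wfsi.tau}), (\ref{eq:wfsi.sigma}) of a weak forward bisimulation. Composing $\sigma^A\subseteq \sigma^B\circ\varphi^{-1}$ on the right by $\varphi $ and then sandwiching via $\varphi\circ\varphi^{-1}\circ\varphi =\varphi $, I get $\sigma^A\circ\varphi\subseteq \sigma^B\circ\varphi^{-1}\circ\varphi\subseteq \sigma^A\circ\varphi\circ\varphi^{-1}\circ\varphi =\sigma^A\circ\varphi $, which yields the first equality in (\ref{eq:uwfb.sigma}); composing this equality further on the right by $\varphi^{-1}$ and using $\varphi^{-1}\circ\varphi\circ\varphi^{-1}=\varphi^{-1}$ produces the second equality. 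For (\ref{eq:uwfb0.tau}) I would fix $u\in X^*$ and use reflexivity of $\varphi\circ\varphi^{-1}$ to write $\tau_u^A\subseteq \varphi\circ\varphi^{-1}\circ\tau_u^A\subseteq \varphi\circ\tau_u^B\subseteq \tau_u^A$, and symmetrically, reflexivity of $\varphi^{-1}\circ\varphi $ gives $\tau_u^B\subseteq \varphi^{-1}\circ\varphi\circ\tau_u^B\subseteq \varphi^{-1}\circ\tau_u^A\subseteq \tau_u^B$.

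For the converse, (\ref{eq:wfs.tau}) and (\ref{eq:wfsi.tau}) are one-sided consequences of the two equalities in (\ref{eq:uwfb0.tau}); (\ref{eq:wfs.sigma}) then follows from the second equality in (\ref{eq:uwfb.sigma}) by reflexivity of $\varphi\circ\varphi^{-1}$, since $\sigma^A\subseteq \sigma^A\circ\varphi\circ\varphi^{-1}=\sigma^B\circ\varphi^{-1}$, and (\ref{eq:wfsi.sigma}) follows from the first equality in (\ref{eq:uwfb.sigma}) by reflexivity of $\varphi^{-1}\circ\varphi $.

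I do not expect a real obstacle here: the argument is routine relational calculus entirely parallel to that of Theorem \ref{th:ufb0}. In particular, because (\ref{eq:wfs.tau}) and (\ref{eq:wfsi.tau}) are already stated pointwise in $u\in X^*$, no induction on the length of $u$ is needed and the $\tau_u$-step collapses to a single chain of inclusions for each $u$. As with the remark following Theorem \ref{th:ufb0}, I would also note that the two equalities in (\ref{eq:uwfb.sigma}) are each equivalent to the other, one obtained from the other by composing with $\varphi $ or $\varphi^{-1}$ and using $\varphi\circ\varphi^{-1}\circ\varphi =\varphi $, so strictly speaking one of them is redundant; I would nevertheless keep both for symmetry with Theorem \ref{th:ufb0}.
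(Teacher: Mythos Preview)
Your proposal is correct and follows essentially the same approach as the paper's own proof: both directions are handled by the same relational-calculus manipulations using $\varphi\circ\varphi^{-1}\circ\varphi=\varphi$ and the reflexivity of $\varphi\circ\varphi^{-1}$ and $\varphi^{-1}\circ\varphi$, with the $\sigma$-equalities obtained by the sandwich $\sigma^A\circ\varphi\subseteq\sigma^B\circ\varphi^{-1}\circ\varphi\subseteq\sigma^A\circ\varphi$ and the $\tau_u$-equalities by $\tau_u^B\subseteq\varphi^{-1}\circ\varphi\circ\tau_u^B\subseteq\varphi^{-1}\circ\tau_u^A\subseteq\tau_u^B$ (and symmetrically). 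Your remark on the mutual redundancy of the two $\sigma$-equalities is also in line with the comment following Theorem~\ref{th:ufb0}.
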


\begin{proof}
Let $\varphi $ be a weak forward bisimulation. According to (\ref{eq:wfs.sigma})
and (\ref{eq:wfsi.sigma}) we have that
\[
\sigma^A\circ \varphi \subseteq \sigma^B\circ \varphi^{-1}\circ \varphi \subseteq
\sigma^A\circ \varphi \circ \varphi^{-1}\circ \varphi = \sigma^A\circ \varphi ,
\]
and hence, $\sigma^A\circ \varphi =  \sigma^B\circ \varphi^{-1}\circ \varphi$.~In a similar way we prove that $\sigma^B\circ \varphi^{-1}=\sigma^A\circ \varphi\circ \varphi^{-1}$.

Next, by reflexivity of $\varphi^{-1}\circ \varphi $, for each $u\in X^*$ we have that
\[
\tau_u^B\subseteq \varphi^{-1}\circ\varphi \circ \tau_u^B\subseteq \varphi^{-1}\circ \tau_u^A ,
\]
and by this and (\ref{eq:wfs.tau}) we obtain that $\tau_u^B=\varphi^{-1}\circ \tau_u^A$.~Similarly we prove that $\tau_u^A=\varphi\circ \tau_u^B$.

Conversely, let (\ref{eq:uwfb.sigma}) and (\ref{eq:uwfb0.tau}) hold.~It is clear that (\ref{eq:uwfb0.tau}) implies both (\ref{eq:wfs.tau}) and (\ref{eq:wfsi.tau}), and by reflexivity of $\varphi\circ \varphi^{-1}$ and $\varphi^{-1}\circ \varphi$ we obtain that
\[
\sigma^A\subseteq \sigma^A\circ \varphi\circ \varphi^{-1} = \sigma^B\circ \varphi^{-1}, \qquad
\sigma^B\subseteq \sigma^B\circ \varphi^{-1}\circ \varphi = \sigma^A\circ \varphi ,
\]
and hence, (\ref{eq:wfs.sigma}) and (\ref{eq:wfsi.sigma}) hold.~Therefore, $\varphi $ is a weak forward bisimulation.
\end{proof}

Further we prove two very useful lemmas.

\begin{lemma}\label{le:tau.factor}
Let ${\cal A}=(A,\delta^{A},\sigma^{A},\tau^{A})$ be an automaton, $E$ an equivalence on $A$, and ${\cal A}/E=(A/E,\delta^{A/E},\sigma^{A/E},\tau^{A/E})$ the factor automaton of $\cal A$ with respect to $E$.~If $E$ is weak forward bisimulation equivalence, then
\begin{equation}\label{eq:tau.factor}
E_{a}\in \tau^{A/E}_{u} \ \Leftrightarrow\ a\in \tau^{A}_{u},
\end{equation}
for all $u\in X^{*}$ and $a\in A$.
\end{lemma}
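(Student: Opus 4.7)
The plan is to prove the equivalence by induction on the length of the word $u\in X^*$, using the crucial absorption property $E\circ \tau_w^A=\tau_w^A$ for all $w\in X^*$ (equation (\ref{eq:wfbe2})), which is exactly what it means for $E$ to be a weak forward bisimulation equivalence.

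For the base case $u=\varepsilon$, I would unfold the definitions: $\tau_\varepsilon^A=\tau^A$ and $\tau_\varepsilon^{A/E}=\tau^{A/E}$, so by (\ref{eq:tauAE}) we have $E_a\in\tau^{A/E}\iff a\in E\circ\tau^A$, and since $E\circ\tau^A=\tau^A$ by (\ref{eq:wfbe2}), this gives $E_a\in\tau_\varepsilon^{A/E}\iff a\in\tau_\varepsilon^A$.

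For the inductive step, assume the claim for a word $v\in X^*$, and consider $u=xv$ with $x\in X$. Then $\tau_u^A=\delta_x^A\circ\tau_v^A$ and $\tau_u^{A/E}=\delta_x^{A/E}\circ\tau_v^{A/E}$. Expanding the condition $E_a\in\tau_u^{A/E}$ and using the definition (\ref{eq:dE1}) of $\delta_x^{A/E}$ together with the inductive hypothesis, one obtains
\[
E_a\in\tau_u^{A/E}\ \iff\ a\in E\circ\delta_x^A\circ E\circ\tau_v^A.
\]
Now applying (\ref{eq:wfbe2}) twice, first to collapse $E\circ\tau_v^A=\tau_v^A$ and then to collapse $E\circ\tau_u^A=\tau_u^A$, yields
\[
E\circ\delta_x^A\circ E\circ\tau_v^A=E\circ\delta_x^A\circ\tau_v^A=E\circ\tau_u^A=\tau_u^A,
\]
which closes the induction.

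The only subtle point is the innermost $E$ in $E\circ\delta_x^A\circ E$: unlike condition (\ref{eq:fb.on.2}) for ordinary forward bisimulations, a weak forward bisimulation equivalence need not satisfy $E\circ\delta_x^A\circ E=\delta_x^A\circ E$, so one cannot simplify $E\circ\delta_x^A\circ E$ directly. The trick is that we never have to, because the whole expression is composed on the right with $\tau_v^A$, and the absorption property (\ref{eq:wfbe2}) eliminates both copies of $E$ in one motion through the tail $\tau_v^A$. This is the only place where the weak forward bisimulation hypothesis is genuinely used, and it is the heart of the argument.
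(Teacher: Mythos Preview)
Your proof is correct and follows essentially the same route as the paper: induction on the length of $u$, using (\ref{eq:tauAE}) and (\ref{eq:wfbe2}) for the base case, and for the step unfolding $\delta_x^{A/E}$ via (\ref{eq:dE1}) to reach $E\circ\delta_x^A\circ E\circ\tau_v^A$ and then collapsing both copies of $E$ with (\ref{eq:wfbe2}). Your closing remark about why one cannot (and need not) invoke (\ref{eq:fb.on.2}) here is apt and makes explicit a point the paper leaves implicit.
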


\begin{proof}
The claim will be proved by induction on the length of the word $u$.

According to (\ref{eq:tauAE}) and the hypothesis of the lemma, the claim is true if $u$ is the empty word.~Suppose that the claim is true for some word $u$, and consider arbitrary $x\in X$ and $a\in A$.~Then we have that
\[
\begin{aligned}
E_a\in \tau_{xu}^{A/E}= \delta_x^{A/E}\circ \tau_{u}^{A/E}\ &\Leftrightarrow\ (\exists a'\in A)\ (\,(E_a,E_{a'})\in \delta_x^{A/E}\ \land\ E_{a'}\in \tau_{u}^{A/E}\,) \\
&\Leftrightarrow\ (\exists a'\in A)\ (\,(a,{a'})\in E\circ \delta_x^{A}\circ E\ \land\ {a'}\in \tau_{u}^{A}\,) \\
&\Leftrightarrow\ a\in E\circ \delta_x^{A}\circ E\circ \tau_{u}^{A} = E\circ \delta_x^{A}\circ \tau_{u}^{A} = E\circ \tau_{xu}^{A} = \tau_{xu}^{A}.
\end{aligned}
\]
Therefore, the claim is true for all $u\in X^*$ and $a\in A$.
\end{proof}

\begin{lemma}\label{le:nat.wfb}
Let ${\cal A}=(A,\delta^A,\sigma^A,\tau^A)$ be an automaton, let $E$ be an equivalence
on~$A$, let $\varphi_E$ be the natural function from $A$ to $A/E$, and let ${\cal A}/E=(A/E,\delta^{A/E},\sigma^{A/E},\tau^{A/E})$ be the factor automaton
of $\cal A$ with respect to~$E$.

Then $E$ is a weak forward bisimulation equivalence on $\cal A$ if and only if $\varphi_E$ is a weak forward bisimulation between $\cal A$ and ${\cal A}/E$.
\end{lemma}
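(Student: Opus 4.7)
The plan is to prove both directions using the identification $\varphi_E\circ\varphi_E^{-1}=E$ (and the fact that $\varphi_E^{-1}\circ\varphi_E$ is the equality relation on $A/E$, since $\varphi_E$ is a function onto $A/E$), together with Lemma~\ref{le:tau.factor}. I would first observe that as a relation, $(a,E_{a'})\in\varphi_E$ if and only if $(a,a')\in E$, so the compositions above are immediate.

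For the direct implication, suppose $E\in{\cal E}^{\mathrm{wfb}}({\cal A})$. Then Lemma~\ref{le:tau.factor} is applicable, and I would use it to verify the four defining conditions of a weak forward bisimulation. Conditions (\ref{eq:wfs.tau}) and (\ref{eq:wfsi.tau}) follow by rewriting: for each $u\in X^*$,
\[
\varphi_E^{-1}\circ\tau_u^A=\{E_a\mid a\in\tau_u^A\}=\tau_u^{A/E},
\]
while for $\varphi_E\circ\tau_u^{A/E}\subseteq\tau_u^A$ I would take $a\in\varphi_E\circ\tau_u^{A/E}$, which yields some $a'\in A$ with $(a,a')\in E$ and $E_{a'}\in\tau_u^{A/E}$; by Lemma~\ref{le:tau.factor}, $a'\in\tau_u^A$, and hence $a\in E\circ\tau_u^A=\tau_u^A$ by (\ref{eq:wfbe2}). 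Conditions (\ref{eq:wfs.sigma}) and (\ref{eq:wfsi.sigma}) for the initial states are then routine, using (\ref{eq:sigmaAE}) and the facts $(a,E_a)\in\varphi_E$ and $E\circ E=E$, exactly in the same spirit as the corresponding computation in the proof of Theorem~\ref{th:nat.fb}.

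For the converse, suppose $\varphi_E$ is a weak forward bisimulation between $\cal A$ and ${\cal A}/E$. I would exploit (\ref{eq:wfs.tau}) and (\ref{eq:wfsi.tau}) jointly: for every $u\in X^*$,
\[
E\circ\tau_u^A=\varphi_E\circ\varphi_E^{-1}\circ\tau_u^A\subseteq\varphi_E\circ\tau_u^{A/E}\subseteq\tau_u^A,
\]
so $E$ satisfies (\ref{eq:wfbe}), hence $E\in{\cal E}^{\mathrm{wfb}}({\cal A})$.

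The main obstacle, if any, is purely bookkeeping: one must be careful about the fact that $\varphi_E$ here is being treated simultaneously as a function (so that $\varphi_E^{-1}\circ\varphi_E$ is the equality relation on the image $A/E$) and as a relation (so that $\varphi_E\circ\varphi_E^{-1}=E$), but once these two identifications are laid out at the start, the proof reduces to the short composition argument above together with an application of Lemma~\ref{le:tau.factor}.
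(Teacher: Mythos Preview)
Your proposal is correct and follows essentially the same approach as the paper's proof: both directions hinge on Lemma~\ref{le:tau.factor} for the forward implication and on the weak forward bisimulation inequalities (\ref{eq:wfs.tau}) and (\ref{eq:wfsi.tau}) for the converse. Your converse is in fact slightly cleaner than the paper's version, since packaging the argument as the single chain $E\circ\tau_u^A=\varphi_E\circ\varphi_E^{-1}\circ\tau_u^A\subseteq\varphi_E\circ\tau_u^{A/E}\subseteq\tau_u^A$ makes the dependence on both (\ref{eq:wfs.tau}) and (\ref{eq:wfsi.tau}) transparent, whereas the paper carries out the same computation element-wise.
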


\begin{proof}
Let $E$ be a weak forward bisimulation equivalence on $\cal A$.~According to Lemma \ref{le:tau.factor}, for arbitrary $u\in X^*$ and $a\in A$ we have that
\[
E_a\in \varphi_E^{-1}\circ \tau_u^A\ \Leftrightarrow\ (\exists a'\in A)\ (\,(E_a,a')\in \varphi_E^{-1} \ \land \ a'\in \tau_u^A \,) \ \Leftrightarrow\ (\exists a'\in A)\ (\,E_a=E_{a'}\ \land \ E_{a'}\in \tau_u^{A/E}\,) \ \Rightarrow\ E_{a}\in \tau_u^{A/E},
\]
and hence, $\varphi_E^{-1}\circ \tau_u^A\subseteq \tau_u^{A/E}$.~Moreover, we have that $\sigma^A\subseteq \sigma^A\circ E$, by reflexivity of $E$, and according to (\ref{eq:sigmaAE}), for each $a\in A$ by $a\in \sigma^A\subseteq \sigma^A\circ E$ it follows $E_a\in \sigma^{A/E}$, and since $(E_a,a)\in \varphi_E^{-1}$, we obtain that $a\in \sigma^{A/E}\circ \varphi_E^{-1}$.~Thus, $\sigma^A\subseteq \sigma^{A/E}\circ \varphi_E^{-1}$.~In the same way we show that $\varphi_E\circ \tau_u^{A/E}\subseteq \tau_u^A$, for each $u\in X^*$, and $\sigma^{A/E}\subseteq \sigma^A\circ \varphi_E$.~Therefore, $\varphi_E$ is a weak forward bisimulation between $\cal A$ and ${\cal A}/E$.

Conversely, let $\varphi_E$ be a weak forward bisimulation between $\cal A$ and ${\cal A}/E$.~According to this assumption and (\ref{eq:tauAE}), for arbitrary $u\in X^*$ and $a\in A$ we have that
\[
a\in E\circ \tau_u^A\ \Leftrightarrow\ E_a\in \tau_u^{A/E}\ \Rightarrow\ (a,E_a)\in \varphi_E\ \land\ E_a\in \tau_u^{A/E}\ \Rightarrow\ a\in \varphi_E\circ \tau_u^{A/E}\subseteq \tau_u^A .
\]
Thus $E\circ \tau_u^A\subseteq \tau_u^A$, and we have proved that $E$ is a weak forward bisimulation equivalence on $\cal A$.
\end{proof}

Let ${\cal A}=(A,\delta^A,\sigma^A,\tau^A)$ and ${\cal B}=(B,\delta^B,\sigma^B,\tau^B)$ be automata,  and let $\phi :A\to B$ be a bijective function. If~$\phi $~satisfies
\begin{align}
a\in \sigma^A\ \Leftrightarrow\ \phi(a)\in \sigma^B, \qquad &\text{for every $a\in A$},&& \label{eq:wfi.sigma}\\
a\in \tau_u^A\ \Leftrightarrow\ \phi(a)\in \tau_u^B, \qquad &\text{for all $u\in X^*$ and $a\in A$},&& \label{eq:wfi.tau}
\end{align}
then it is called a {\it weak forward isomorphism\/} between $\cal A$ and $\cal B$.~Similarly, if
$\phi $~satisfies
\begin{align}
a\in \sigma_u^A\ \Leftrightarrow\ \phi(a)\in \sigma_u^B, \qquad &\text{for all $u\in X^*$ and $a\in A$},&& \label{eq:wbi.sigma}\\
a\in \tau^A\ \Leftrightarrow\ \phi(a)\in \tau^B, \qquad &\text{for every $a\in A$},&& \label{eq:wbi.tau}
\end{align}
then it is called a {\it weak backward isomorphism\/} between $\cal A$ and $\cal B$.~It is easy to check that the inverse function of a weak forward (resp.~backward) isomorphism is also a weak forward (resp.~backward) isomorphism.

Now we state and prove the following analogue of Theorem \ref{th:ufb}.~The main difference is that in this case the factor automata need not be isomorphic, but only weak forward isomorphic.

\begin{theorem}\label{th:uwfb}
Let ${\cal A}=(A,\delta^A,\sigma^A,\tau^A)$ and ${\cal B}=(B,\delta^B,\sigma^B,\tau^B)$ be automata and let $\varphi \subseteq A\times B$ be a uniform relation. Then $\varphi $ is a weak forward bisimulation if and only if the following hold:
\begin{itemize}\parskip=0pt\itemindent6pt
\item[{\rm (i)}] $E_A^\varphi $ is a weak forward bisimulation equivalence on $\cal A$;
\item[{\rm (ii)}] $E_B^\varphi $ is a weak forward bisimulation equivalence on $\cal B$;
\item[{\rm (iii)}]  $\widetilde \varphi $ is a weak forward isomorphism of factor automata ${\cal A}/E_A^\varphi $ and ${\cal B}/E_B^\varphi $.
\end{itemize}
\end{theorem}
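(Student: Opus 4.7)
Write $E=E_A^\varphi$ and $F=E_B^\varphi$ for brevity, so by Theorem \ref{th:ur} we have $E=\varphi\circ\varphi^{-1}$ and $F=\varphi^{-1}\circ\varphi$, and by Theorem \ref{th:ur2} the function $\widetilde\varphi:A/E\to B/F$ is a well-defined bijection. The plan is to translate everything through Theorem \ref{th:uwfb0}, which reduces the weak forward bisimulation property of a uniform $\varphi$ to the two equalities $\sigma^A\circ\varphi\circ\varphi^{-1}=\sigma^B\circ\varphi^{-1}$ and $\varphi^{-1}\circ\tau_u^A=\tau_u^B$, $\tau_u^A=\varphi\circ\tau_u^B$ for all $u\in X^*$, and then relate each of these to the factor-automaton data via Lemma \ref{le:tau.factor}.

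For the direct part, assume $\varphi$ is a weak forward bisimulation. For each $u\in X^*$, Theorem \ref{th:uwfb0} gives $E\circ\tau_u^A=\varphi\circ\varphi^{-1}\circ\tau_u^A=\varphi\circ\tau_u^B=\tau_u^A$, which by (\ref{eq:wfbe2}) says $E$ is a weak forward bisimulation equivalence, establishing (i); the symmetric computation yields (ii). For (iii), bijectivity of $\widetilde\varphi$ is already in hand, and (\ref{eq:wfi.sigma}) is obtained by tracing the equivalence chain $E_a\in\sigma^{A/E}\iff a\in\sigma^A\circ E\iff f(a)\in\sigma^A\circ\varphi=\sigma^B\circ F\iff F_{f(a)}\in\sigma^{B/F}$ for any $f\in FD(\varphi)$, using Theorem \ref{th:uwfb0} in the middle. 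For (\ref{eq:wfi.tau}), observe that because (i) and (ii) are now available, Lemma \ref{le:tau.factor} gives $E_a\in\tau_u^{A/E}\iff a\in\tau_u^A$ and $F_{f(a)}\in\tau_u^{B/F}\iff f(a)\in\tau_u^B$; it then suffices to check $a\in\tau_u^A\iff f(a)\in\tau_u^B$, which follows from $(a,f(a))\in\varphi$ together with the two inclusions $\tau_u^A=\varphi\circ\tau_u^B$ and $\tau_u^B=\varphi^{-1}\circ\tau_u^A$.

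For the converse, assume (i)--(iii); the goal is to verify the equalities of Theorem \ref{th:uwfb0}. Fix any $f\in FD(\varphi)$ and $a\in A$. Since (i) holds, Lemma \ref{le:tau.factor} applied to ${\cal A}/E$ gives $a\in\tau_u^A\iff E_a\in\tau_u^{A/E}$; by (iii) and (\ref{eq:wfi.tau}), $E_a\in\tau_u^{A/E}\iff\widetilde\varphi(E_a)=F_{f(a)}\in\tau_u^{B/F}$; and by (ii) and Lemma \ref{le:tau.factor} again, $F_{f(a)}\in\tau_u^{B/F}\iff f(a)\in\tau_u^B$. Combining with $(a,f(a))\in\varphi$ and using (ii) in the form $F\circ\tau_u^B=\tau_u^B$ yields $a\in\tau_u^A\iff f(a)\in\tau_u^B\iff a\in\varphi\circ\tau_u^B$, i.e.\ $\tau_u^A=\varphi\circ\tau_u^B$; the symmetric argument gives $\varphi^{-1}\circ\tau_u^A=\tau_u^B$. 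An analogous chain using (\ref{eq:wfi.sigma}) in place of (\ref{eq:wfi.tau}), together with (\ref{eq:phi.EB}), yields $a\in\sigma^A\circ\varphi\circ\varphi^{-1}\iff a\in\sigma^B\circ\varphi^{-1}$. All three conditions of Theorem \ref{th:uwfb0} are thus verified, so $\varphi$ is a weak forward bisimulation.

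The only step that requires genuine care is replacing the single-letter transition computation from the proof of Theorem \ref{th:ufb} by the quantification over all $u\in X^*$ that appears in the definition of weak bisimulations; Lemma \ref{le:tau.factor} is precisely the tool that handles this, so once (i) and (ii) are established the $\tau_u$-clauses transfer cleanly between $\cal A$, $\cal B$, and their factor automata.
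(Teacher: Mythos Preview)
Your proof is correct and follows essentially the same route as the paper's own argument: both directions are reduced to the equalities in Theorem~\ref{th:uwfb0}, and the passage between $\cal A$, $\cal B$ and their factor automata is handled via Lemma~\ref{le:tau.factor} together with the characterization (\ref{eq:phi.EB}) of uniform relations. The only cosmetic difference is that you invoke Lemma~\ref{le:tau.factor} explicitly in both directions, whereas the paper unfolds it inline in the forward direction; the underlying chains of equivalences are identical.
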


\begin{proof}
For the sake of simplicity set $E_A^\varphi = E$ and $E_B^\varphi = F$.~Moreover, let $f\in FD(\varphi )$ be an arbitrary functional description of $\varphi $.

Let $\varphi $ be a weak forward bisimulation.~Then we have that
\[
E\circ \tau_u^A=\varphi\circ \varphi^{-1}\circ \tau_u^A\subseteq \varphi\circ \tau_u^B \subseteq \tau_u^A ,
\]
and since the opposite inclusion follows by reflexivity of $E$, we conclude that $E\circ \tau_u^A=\tau_u^A$.~Hence, $E$ is a weak forward bisimulation equivalence on $\cal A$.~In a similar way we prove that $F$ is a weak forward bisimulation equivalence on $\cal B$.

Next, for an arbitrary $a\in A$ we have that
\[
\begin{aligned}
E_a\in \sigma^{A/E}\ &\Leftrightarrow\ a\in \sigma^A\circ E = \sigma^A\circ \varphi \circ \varphi^{-1} = \sigma^B\circ \varphi^{-1}\ \Leftrightarrow\ (\exists b\in B)\ (\,b\in \sigma^B\ \land\ (a,b)\in \varphi \,) \\
&\Leftrightarrow\  (\exists b\in B)\ (\,b\in \sigma^B\ \land\ (f(a),b)\in F\, )\ \Leftrightarrow \ f(a)\in \sigma^B\circ F \ \Leftrightarrow\ F_{f(a)}\in\sigma^{B/F},
\end{aligned}
\]
and for arbitrary $u\in X^*$ and $a\in A$ we obtain
\[
\begin{aligned}
E_a\in \tau_u^{A/E}\ &\Leftrightarrow\ a\in \tau_u^A =\varphi\circ \tau_u^B \ \Leftrightarrow\ (\exists b\in B)\ (\,(a,b)\in\varphi\ \land\ b\in \tau_u^B \,) \\
&\Leftrightarrow\  (\exists b\in B)\ (\,(f(a),b)\in F\ \land\ b\in \tau_u^B\, )\ \Leftrightarrow \ f(a)\in F\circ \tau_u^B =\tau_u^B\ \Leftrightarrow\ F_{f(a)}\in\tau_u^{B/F}.
\end{aligned}
\]
Therefore, we have proved that $\widetilde \varphi :E_a\mapsto F_{f(a)}$ is a weak forward isomorphism between ${\cal A}/E_A^\varphi $ and ${\cal B}/E_B^\varphi $.

Conversely, let (i), (ii), and (iii) hold. For an arbitrary $a\in A$ we have that
\[
\begin{aligned}
&a\in \sigma^A\circ \varphi\circ \varphi^{-1}=\sigma^A\circ E\ \Leftrightarrow\ E_a\in \sigma^{A/E} \
\Leftrightarrow\ \widetilde\varphi (E_a)\in\sigma^{B/F} \ \Leftrightarrow\
F_{f(a)}\in\sigma^{B/F} \ \Leftrightarrow\ {f(a)}\in\sigma^{B}\circ F \\
&\hspace{15mm}\Leftrightarrow\ (\exists b\in B)\ (\,b\in\sigma^B\ \land\ (b,f(a))\in F \,)\
\Leftrightarrow\ (\exists b\in B)\ (\,b\in\sigma^B\ \land\ (a,b)\in \varphi  \,)
\ \Leftrightarrow\ a\in \sigma^B\circ \varphi^{-1},\end{aligned}
\]
so $\sigma^A\circ \varphi\circ \varphi^{-1}=\sigma^B\circ \varphi^{-1}$, and consequently,
$\sigma^A\circ \varphi = \sigma^A\circ \varphi\circ \varphi^{-1}\circ \varphi =\sigma^B\circ \varphi^{-1}\circ\varphi $.~Moreover, for arbitrary $u\in X^*$ and $a\in A$ we have
\[
\begin{aligned}
a\in \tau_u^A\ &\Leftrightarrow\ E_a\in \tau_u^{A/E} \
\Leftrightarrow\ \widetilde\varphi (E_a)\in\tau_u^{B/F} \ \Leftrightarrow\
F_{f(a)}\in\tau_u^{B/F} \ \Leftrightarrow\ {f(a)}\in\tau_u^B=F\circ\tau_u^{B} \\
&\Leftrightarrow\ (\exists b\in B)\ (\,(f(a),b)\in F \ \land\ b\in\sigma^B\,)\
\Leftrightarrow\ (\exists b\in B)\ (\,(a,b)\in \varphi \ \land\ b\in\sigma^B\,)\
\Leftrightarrow\ a\in \varphi\circ\tau_u^B,
\end{aligned}
\]
so $\tau_u^A=\varphi\circ\tau_u^B$, which also yields $\varphi^{-1}\circ\tau_u^A= \varphi^{-1}\circ\varphi\circ \tau_u^B=F\circ \tau_u^B =\tau_u^B$.~Therefore, according to Theorem \ref{th:uwfb0}, $\varphi $ is a weak forward bisimulation.
\end{proof}

We can also prove the following.

\begin{theorem}\label{th:uwfb.ex}
Let ${\cal A}=(A,\delta^A,\sigma^A,\tau^A)$ and ${\cal B}=(B,\delta^B,\sigma^B,\tau^B)$ be
automata,~and let $E$ and $F$ be weak forward~bisimulation equivalences on $\cal A$ and $\cal B$.

Then there exists a uniform weak forward bisimulation $\varphi\subseteq A\times
B$~such that $E_A^\varphi =E$ and $E_B^\varphi =F$ if and only if there exists a weak forward isomorphism between factor~automata ${\cal A}/E$ and ${\cal B}/F$.
\end{theorem}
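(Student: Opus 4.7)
The plan is to mirror the strategy used for Theorem \ref{th:ufb.ex}, relying on Theorem \ref{th:uwfb} and Theorem \ref{th:ur2} as the two main tools; no substantive new work is required, because these theorems already package the difficult parts.

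For the direct implication, suppose $\varphi \subseteq A\times B$ is a uniform weak forward bisimulation with $E_A^\varphi =E$ and $E_B^\varphi =F$. Then Theorem \ref{th:uwfb} applies directly and yields that $\widetilde{\varphi}:A/E\to B/F$ is a weak forward isomorphism between the factor automata ${\cal A}/E$ and ${\cal B}/F$, which is exactly the conclusion we need for this direction.

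For the converse, suppose there is a weak forward isomorphism $\phi:A/E\to B/F$. Following the recipe already used in the proofs of Theorem \ref{th:ur2} and Theorem \ref{th:ufb.ex}, I would define $\varphi \subseteq A\times B$ by
\[
(a,b)\in \varphi \ \iff\ \phi(E_a)=F_b,\quad\text{for all $a\in A$ and $b\in B$.}
\]
In particular, $\phi$ being a bijective function between $A/E$ and $B/F$ is enough for the construction in Theorem \ref{th:ur2} to go through: that theorem then guarantees $\varphi$ is a uniform relation with $E_A^\varphi =E$, $E_B^\varphi =F$, and, moreover, that $\widetilde{\varphi}=\phi$ as functions from $A/E$ to $B/F$.

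It remains to verify that $\varphi$ is a weak forward bisimulation. Here I would invoke Theorem \ref{th:uwfb} in the opposite direction: condition (i) holds because $E_A^\varphi =E$ is by hypothesis a weak forward bisimulation equivalence on ${\cal A}$, condition (ii) holds because $E_B^\varphi =F$ is by hypothesis a weak forward bisimulation equivalence on ${\cal B}$, and condition (iii) holds because $\widetilde{\varphi}=\phi$ is by hypothesis a weak forward isomorphism between ${\cal A}/E$ and ${\cal B}/F$. Thus $\varphi$ is a weak forward bisimulation. The only thing worth double-checking is that Theorem \ref{th:ur2} is applied correctly to a plain bijection (we are not given any additional compatibility at the outset), but that theorem makes no automaton-theoretic demands, so this is automatic; hence there is no real obstacle.
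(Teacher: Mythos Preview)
Your proposal is correct and follows essentially the same route as the paper: the paper's proof simply says the argument is analogous to that of Theorem \ref{th:ufb.ex}, using Theorem \ref{th:uwfb} in place of Theorem \ref{th:ufb}, which is precisely the construction via Theorem \ref{th:ur2} and the check of conditions (i)--(iii) that you outline.
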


\begin{proof}
This theorem can be proved in a similar way as Theorem \ref{th:ufb.ex}, using Theorem \ref{th:uwfb}.
\end{proof}

\begin{theorem}\label{th:G:E.weak}
Let ${\cal A}=(A,\delta^A,\sigma^A,\tau^A)$ be an automaton, let $E$ be a weak forward
bisimulation equivalence on $\cal A$, and let $F$ be an equivalence on $A$ such that $E\subseteq F$.

Then $F$ is a weak forward bisimulation equivalence on $\cal A$ if and only if $F/E$ is a weak forward bisimulation equivalence~on~${\cal A}/E$.
\end{theorem}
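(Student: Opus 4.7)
The plan is to reduce both conditions to the equational characterization (\ref{eq:wfbe2}) and then translate between them using Lemma~\ref{le:tau.factor}, which is available because the hypothesis guarantees that $E$ is a weak forward bisimulation equivalence on $\cal A$.

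First I would set $P = F/E$ and recall that, by (\ref{eq:wfbe2}), $F$ is a weak forward bisimulation equivalence on $\cal A$ iff $F \circ \tau_u^A = \tau_u^A$ for every $u \in X^*$, and analogously $P$ is a weak forward bisimulation equivalence on ${\cal A}/E$ iff $P \circ \tau_u^{A/E} = \tau_u^{A/E}$ for every $u \in X^*$. Since $F$ and $P$ are reflexive, the inclusions $\tau_u^A \subseteq F \circ \tau_u^A$ and $\tau_u^{A/E} \subseteq P \circ \tau_u^{A/E}$ hold automatically, so it suffices to prove the equivalence of the two reverse inclusions, for every fixed $u \in X^*$.

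Next, fix $u \in X^*$ and $a \in A$. Unfolding the definition of $P = F/E$ (equation (\ref{eq:F:E})) and applying Lemma~\ref{le:tau.factor} to rewrite $E_{a'} \in \tau_u^{A/E}$ as $a' \in \tau_u^A$, I obtain
\[
E_a \in P \circ \tau_u^{A/E} \ \iff\ (\exists a' \in A)\,\bigl((a,a') \in F \land a' \in \tau_u^A\bigr) \ \iff\ a \in F \circ \tau_u^A,
\]
while the same lemma also gives $E_a \in \tau_u^{A/E} \iff a \in \tau_u^A$. Hence the inclusion $P \circ \tau_u^{A/E} \subseteq \tau_u^{A/E}$ (which, as a statement about subsets of $A/E$, is checked pointwise on equivalence classes $E_a$) is equivalent to the implication $a \in F \circ \tau_u^A \Rightarrow a \in \tau_u^A$ for every $a \in A$, i.e.\ to $F \circ \tau_u^A \subseteq \tau_u^A$. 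Combining over all $u \in X^*$ yields the desired equivalence.

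No serious obstacle is anticipated; the only care needed is the bookkeeping that translates between subsets of $A$ and subsets of $A/E$, and the observation that the ``$\supseteq$'' halves of the two equations come for free from reflexivity and hence need not be transferred. The key technical input is Lemma~\ref{le:tau.factor}, whose hypothesis is exactly that $E$ is already a weak forward bisimulation equivalence on $\cal A$, matching the standing assumption of the theorem.
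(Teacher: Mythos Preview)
Your proof is correct and follows essentially the same approach as the paper: both arguments establish the equivalence $E_a\in (F/E)\circ\tau_u^{A/E}\Leftrightarrow a\in F\circ\tau_u^A$ and then invoke Lemma~\ref{le:tau.factor} to conclude that $(F/E)\circ\tau_u^{A/E}\subseteq\tau_u^{A/E}$ is equivalent to $F\circ\tau_u^A\subseteq\tau_u^A$. Your version is slightly more explicit about reflexivity handling the reverse inclusions, but the structure and key lemma are identical.
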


\begin{proof}
For arbitrary $u\in X^*$ and $a\in A$ we can easily check that
\[
E_a\in (F/E)\circ \tau_u^{A/E} \ \Leftrightarrow\ a\in F\circ \tau_u^A .
\]
By this and by Lemma \ref{le:tau.factor} we obtain that $(F/E)\circ \tau_u^{A/E}\subseteq \tau_u^{A/E}$ if and only
if $F\circ \tau_u^A\subseteq \tau_u^A$, what is precisely the claim of the theorem.
\end{proof}

\begin{corollary}\label{cor:F:E.g.weak}
Let ${\cal A}=(A,\delta^A,\sigma^A,\tau^A)$ be an automaton, and let $E$ and $F$ be
weak forward bisimulation equivalences on $\cal A$ such that $E\subseteq F$.

Then $F$ is the greatest weak forward bisimulation equivalence on $\cal A$ if and only if $F/E$ is the greatest weak forward bisimulation equivalence on ${\cal A}/E$.
\end{corollary}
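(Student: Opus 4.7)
The plan is to mimic the derivation of Corollary \ref{cor:F:E.g} and extract the result from Theorem \ref{th:G:E.weak} together with the lattice isomorphism of Theorem \ref{th:F:E-isom}. I will assemble three ingredients: first, the bijection $\Phi\colon F\mapsto F/E$ between ${\cal E}_E(A)$ and ${\cal E}(A/E)$ from Theorem \ref{th:F:E-isom}, which is order-preserving in the sense of (\ref{eq:ord.isom}); second, Theorem \ref{th:G:E.weak}, which tells me that $\Phi$ maps the sub-poset ${\cal E}_E^{\mathrm{wfb}}({\cal A})=\{G\in{\cal E}^{\mathrm{wfb}}({\cal A})\mid E\subseteq G\}$ bijectively onto ${\cal E}^{\mathrm{wfb}}({\cal A}/E)$; and third, Theorem \ref{th:lat.wfbe}, which guarantees that a greatest weak forward bisimulation equivalence $F^{\ast}$ on $\cal A$ exists and, being the top of a principal ideal, automatically contains $E$, so $F^{\ast}\in{\cal E}_E^{\mathrm{wfb}}({\cal A})$.

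Having set this up, the argument itself is almost immediate. Suppose first that $F$ is the greatest weak forward bisimulation equivalence on $\cal A$; then $E\subseteq F$ by the remark just made, and for any $P\in{\cal E}^{\mathrm{wfb}}({\cal A}/E)$ there is a unique $G\in{\cal E}_E^{\mathrm{wfb}}({\cal A})$ with $G/E=P$ by Theorem \ref{th:G:E.weak}. Since $G\subseteq F$ and $\Phi$ satisfies (\ref{eq:ord.isom}), we obtain $P=G/E\subseteq F/E$, so $F/E$ is greatest in ${\cal E}^{\mathrm{wfb}}({\cal A}/E)$. Conversely, assume $F/E$ is greatest in ${\cal E}^{\mathrm{wfb}}({\cal A}/E)$, and let $F^{\ast}$ be the greatest weak forward bisimulation equivalence on $\cal A$. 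Since $E\subseteq F\subseteq F^{\ast}$, both $F/E$ and $F^{\ast}/E$ lie in ${\cal E}^{\mathrm{wfb}}({\cal A}/E)$, and (\ref{eq:ord.isom}) gives $F/E\subseteq F^{\ast}/E$; by maximality of $F/E$ these must be equal, and injectivity of $\Phi$ forces $F=F^{\ast}$.

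I do not expect any serious obstacle here: Theorems \ref{th:F:E-isom} and \ref{th:G:E.weak} have done all the real work, and what remains is a purely order-theoretic observation that an order-isomorphism between two sub-posets sends the top element to the top element. The only point that warrants a sentence of care is the justification that the greatest weak forward bisimulation equivalence on $\cal A$ contains $E$; this follows immediately from Theorem \ref{th:lat.wfbe}, since ${\cal E}^{\mathrm{wfb}}({\cal A})$ is a principal ideal of ${\cal E}(A)$ and hence has $E^{\mathrm{wfb}}$ as its maximum, which of course dominates $E$.
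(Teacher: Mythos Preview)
Your proposal is correct and follows essentially the same route as the paper, which simply records the result as an immediate consequence of Theorem~\ref{th:G:E.weak} and Theorem~\ref{th:F:E-isom}. Your explicit invocation of Theorem~\ref{th:lat.wfbe} to ensure $E\subseteq F^{\ast}$ is a sensible way to make the converse direction watertight; the paper leaves this step implicit, but it is indeed needed (and could equally be justified by noting that $E\subseteq F^{\ast}$ simply because $E$ is itself a weak forward bisimulation equivalence and $F^{\ast}$ is the greatest one).
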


\begin{proof}
This is an immediate consequence of the previous theorem and Theorem \ref{th:F:E-isom}.
\end{proof}

Let ${\cal A}=(A,\delta^A,\sigma^A,\tau^A )$ be an automaton.~Let~us set $A_N =\{\sigma_u^A\mid u\in X^*\}$,
and let us define $\delta^{A_N} :A_N\times X\to A_N $ and $\tau^{A_N}\subseteq A_N$ by
\begin{align}
&\delta^{A_N} (\sigma_u^A,x)=\sigma_{ux}^A, \label{eq:delta.N}\\
&\sigma_u^A\in \tau^{A_N} \ \Leftrightarrow\ \sigma_u^A\circ \tau^A =1 \ \Leftrightarrow\ \sigma_u^A\cap \tau^A \ne\emptyset,\label{eq:tau.N}
\end{align}
for all $u\in X^*$ and $x\in X$.~Then ${\cal A}_N=(A_N,\delta^{A_N},\sigma_\varepsilon^A,\tau^{A_N})$ is a deterministic automaton which is language~equivalent to $\cal A$, i.e., $L({\cal A}_N)=L({\cal A})$, and it is called the {\it Nerode automaton\/} of $\cal A$ (cf.~\cite{CDIV.10,ICB.08,ICBP.10,JIC.11}).~Note that the Nerode automaton of $\cal A$ is the deterministic automaton obtained from $\cal A$ by means of the determinization method known as the {\it accessible subset construction\/}.

Moreover, let $\bar{A}_N =\{\tau_u^A\mid u\in X^*\}$,
and let us define $\delta^{\bar{A}_N} :\bar{A}_N\times X\to \bar{A}_N $ and $\tau^{\bar{A}_N}\subseteq \bar{A}_N$ by
\begin{align}
&\delta^{\bar{A}_N} (\tau_u^A,x)=\tau_{xu}^A, \label{eq:delta.N.r}\\
&\tau_u^A\in \tau^{\bar{A}_N} \ \Leftrightarrow\ \sigma^A\circ \tau_u^A=1 \ \Leftrightarrow\ \sigma^A\cap \tau_u^A \ne\emptyset,\label{eq:tau.N.r}
\end{align}
for all $u\in X^*$ and $x\in X$.~Then $\bar{\cal A}_N =(\bar{A}_N,\delta^{\bar{A}_N},\tau_\varepsilon^A,\tau^{\bar{A}_N})$ is a deterministic automaton which is isomorphic to the Nerode automaton of the reverse automaton $\bar{\cal A}$ of $\cal A$, and it is called the {\it reverse Nerode automaton\/} of $\cal A$.

The following theorem gives a characterization of uniform weak forward bisimulations in terms of the reverse Nerode automata.~Let us note that an analogous theorem, given in terms of the Nerode automata, characterizes uniform weak backward bisimulations.

\begin{theorem}\label{th:uwfb.N}
Let ${\cal A}=(A,\delta^A,\sigma^A,\tau^A)$ and ${\cal B}=(B,\delta^B,\sigma^B,\tau^B)$ be automata and $\varphi \subseteq A\times B$ a uniform relation.

Then $\varphi $ is a weak forward bisimulation from $\cal A$ to $\cal B$ if and only if it satisfies (\ref{eq:wfs.sigma}) and (\ref{eq:wfsi.sigma}), and functions
\begin{equation}\label{eq:isom.rN}
\tau_u^A\mapsto \varphi^{-1}\circ \tau_u^A, \qquad \tau_u^B\mapsto \varphi\circ \tau_u^B,
\end{equation}
for each $u\in X^*$, are mutually inverse isomorphisms between reverse Nerode automata $\bar{\cal A}_N$ and $\bar{\cal B}_N$.
\end{theorem}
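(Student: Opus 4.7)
The strategy is to reduce the theorem to Theorem~\ref{th:uwfb0}, by identifying the isomorphism condition on the reverse Nerode automata with the pointwise $\tau$-equalities $\varphi^{-1}\circ\tau_u^A=\tau_u^B$ and $\varphi\circ\tau_u^B=\tau_u^A$ for every $u\in X^*$, modulo the $\sigma$-inclusions (\ref{eq:wfs.sigma}) and (\ref{eq:wfsi.sigma}).

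For the direct implication I would assume $\varphi $ is a weak forward bisimulation.~Conditions (\ref{eq:wfs.sigma}) and (\ref{eq:wfsi.sigma}) then hold by definition, and Theorem~\ref{th:uwfb0} yields the two $\tau$-equalities above.~From these it is immediate that the rules in (\ref{eq:isom.rN}) define maps between $\bar A_N$ and $\bar B_N$ which are mutually inverse (using partial uniformity of $\varphi $).~Preservation of transitions amounts to the same $\tau$-equality applied to the shifted word $xu$; preservation of the initial state is the case $u=\varepsilon $; and preservation of terminal states reduces to showing that $\sigma^A\cap\tau_u^A\neq \emptyset $ if and only if $\sigma^B\cap\tau_u^B\neq \emptyset $, which is a short chase through (\ref{eq:wfs.sigma}), (\ref{eq:wfsi.sigma}), and the $\tau$-equalities just established.

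For the converse I would assume the two $\sigma$-inclusions together with the isomorphism hypothesis.~The crucial observation is that $\bar{\cal A}_N$ is a deterministic automaton with initial state $\tau^A$, and its transition rule $\delta^{\bar A_N}(\tau_v^A,x)=\tau_{xv}^A$ makes every state $\tau_u^A$ reachable from $\tau^A$ by reading the letters of $u$.~Since an isomorphism of deterministic automata is uniquely determined by its value on the initial state, preservation of the initial state forces $\varphi^{-1}\circ\tau^A=\tau^B$, and an easy induction on $|u|$ using preservation of transitions then yields $\varphi^{-1}\circ\tau_u^A=\tau_u^B$ for every $u\in X^*$; by symmetry, $\varphi\circ\tau_u^B=\tau_u^A$.~These are exactly the equalities (\ref{eq:uwfb0.tau}) of Theorem~\ref{th:uwfb0}.~Combining (\ref{eq:wfs.sigma}) and (\ref{eq:wfsi.sigma}) with the uniform-relation identity $\varphi\circ\varphi^{-1}\circ\varphi=\varphi $ then yields the $\sigma$-equalities (\ref{eq:uwfb.sigma}) by a straightforward composition argument (compose the inclusions on the right by $\varphi $ or $\varphi^{-1}$ and cancel the triple composition), after which Theorem~\ref{th:uwfb0} finishes the proof.

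The step I expect to demand the most care is the inductive step of the converse: one must verify that the abstract preservation-of-transitions clause of the isomorphism translates cleanly into the algebraic identity $\varphi^{-1}\circ(\delta_x^A\circ\tau_u^A)=\delta_x^B\circ(\varphi^{-1}\circ\tau_u^A)$ so that the induction can propagate.~Everything else reduces to routine relational calculus and direct appeals to Theorem~\ref{th:uwfb0} and the uniform-relation identity.
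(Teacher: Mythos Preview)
Your proposal is correct and follows essentially the paper's proof: both directions rest on Theorem~\ref{th:uwfb0}, with the converse proceeding by induction on $|u|$ starting from the preserved initial state $\Phi(\tau^A)=\tau^B$. The step you flag as delicate is in fact simpler than you anticipate---rather than extracting a relational identity involving $\delta_x^B$, the paper stays entirely at the level of the reverse Nerode automata, combining the isomorphism clause $\Phi(\delta^{\bar A_N}(\tau_u^A,x))=\delta^{\bar B_N}(\Phi(\tau_u^A),x)$ with the induction hypothesis $\Phi(\tau_u^A)=\tau_u^B$ and the definition $\delta^{\bar B_N}(\tau_u^B,x)=\tau_{xu}^B$ to obtain $\Phi(\tau_{xu}^A)=\tau_{xu}^B$ directly; your separate derivation of (\ref{eq:uwfb.sigma}) in the converse is also correct but unnecessary, since once (\ref{eq:uwfb0.tau}) is in hand the $\tau$-inclusions (\ref{eq:wfs.tau}) and (\ref{eq:wfsi.tau}) follow, and together with the hypothesised $\sigma$-inclusions this already makes $\varphi$ a weak forward bisimulation by definition.
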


\begin{proof}
Consider functions $\Phi :{\bar A}_N\to {\cal P}(B)$ and $\Psi :{\bar B}_N\to {\cal P}(A)$ which are given by
$\Phi(\tau_u^A)=\varphi^{-1}\circ \tau_u^A$ and $\Psi(\tau_u^B)=\varphi\circ \tau_u^B$, for each $u\in X^*$.~

Let $\varphi $ be a weak forward bisimulation from $\cal A$ to $\cal B$. By definition, it satisfies (\ref{eq:wfs.sigma}) and (\ref{eq:wfsi.sigma}).~According~to Theorem \ref{th:uwfb0}, for every $u\in X^*$ we have that $\Phi (\tau_u^A) = \tau_u^B\in {\bar B}_N$ and $\Psi (\tau_u^B)=\tau_u^A\in {\bar A}_N$, which means that $\Phi $ maps
${\bar A}_N$ into ${\bar B}_N$, and $\Psi $ maps ${\bar B}_N$ into ${\bar A}_N$.~According to the same theorem, for every $u\in X^*$ we have that $\Psi (\Phi (\tau_u^A)) = \varphi\circ\varphi^{-1}\circ \tau_u^A= \tau_u^B$ and $\Phi (\Psi (\tau_u^B)) = \varphi^{-1}\circ\varphi \circ \tau_u^B= \tau_u^B$, and hence, $\Phi $ and $\Psi $ are mutually inverse bijections from ${\bar A}_N$ to ${\bar B}_N$, and vice versa.

Clearly, $\Phi (\tau^A)=\tau^B$ and $\Psi(\tau^B)=\tau^A$.~Next, for arbitrary $x\in X$ and $u\in X^*$ we have that
\[
\Phi(\delta^{{\bar A}_N}(\tau_u^A,x))=\Phi (\tau_{xu}^A) = \tau_{xu}^B = \delta^{{\bar B}_N}(\tau_u^B,x) = \delta^{{\bar B}_N}(\Phi(\tau_u^A),x).
\]
By Theorem \ref{th:uwfb0}, for any $u\in X^*$ we have that $\sigma^A\circ \tau_u^A=\sigma^A\circ \varphi \circ \varphi^{-1}\circ \tau_u^A = \sigma^B\circ \varphi^{-1}\circ \tau_u^A = \sigma^B\circ \tau_u^B$,~so
\[
\tau_u^A\in \tau^{{\bar A}_N} \ \Leftrightarrow\ \sigma^A\circ \tau_u^A = 1 \ \Leftrightarrow\  \sigma^B\circ \tau_u^B = 1  \ \Leftrightarrow\ \tau_u^B\in \tau^{{\bar B}_N} \ \Leftrightarrow\ \Phi(\tau_u^A)\in \tau^{{\bar B}_N}.
\]
Hence, we have proved that $\Phi $ is an isomorphism from $\bar{\cal A}_N$ to $\bar{\cal B}_N$.~In a similar way we prove that $\Psi $ is an isomorphism from $\bar{\cal B}_N$ to $\bar{\cal A}_N$.

Conversely, let (\ref{eq:wfs.sigma}) and (\ref{eq:wfsi.sigma}) hold, and let $\Phi $ and $\Psi $ be mutually inverse isomorphisms from $\bar{\cal A}_N$ to $\bar{\cal B}_N$ and from $\bar{\cal B}_N$ to $\bar{\cal A}_N$, respectively.~Since $\tau^A$ and $\tau^B$ are the unique initial states of $\bar{\cal A}_N$ and $\bar{\cal B}_N$, we have that $\Phi (\tau^A)=\tau^B$, and hence, $\varphi^{-1}\circ \tau^A=\tau^B$ and $\varphi\circ \tau^B=\tau^A$.~Suppose that $\Phi(\tau_u^A)=\tau_u^B$, for some $u\in X^*$, and~consider an arbitrary $x\in X$.~Then
\[
\Phi(\tau_{xu}^A)=\Phi(\delta^{\bar{A}_N}(\tau_u^A,x)) = \delta^{\bar{B}_N}(\Phi(\tau_u^A),x) =
\delta^{\bar{B}_N}(\tau_u^B,x) = \tau_{xu}^B.
\]
Now, by induction on the length of $u$ we obtain that $\Phi(\tau_u^A)=\tau_u^B$, for every $u\in X^*$, and also, $\Psi(\tau_u^B)=\tau_u^A$, which means that (\ref{eq:uwfb0.tau}) holds.~Therefore, by Theorem \ref{th:uwfb0} we obtain that $\varphi $ is a weak forward bisimulation.
\end{proof}

Note that a similar theorem can be proved for weak backward bisimulations, i.e., a uniform relation
$\varphi $ is a weak backward bisimulation from $\cal A$ to $\cal B$ if and only if it satisfies (\ref{eq:wbs.tau}) and (\ref{eq:wbsi.tau}), and functions
\begin{equation}\label{eq:isom.rN}
\sigma_u^A\mapsto \sigma_u^A\circ \varphi, \qquad \sigma_u^B\mapsto \sigma_u^B\circ\varphi^{-1} ,
\end{equation}
for each $u\in X^*$, are mutually inverse isomorphisms between Nerode automata ${\cal A}_N$ and ${\cal B}_N$.

\section{Weak forward bisimulation equivalent automata}

Let ${\cal A}=(A,\delta^A,\sigma^A,\tau^A)$ and ${\cal B}=(B,\delta^B,\sigma^B,\tau^B)$ be
automata.~If there exists~a complete and surjective weak forward bisimula\-tion from
$\cal A$ to $\cal B$, then we say that $\cal A$ and $\cal B$ are {\it weak forward bisimulation~equi\-valent\/},
or briefly {\it WFB-equivalent\/}, and we write ${\cal A}\sim_{WFB}{\cal B}$.~Notice that completeness
and surjectivity of this forward bisimulation mean that every state of $\cal A$ is equivalent to some state
of $\cal B$, and vice versa.~For~arbitrary~auto\-mata $\cal A$, $\cal B$
and $\cal C$ we have that
\begin{equation}\label{eq:wfbe.eq}
{\cal A}\sim_{WFB}{\cal A};\ \ \ \ {\cal A}\sim_{WFB}{\cal B} \implies {\cal B}\sim_{WFB}{\cal A};\ \ \ \
\bigl({\cal A}\sim_{WFB}{\cal B} \land {\cal B}\sim_{WFB}{\cal C}\bigr) \implies {\cal A}\sim_{WFB}{\cal C}.
\end{equation}
Similarly, we say that $\cal A$ and $\cal B$ are {\it weak backward bisimulation equivalent\/}, briefly {\it WBB-equiva\-lent\/}, in notation
${\cal A}\sim_{WBB}{\cal B}$, if there exists a complete and surjective weak backward bisimulation from $\cal A$ to $\cal B$.

The following lemma will be useful in our further work.

\begin{lemma}\label{le:wf.isom}
Let ${\cal A}=(A,\delta^A,\sigma^A,\tau^A)$ and ${\cal B}=(B,\delta^B,\sigma^B,\tau^B)$ be automata, let $\phi $ be a weak forward isomorphism between $\cal A$ and $\cal B$, and let $E$ and $F$ be the greatest weak forward bisimulation equivalences on $\cal A$ and $\cal B$.

Then for arbitrary $a_1,a_2\in A$ the following is true:
\begin{equation}\label{eq:EphiF}
(a_1,a_2)\in E  \ \ \Leftrightarrow\ \ (\phi(a_1),\phi(a_2))\in F .
\end{equation}
\end{lemma}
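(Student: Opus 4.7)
The plan is to reduce the claim to the explicit characterization of the greatest weak forward bisimulation equivalence provided by Theorem \ref{th:lat.wfbe}, and then to apply the defining property of a weak forward isomorphism (equation (\ref{eq:wfi.tau})) twice, one variable at a time. This is essentially a transport-of-structure argument: the greatest weak forward bisimulation equivalence is determined solely by which sets $\tau_u^A$ a state belongs to, and a weak forward isomorphism preserves exactly this data.

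First I would invoke Theorem \ref{th:lat.wfbe} to identify $E$ with the relation $E^{\mathrm{wfb}}$ on $A$, namely
\[
(a_1,a_2)\in E \ \iff\ (\forall u\in X^*)\,(a_1\in \tau_u^A \iff a_2\in \tau_u^A),
\]
and likewise to identify $F$ with the analogous relation on $B$,
\[
(b_1,b_2)\in F \ \iff\ (\forall u\in X^*)\,(b_1\in \tau_u^B \iff b_2\in \tau_u^B).
\]
Here I am using that the principal ideal of ${\cal E}(A)$ generated by $E^{\mathrm{wfb}}$ coincides with ${\cal E}^{\mathrm{wfb}}({\cal A})$, so its greatest element is $E^{\mathrm{wfb}}$ itself; in particular $E=E^{\mathrm{wfb}}$ and $F=E^{\mathrm{wfb}}$ (on $B$).

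Next, I would use the defining condition (\ref{eq:wfi.tau}) of a weak forward isomorphism, which states that $a\in \tau_u^A \iff \phi(a)\in\tau_u^B$ for every $u\in X^*$ and every $a\in A$. Applied to $a_1$ and $a_2$ respectively, this gives, for each fixed $u\in X^*$,
\[
(a_1\in \tau_u^A \iff a_2\in \tau_u^A)\ \iff\ (\phi(a_1)\in \tau_u^B \iff \phi(a_2)\in \tau_u^B).
\]
Quantifying over all $u\in X^*$ and combining with the two characterizations above yields $(a_1,a_2)\in E \iff (\phi(a_1),\phi(a_2))\in F$, which is exactly (\ref{eq:EphiF}).

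There is no real obstacle here; the only subtlety worth double-checking is that $\phi$ being a bijection is not required for the chain of equivalences, so the argument works using only (\ref{eq:wfi.tau}); bijectivity is used elsewhere but not in this particular step. Thus the proof reduces to two invocations: Theorem \ref{th:lat.wfbe} to unfold the definition of the greatest weak forward bisimulation equivalence on each side, and condition (\ref{eq:wfi.tau}) to transport membership in $\tau_u$ across $\phi$.
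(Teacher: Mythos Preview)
Your proof is correct and takes a more direct route than the paper's. The paper does not invoke Theorem \ref{th:lat.wfbe}; instead it transports $E$ to a relation $F'$ on $B$ via $\phi$, setting $(b_1,b_2)\in F'\iff (\phi^{-1}(b_1),\phi^{-1}(b_2))\in E$, verifies directly that $F'\circ\tau_u^B\subseteq\tau_u^B$ for every $u\in X^*$ using (\ref{eq:wfi.tau}) and the fact that $E$ is a weak forward bisimulation equivalence, and then concludes $F'\subseteq F$ by maximality of $F$; the reverse implication is handled symmetrically. Your approach instead exploits the explicit description of the greatest weak forward bisimulation equivalence provided by Theorem \ref{th:lat.wfbe}, after which the claim reduces to a one-line transport under (\ref{eq:wfi.tau}). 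Your argument is shorter and, as you observe, does not use the bijectivity of $\phi$, whereas the paper's proof needs $\phi^{-1}$ to define $F'$ and hence uses bijectivity essentially. Both are valid; yours makes it transparent that the lemma is nearly immediate once Theorem \ref{th:lat.wfbe} is available.
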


\begin{proof}
Let us define a relation $F'$ on $B$ by
\begin{equation}\label{eq:E'phiF}
(b_1,b_2)\in F'  \ \ \Leftrightarrow\ \ (\phi^{-1}(b_1),\phi^{-1}(b_2))\in E ,
\end{equation}
for arbitrary $b_1,b_2\in B$.~It is clear that $F'$ is an equivalence on $B$.

Consider an arbitrary $u\in X^*$.~If $b_1\in F'\circ \tau_u^B$, then there is $b_2\in B$ such that $(b_1,b_2)\in F'$ and $b_2\in \tau_u^B $, and by (\ref{eq:E'phiF}) and (\ref{eq:wfi.tau}) we obtain that $(\phi^{-1}(b_1),\phi^{-1}(b_2))\in E$ and $\phi^{-1}(b_2)\in \tau_u^A$.~This means that $\phi^{-1}(b_1)\in E\circ \tau_u^A\subseteq \tau_u^A$, and again by  (\ref{eq:wfi.tau}) we obtain that $b_1=\phi (\phi^{-1}(b_1))\in \tau_u^B$.~Therefore, $F'\circ \tau_u^B\subseteq \tau_u^B$, for each $u\in X^*$, so $F'$ is a weak forward bisimulation equivalence on $\cal A$, whence $F'\subseteq F$.~Now, for arbitrary $a_1,a_2\in A$ we have that $(a_1,a_2)\in E$ implies $(\phi(a_1),\phi(a_2))\in F'\subseteq F$, so we have proved the direct implication in (\ref{eq:EphiF}).~Analogously we prove the reverse implication.
\end{proof}

Now we state and prove the main result of this section.

\begin{theorem}\label{th:UWFBeq}
Let ${\cal A}=(A,\delta^A,\sigma^A,\tau^A)$ and
${\cal B}=(B,\delta^B,\sigma^B,\tau^B)$ be automata, and let $E$ and $F$ be
the greatest weak forward bisimulation equivalences on $\cal A$ and $\cal B$.

Then $\cal A$ and $\cal B$ are WFB-equivalent if and only if there exists a weak forward isomorphism between factor automata ${\cal A}/E$ and ${\cal B}/F$.
\end{theorem}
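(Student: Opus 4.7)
The plan is to mimic the proof strategy of Theorem \ref{th:UFBeq}, adapting each step to the weak forward setting by invoking Theorems \ref{th:gwfb}, \ref{th:uwfb}, \ref{th:uwfb.ex}, Corollary \ref{cor:F:E.g.weak}, and Lemma \ref{le:wf.isom} in place of their strong counterparts.

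For the direct implication, suppose $\cal A$ and $\cal B$ are WFB-equivalent, so there is a complete and surjective weak forward bisimulation $\psi\subseteq A\times B$.  By Theorem \ref{th:gwfb} the greatest weak forward bisimulation $\varphi$ from $\cal A$ to $\cal B$ exists and is a partial uniform relation; since $\psi\subseteq\varphi$ and $\psi$ is complete and surjective, so is $\varphi$, making $\varphi$ a uniform weak forward bisimulation.  Then Theorem \ref{th:uwfb} yields that $E_A^\varphi$ and $E_B^\varphi$ are weak forward bisimulation equivalences on $\cal A$ and $\cal B$, and that $\widetilde\varphi$ is a weak forward isomorphism from ${\cal A}/E_A^\varphi$ onto ${\cal B}/E_B^\varphi$.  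Because $E$ and $F$ are the \emph{greatest} such equivalences, $E_A^\varphi\subseteq E$ and $E_B^\varphi\subseteq F$.

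Next, denote by $P$ and $Q$ the greatest weak forward bisimulation equivalences on ${\cal A}/E_A^\varphi$ and ${\cal B}/E_B^\varphi$.  Corollary \ref{cor:F:E.g.weak} identifies $P=E/E_A^\varphi$ and $Q=F/E_B^\varphi$, and Lemma \ref{le:wf.isom} applied to the weak forward isomorphism $\widetilde\varphi$ gives
\[
(\alpha_1,\alpha_2)\in P\ \iff\ (\widetilde\varphi(\alpha_1),\widetilde\varphi(\alpha_2))\in Q,\quad\text{for all $\alpha_1,\alpha_2\in A/E_A^\varphi$.}
\]
Using this, I would define $\xi:({\cal A}/E_A^\varphi)/P\to({\cal B}/E_B^\varphi)/Q$ by $\xi(P_\alpha)=Q_{\widetilde\varphi(\alpha)}$ and check routinely that $\xi$ is a well-defined bijection and, moreover, a weak forward isomorphism (the initial-state and $\tau_u$-preserving conditions transfer from $\widetilde\varphi$ to $\xi$ thanks to the definitions of $\sigma$ and $\tau_u$ on the iterated factor automaton together with the compatible projection between the two levels of factorization).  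Finally, Theorem \ref{th:F:E} supplies (strong, hence in particular weak forward) isomorphisms ${\cal A}/E\cong({\cal A}/E_A^\varphi)/P$ and ${\cal B}/F\cong({\cal B}/E_B^\varphi)/Q$, and composing these with $\xi$ produces a weak forward isomorphism between ${\cal A}/E$ and ${\cal B}/F$, as desired.

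For the converse, assume there is a weak forward isomorphism between ${\cal A}/E$ and ${\cal B}/F$.  Since $E$ and $F$ are (in particular) weak forward bisimulation equivalences on $\cal A$ and $\cal B$, Theorem \ref{th:uwfb.ex} delivers a uniform weak forward bisimulation $\varphi\subseteq A\times B$ with $E_A^\varphi=E$ and $E_B^\varphi=F$.  Every uniform relation is by definition complete and surjective, so $\varphi$ witnesses that $\cal A$ and $\cal B$ are WFB-equivalent.

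The principal obstacle is the middle step of the direct implication, namely verifying that the induced map $\xi$ on the iterated factor automata is itself a weak forward isomorphism.  Once one unpacks the definitions of $\tau_u^{(A/E_A^\varphi)/P}$ via Lemma \ref{le:tau.factor} (applied twice) and uses the weak forward isomorphism property of $\widetilde\varphi$ relative to $\tau_u^{A/E_A^\varphi}$ and $\tau_u^{B/E_B^\varphi}$, the required equivalences reduce to the analogous equivalences for $\widetilde\varphi$; everything else is a formal bookkeeping with cosets.
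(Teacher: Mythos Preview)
Your proposal is correct and follows essentially the same approach as the paper's proof: both pass to the greatest weak forward bisimulation, use Theorem~\ref{th:uwfb} to obtain $\widetilde\varphi$, invoke Lemma~\ref{le:wf.isom} (equation~(\ref{eq:EphiF})) and Corollary~\ref{cor:F:E.g.weak} to build the map $\xi$ on the iterated quotients, verify it is a weak forward isomorphism via Lemma~\ref{le:tau.factor}, and then compose with the isomorphisms from Theorem~\ref{th:F:E}; the converse via Theorem~\ref{th:uwfb.ex} is identical.
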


\begin{proof}
Let $\cal A$ and $\cal B$ be WFB-equivalent automata.~As in the proof of Theorem \ref{th:UFBeq} we show that the greatest weak forward bisimulation $\varphi $ between $\cal A$ and $\cal B$ is a uniform relation.

By Theorem \ref{th:uwfb}, $E_A^\varphi $ and $E_B^\varphi $ are weak forward bisimulation equivalences on $\cal A$ and $\cal B$, and $\widetilde \varphi $ is a weak forward isomorphism of factor automata ${\cal A}/E_A^\varphi $ and ${\cal B}/E_B^\varphi $.~Let $P$ and $Q$ be respectively the greatest weak forward bisimulation equivalences on ${\cal A}/E_A^\varphi $ and ${\cal B}/E_B^\varphi $.~Let $\xi :({\cal A}/E_A^\varphi )/P\to ({\cal B}/E_B^\varphi )/Q$ be a function defined by $\xi (P_\alpha)=Q_{\widetilde\varphi(\alpha)}$, for each $\alpha \in A/E_A^\varphi $.~It is easy to verify that $\xi $ is a well-defined bijective function, and by (\ref{eq:tau.factor}), (\ref{eq:EphiF}) and the fact that $\widetilde\varphi $ is a weak forward isomorphism we obtain that $\xi $ is a weak forward isomorphism.

By Corollary \ref{cor:F:E.g.weak} it follows that $P=E/E_A^\varphi $ and $Q=F/E_B^\varphi $, and according to Theorem \ref{th:F:E}, ${\cal A}/E$ is~isomorphic to $({\cal A}/E_A^\varphi)/P$ and ${\cal B}/F$ is isomorphic to $({\cal B}/E_B^\varphi)/Q$.~As we have already proved that $\xi $  is a weak forward isomorphism between $({\cal A}/E_A^\varphi)/P$ and $({\cal B}/E_B^\varphi)/Q$, we conclude that there is a weak forward isomorphism between ${\cal A}/E$ and ${\cal B}/F$.

The converse follows immediately by Theorem \ref{th:uwfb.ex}.
\end{proof}

\begin{corollary}\label{cor:class.min.weak}
Let $\cal A$ be an automaton, let $E$ be the greatest weak forward bisimulation
equivalence on $\cal A$, and let $\Bbb{WFB}(A)$ be the class of all automata which are WFB-equivalent~to~$\cal A$.

Then ${\cal A}/E$ is a minimal automaton in $\Bbb{WFB}(A)$.~Moreover, if $\cal B$ is any minimal automaton in $\Bbb{WFB}(A)$, then there exists a weak forward isomorphism between ${\cal A}/E$ and $\cal B$.
\end{corollary}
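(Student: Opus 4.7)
The plan is to combine Theorem~\ref{th:UWFBeq} with Lemma~\ref{le:nat.wfb} and a simple counting argument. First I would observe that ${\cal A}/E \in \Bbb{WFB}(A)$. Indeed, since $E$ is a weak forward bisimulation equivalence on $\cal A$, Lemma~\ref{le:nat.wfb} guarantees that the natural function $\varphi_E : A \to A/E$, viewed as a relation $\{(a,E_a) : a \in A\} \subseteq A \times (A/E)$, is a weak forward bisimulation between $\cal A$ and ${\cal A}/E$; it is complete and surjective because $\varphi_E$ is a surjective total function. Hence ${\cal A} \sim_{WFB} {\cal A}/E$, so ${\cal A}/E$ belongs to $\Bbb{WFB}(A)$.

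Next, to establish minimality of ${\cal A}/E$, I would take an arbitrary ${\cal B} = (B,\delta^B,\sigma^B,\tau^B) \in \Bbb{WFB}(A)$ and let $F$ denote the greatest weak forward bisimulation equivalence on $\cal B$. By Theorem~\ref{th:UWFBeq} applied to ${\cal A}$ and ${\cal B}$, there is a weak forward isomorphism between ${\cal A}/E$ and ${\cal B}/F$, and in particular $|A/E| = |B/F| \le |B|$, which is precisely the minimality assertion.

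For the ``moreover'' part, suppose $\cal B$ is a minimal member of $\Bbb{WFB}(A)$, and let $F$ be as above. Since ${\cal A}/E \in \Bbb{WFB}(A)$ and $\cal B$ is minimal, $|B| \le |A/E|$; combining this with the previous step forces $|B| = |B/F|$, so $F$ must be the equality relation on $B$. Then the canonical map $b \mapsto F_b = \{b\}$ is a bijection $B \to B/F$, and it is a weak forward isomorphism: condition~(\ref{eq:wfi.sigma}) holds because $\sigma^B \circ F = \sigma^B$ via~(\ref{eq:sigmaAE}), while condition~(\ref{eq:wfi.tau}) is exactly Lemma~\ref{le:tau.factor}. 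Composing this weak forward isomorphism ${\cal B}\to{\cal B}/F$ with the weak forward isomorphism ${\cal B}/F \to {\cal A}/E$ supplied by Theorem~\ref{th:UWFBeq}, and using that inverses and compositions of weak forward isomorphisms are again weak forward isomorphisms (the first fact being noted immediately after the definition of weak forward isomorphism), yields the desired weak forward isomorphism between $\cal B$ and ${\cal A}/E$.

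The argument is essentially bookkeeping on top of Theorem~\ref{th:UWFBeq}; no step is expected to pose difficulty. The most delicate point is the verification that, when $F$ is the equality relation, the canonical bijection $b \mapsto \{b\}$ between $B$ and $B/F$ is genuinely a weak forward isomorphism, but this is immediate once one unpacks~(\ref{eq:sigmaAE}), (\ref{eq:tauAE}) and Lemma~\ref{le:tau.factor}.
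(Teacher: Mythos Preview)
Your proposal is correct and follows essentially the same approach as the paper: both arguments hinge on Lemma~\ref{le:nat.wfb} (to place ${\cal A}/E$ in $\Bbb{WFB}(A)$), Theorem~\ref{th:UWFBeq} (to obtain the weak forward isomorphism ${\cal A}/E\to{\cal B}/F$), and the observation that minimality of $\cal B$ forces $F$ to be the equality relation. The only difference is organizational: the paper starts from a minimal $\cal B$, derives the weak forward isomorphism ${\cal A}/E\to{\cal B}$, and infers minimality of ${\cal A}/E$ as a consequence, whereas you first establish minimality directly via the cardinality bound $|A/E|=|B/F|\le|B|$ and then handle the ``moreover'' clause; also, where you carefully verify that $b\mapsto\{b\}$ is a weak forward isomorphism, the paper simply notes that ${\cal B}/F\cong{\cal B}$ is an actual isomorphism (hence a fortiori a weak forward one).
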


\begin{proof}
Let $\cal B$ be an arbitrary minimal automaton in $\Bbb{WFB}(A)$, and let $F$ be the greatest weak forward bisimulation equivalence on $\cal B$.~According to Theorem \ref{th:UWFBeq}, there exists a weak forward isomorphism between ${\cal A}/E$ and ${\cal B}/F$, and by Lemma \ref{le:nat.wfb} and (\ref{eq:wfbe.eq}) it follows that ${\cal B}/F\in \Bbb{WFB}(A)$.~Now, by minimality of $\cal B$ we obtain that $F$ is the equality relation on $B$, what means that ${\cal B}/F\cong {\cal B}$.~Therefore, there is a weak forward isomorphism between ${\cal A}/F$ and $\cal B$, and consequently, ${\cal A}/F$ is also a minimal automaton in $\Bbb{WFB}(A)$.
\end{proof}

The next example shows that there are automata which are WFB-equivalent, but they are not FB-equi\-valent, and also, that there are automata which are language-equivalent, but they are not WFB-equivalent.

\begin{example}\rm
Let ${\cal A}=(A,\delta^A,\sigma^A,\tau^A)$ and ${\cal B}=(B,\delta^B,\sigma^B,\tau^B)$ be automata with $|A|=4$, $|B|=2$ and $X=\{x\}$, whose transition relations and sets of initial and terminal states are given by the following Boolean matrices and vectors:
\[
\sigma^A=\begin{bmatrix}
0 & 1 & 0 & 0
\end{bmatrix},\ \ \
\delta_x^A=\begin{bmatrix}
1 & 0 & 0 & 0 \\
0 & 0 & 0 & 1 \\
0 & 0 & 0 & 0 \\
0 & 0 & 0 & 0
\end{bmatrix},\ \ \
\tau^A=\begin{bmatrix}
0 \\
0 \\
1 \\
0
\end{bmatrix}, \qquad\quad
\sigma^B=\begin{bmatrix}
1 & 0
\end{bmatrix},\ \ \
\delta_x^B=\begin{bmatrix}
1 & 0 \\
1 & 0
\end{bmatrix},\ \ \
\tau^B=\begin{bmatrix}
0 \\
1
\end{bmatrix}.
\]
Computing the relation $\mu \subseteq A\times B$ using formula (\ref{eq:gwfb}) we obtain that
\[
\mu =\begin{bmatrix}
1 & 0 \\
1 & 0 \\
0 & 1 \\
1 & 0
\end{bmatrix},
\]
and we can easily check that $\mu $ satisfies both (\ref{eq:wfs.sigma}) and (\ref{eq:wfsi.sigma}), and according to Theorem \ref{th:gwfb}, $\mu $ is the greatest weak forward bisimulation between automata $\cal A$ and $\cal B$.

On the other hand, using the procedure from Theorem \ref{th:gfb.alg} we get the relation
\[
\varphi =\begin{bmatrix}
1 & 0 \\
0 & 0 \\
0 & 0 \\
0 & 0
\end{bmatrix}
\]
which does not satisfy (\ref{eq:fs.nda.s}) and (\ref{eq:fb.nda.si}), and according to Theorem \ref{th:gfb.alg}, there is no any forward bisimulation~bet\-ween $\cal A$ and $\cal B$.~Since $\mu $ is complete and surjective (i.e., it is a uniform relation), we have that $\cal A$ and $\cal B$ are WFB-equivalent, but they are not FB-equivalent.

If we change $\sigma^A$ and $\sigma^B$ to
\[
\sigma^A =\begin{bmatrix}
0 & 0 & 1 & 0 \end{bmatrix}, \qquad
\sigma^B =\begin{bmatrix}
1 & 1 \end{bmatrix},
\]
then we obtain that $\mu $ does not satisfy (\ref{eq:wfsi.sigma}), and in this case there is no any weak forward bisimulation~bet\-ween $\cal A$ and $\cal B$, i.e., $\cal A$ and $\cal B$ are not WFB-equivalent.~However, $\cal A$ and $\cal B$ are still language-equivalent, i.e., we have that $L({\cal A})=L({\cal B})\,(=\{\varepsilon\})$.
\end{example}

\section{Concluding remarks}

In this article we have formed a conjunction of bisimulations and uniform relations as a very~power\-ful tool in the study of equivalence between nondeterministic automata.~In this symbiosis, uniform relations serve as equivalences which relate elements of two possibly different sets, while bisimulations provide compatibility with the transitions, initial and terminal states of automata.We have defined six types of bisimulations, but due to the duality we
have discussed three of them: forward, backward-forward,~and weak forward bisimulations. For each od these three types of bisimulations we have provided a procedure which decides whether there is a bisimulation of this type between two automata, and when it exists, the same procedure computes the greatest one.~We have proved that a uniform relation between automata $\cal A$ and $\cal B$ is a forward bisimulation if and only if its kernel and co-kernel are forward bisimulation equivalences on $\cal A$ and $\cal B$ and there is a special isomorphism between factor automata with~respect to these equivalences. As a consequence we get that automata $\cal A$ and $\cal B$ are FB-equivalent, i.e., there is a uniform forward~bisimu\-lation between them, if and only if there is an isomorphism between the factor automata of $\cal A$ and $\cal B$~with respect to their greatest forward bisimulation equivalences.~This result reduces the problem of testing~FB-equi\-valence to the problem of testing isomorphism of automata, which is equivalent to the well-known~graph isomorphism problem.~We have shown that some similar results are also valid for backward-forward~bisim\-ulations, but there are many significant differences.~Analogous results have been also obtained for weak~for\-ward bisimulations, for which we have shown that they are more general than forward bisimulations, and consequently, the WFB-equivalence of automata is closer to the language-equivalence than the FB-equi\-valence.

Similar methodology was used in \cite{CIDB.11} in the study of bisimulations between fuzzy automata.~In~further research, the methodology developed for nondeterministic and fuzzy automata will be applied to weighted automata over suitable types of semirings, as well as in discussing certain issues of social network analysis.


\begin{thebibliography}{33}





\parskip=0pt


\bibitem{AILS.07}
L. Aceto, A. Ingolfsdottir, K. G. Larsen, J. Srba, Reactive Systems:
Modelling, Specification and Verification, Cambridge University Press, Cambridge,
2007.

\bibitem{BLS.05}
M. P. B\'eal, S. Lombardy, J. Sakarovitch, On the equivalence of $\mathbb Z$-automata, In:
L. Caires et al. (eds.), ICALP 2005, Springer, Heidelberg, Lecture Notes in Computer Science
3580 (2005) 397--409.

\bibitem{BLS.06}
M. P. B\'eal, S. Lombardy, J. Sakarovitch, Conjugacy and equivalence of weighted
automata and functional transducers. In: D. Grigoriev, J. Harrison, and E. A. Hirsch (eds.), CSR 2006,
Springer, Heidelberg, Lecture Notes in Computer Science 3967 (2006) 58--69.

\bibitem{BP.03}
M. P. B\'eal, D. Perrin, On the generating sequences of regular languages on $k$ symbols, Journal of the ACM
50 (2003) 955--980.

\bibitem{Birkhoff}
G. Birkhoff, Lattice Theory, 3rd edition, American Mathematical Society,
Providence, RI, 1973.

\bibitem{BE.93}
S. L. Bloom, Z. \'Esik, Iteration Theories: The Equational Logic of Iterative Processes, EATCS Monographs on Theoretical Computer Science, Springer, Berlin-Heilderberg, 1993.

\bibitem{BE.99}
J. P. Boyd, M. G. Everett, Relations, residuals, regular interiors, and relative
regular equivalence, Social Networks 21 (1999) 147--165.

\bibitem{Brihaye.07}
T. Brihaye, Words and bisimulations of dynamical systems, Discrete Mathematics and Theoretical
Computer Science 9 (2) (2007) 11--32.

\bibitem{Buch.08}
P. Buchholz, Bisimulation relations for weighted automata, Theoretical Computer Science 393 (2008) 109--123.


\bibitem{BS.81}
S. Burris, H. P. Sankappanavar, A Course in Universal Algebra, Springer-Verlag, New York, 1981.


\bibitem{CCK.00}
C. S. Calude, E. Calude, B. Khoussainov, Finite nondeterministic automata:
Simulation and minima\-lity, Theoretical Computer Science 242 (2000) 219--235.


\bibitem{CSY.05}
C. C\^ampeanu, N. S\^antean, S. Yu, Mergible states in large NFA, Theoretical Computer Science
330 (2005) 23--34.

\bibitem{CL.08}
C. G. Cassandras, S. Lafortune, Introduction to Discrete Event Systems, Springer,
2008.

\bibitem{CC.04}
J.-M. Champarnaud, F. Coulon, NFA reduction algorithms by means of regular
inequalities, Theoretical Computer Science 327 (2004) 241--253.

\bibitem{CDIV.10}
M. \'Ciri\'c, M. Droste, J. Ignjatovi\'c, H. Vogler, Determinization of weighted finite automata over strong bimonoids, Information Sciences 180 (2010) 3497--3520.

\bibitem{CIB.09}
M. \'Ciri\'c, J. Ignjatovi\'c, S.  Bogdanovi\'c, Uniform fuzzy relations and fuzzy functions, Fuzzy Sets and Systems 160 (2009) 1054--1081.

\bibitem{CIDB.11}
M. \'Ciri\'c, J. Ignjatovi\'c, N. Damljanovi\'c, and M. Ba\v si\'c, Bisimulations for fuzzy automata, Fuzzy Sets and Systems (to appear).


\bibitem{CSIP.07}
M. \'Ciri\'c, A. Stamenkovi\'c, J. Ignjatovi\'c, T. Petkovi\'c, Factorization of
fuzzy automata, In: Csuhaj-Varju, E., \'Esik, Z. (eds.), FCT 2007, Springer, Heidelberg,
Lecture Notes in Computer Science 4639 (2007) 213--225.

\bibitem{CSIP.10}
M. \'Ciri\'c, A. Stamenkovi\'c, J. Ignjatovi\'c, T. Petkovi\'c, Fuzzy relation equations and reduction of fuzzy automata, Journal of Computer and System Sciences 76 (2010) 609--633.


\bibitem{DPP.04}
A. Dovier, C. Piazza, A. Policriti, An efficient algorithm for computing
bisimulation equivalence, Theoretical Computer Science 311 (2004) 221--256.

\bibitem{EK.01}
Z. \'Esik, W. Kuich, A generalization of Kozen's axiomatization of the equational theory of the regular sets, Words, semigroups, and transductions, World Scientific, River Edge, NJ, 2001, pp. 99--114.

\bibitem{EM.10}
Z. \'Esik, A. Maletti, Simulation vs. Equivalence, CoRR abs/1004.2426 (2010).


\bibitem{GJ.79}
M. R. Garey, D. S. Johnson, Computers and Intractability: A Guide to the Theory of NP-Completeness, Freeman, San Francisco, 1979.

\bibitem{GPP.03}
R. Gentilini, C. Piazza, A. Policriti, From bisimulation to simulation: Coarsest
partition problems, Journal of Automated Reasoning 31 (2003) 73--103.

\bibitem{HKPV.98}
T. A. Henzinger, P. W. Kopke, A. Puri, P. Varaiya, What's decidable about hybrid automata?
Journal of Computer and System Sciences 57 (1998) 94--124.


\bibitem{HMM.07}
J. H\"ogberg, A. Maletti, J. May, Backward and forward bisimulation minimisation of tree automata,
in: J. Holub, J. \v Zd\'arek (eds.), IAA07, Springer, Heidelberg, Lecture Notes in Computer Science
4783 (2007) 109--121.


\bibitem{HMM.09}
J. H\"ogberg, A. Maletti, J. May, Backward and forward bisimulation minimisation of tree automata,
Theoretical Computer Science 410 (2009) 3539--3552.

\bibitem{ICB.08}
J. Ignjatovi\'c, M. \'Ciri\'c,  S. Bogdanovi\'c, Determinization of fuzzy automata with membership
values in complete residuated lattices, Information Sciences 178 (2008) 164--180.

\bibitem{ICB.09}
J. Ignjatovi\'c, M. \'Ciri\'c, S. Bogdanovi\'c, Fuzzy homomorphisms of algebras, Fuzzy Sets and
Systems 160 (2009), 2345--2365.

\bibitem{ICBP.10}
J. Ignjatovi\'c, M. \'Ciri\'c, S.  Bogdanovi\'c, T. Petkovi\'c, Myhill-Nerode type theory for fuzzy
languages and automata, Fuzzy Sets and Systems 161 (2010) 1288-1324.

\bibitem{IY.02}
L. Ilie, S. Yu, Algorithms for computing small NFAs, in: K. Diks et al. (eds):
MFCS 2002, Lecture Notes in Computer Science 2420 (2002) 328--340.

\bibitem{IY.03}
L. Ilie, S. Yu, Reducing NFAs by invariant equivalences, Theoretical Computer
Science 306 (2003) 373--390.



\bibitem{INY.04}
L. Ilie, G. Navarro, S. Yu, On NFA reductions, in: J. Karhum\"aki et al. (eds):
Theory is Forever, Lecture Notes in Computer Science 3113 (2004) 112--124.

\bibitem{ISY.05}
L. Ilie, R. Solis-Oba, S. Yu, Reducing the size of NFAs by using equivalences and preorders, in: A. Apostolico, M. Crochemore, and K. Park (Eds):
CPM 2005, Lecture Notes in Computer Science 3537 (2005) 310--321.

\bibitem{JIC.11}
Z. Jan\v ci\'c, J. Ignjatovi\'c, M. \'Ciri\'c, An improved algorithm for determinization of weighted and fuzzy automata, Information Sciences 181 (2011) 1358--1368.



\bibitem{KS.90}
P. C. Kannellakis, S. A. Smolka, CCS expressions, finite state processes, and three problems of
equivalence, Information and Computation 86 (1990) 43--68.

\bibitem{Klawonn.00}
F. Klawonn, Fuzzy points, fuzzy relations and fuzzy functions,
in: V. Nov\^{a}k and I. Perfilieva (Eds.), Discovering World with Fuzzy Logic,
Physica-Verlag, Heidelberg, 2000, pp. 431--453.

\bibitem{Kozen.97}
D. C. Kozen, Automata and Computability, Springer, 1997.

\bibitem{LS.05}
S. Lombardy, J. Sakarovitch, Derivatives of rational expressions with multiplicity, Theoretical Computer Science
332 (2005) 141--177.

\bibitem{LV.95}
N. Lynch, F. Vaandrager, Forward and backward simulations: Part I. Untimed systems,
Information and Computation 121 (1995), 214--233.


\bibitem{Milner.80}
R. Milner, A calculus of communicating systems, Lecture Notes in Computer Science,
vol. 92, Springer, Berlin, 1980.

\bibitem{Milner.89}
R. Milner, Communication and Concurrency, Prentice-Hall International, 1989.

\bibitem{Milner.99}
R. Milner, Communicating and Mobile Systems: the $\pi $-Calculus, Cambridge University
Press, Cambridge, 1999.

\bibitem{PT.87}
R. Paige, R. E. Tarjan, Three partition refinement algorithms, SIAM Journal on Computing 16 (6) (1987) 973--989.

\bibitem{Park.81}
D. Park, Concurrency and automata on infinite sequences, in: P. Deussen
(ed.), Proc. 5th GI Conf., Karlsruhe, Germany, Lecture Notes in
Computer Science 104 (1981), Springer-Verlag, pp. 167--183.

\bibitem{RT.08}
F. Ranzato, F. Tapparo, Generalizing the Paige-Tarjan algorithm by abstract interpretation,
Information and Computation 206 (2008) 620--651.


\bibitem{RM-C.00}
M. Roggenbach, M. Majster-Cederbaum, Towards a unified view of bisimulation: a comparative
study, Theoretical Computer Science 238 (2000) 81-130.

\bibitem{Saha.07}
D. Saha, An incremental bisimulation algorithm, In: V. Arvind, S. Prasad (eds.), FSTTCS 2007,
Springer, Heidelberg, Lecture Notes in Computer Science 4855 (2007),  204--–215.

\bibitem{Sak.09}
J. Sakarovitch, Elements of Automata Theory, Cambridge University Press, 2009.


\bibitem{Sangiorgi.09}
D. Sangiorgi, On the origins of bisimulation and coinduction, ACM Transactions on Programming
Languages and Systems 31 (4) (2009), 111--151.

\bibitem{Skiena.08}
S. S. Skiena, The Algorithm Design Manual, Springer, London, 2008.


\bibitem{SCI.11}
A. Stamenkovi\'c, M. \'Ciri\'c, J. Ignjatovi\'c, Reduction of fuzzy automata by means of fuzzy
quasi-orders, Information Sciences (to appear).

\bibitem{Yu.97}
S. Yu, Regular languages, in: G. Rozenberg, A. Salomaa (Eds.), Handbook of Formal Languages, vol. 1, Springer-Verlag, Berlin, Heidelberg, 1997, pp. 41--110.

\end{thebibliography}
\end{document}